\newcommand{\abs}[1]{\lvert #1 \rvert}
\DeclareMathOperator{\signature}{signature}
\newcommand{\cI}{\mathcal{I}}
\newcommand{\nfrac}{\nicefrac}
\newcommand{\supp}{\mathrm{supp}}
\newcommand{\poly}{\mathrm{poly}}
\title{\bf On the Number of Circuits in Regular Matroids \\ (with Connections to Lattices and Codes)}
\author[1]{ Rohit Gurjar}
\author[2]{Nisheeth K. Vishnoi}
\affil[1]{\small Indian Institute of Technology Bombay, India}
\affil[2]{\small \'{E}cole Polytechnique F\'{e}d\'{e}rale de Lausanne (EPFL), Switzerland}
\date{}
\renewcommand{\S}{\mathcal{S}}
\newcommand{\T}{\mathcal{T}}
\newcommand{\C}{\mathcal{C}}
\newcommand{\oM}{\overline{M}}
\newcommand{\I}{\mathcal{I}}
\newcommand{\comment}[1]{}
\newcommand{\twosum}{\oplus_2}
\newcommand{\threesum}{\oplus_3}
\newcommand{\R}{\mathbb{R}}
\newcommand{\N}{\mathbb{N}}
\newcommand{\Z}{\mathbb{Z}}
\newcommand{\F}{\mathbb{F}}
\newcommand{\norm}[1]{\left\lVert#1\right\rVert}
\newcommand{\code}{C}
\DeclareMathOperator{\BT}{BT}
\DeclareMathOperator{\UT}{UT}
\DeclareMathOperator{\nullspace}{nullspace}
\newtheorem{theorem}{Theorem}[section]
\newtheorem{lemma}[theorem]{Lemma}
\newtheorem{definition}[theorem]{Definition}
\newtheorem{corollary}[theorem]{Corollary}
\newtheorem{claim}[theorem]{Claim}
\newtheorem{observation}[theorem]{Observation}
\newtheorem{proposition}[theorem]{Proposition}
\newtheorem{fact}[theorem]{Fact}
\newtheorem*{theorem*}{Theorem}
\newtheorem*{lemma*}{Lemma}
\crefname{theorem}{Theorem}{Theorems}
\crefname{observation}{Observation}{Observations}
\crefname{proposition}{Proposition}{Propositions}
\crefname{claim}{Claim}{Claims}
\crefname{condition}{Condition}{Conditions}
\crefname{example}{Example}{Examples}
\crefname{fact}{Fact}{Facts}
\crefname{lemma}{Lemma}{Lemmas}
\crefname{corollary}{Corollary}{Corollaries}
\crefname{definition}{Definition}{Definitions}
\crefname{remark}{Remark}{Remarks}
\begin{document}
\maketitle

\begin{abstract}
We show that for any regular matroid on $m$ elements and any $\alpha \geq 1$, the number of  $\alpha$-minimum circuits, or circuits whose size is at most an $\alpha$-multiple of the minimum size of a circuit in the matroid
is  bounded by $m^{O(\alpha^2)}.$
This generalizes a result of Karger for the number of $\alpha$-minimum cuts in a graph.
As a  consequence, we obtain similar bounds on the number of $\alpha$-shortest vectors in ``totally unimodular'' lattices and on the number of $\alpha$-minimum weight codewords in ``regular'' codes.
\end{abstract}

\newpage
   \tableofcontents

\newpage

\section{Introduction}
We study a general  question about the number of  certain structures, with respect to their sizes,
arising in three different settings:  matroids, codes, and lattices. 
More precisely, we are interested in the growth of  the number of circuits in a matroid, the number codewords in a code,
 and the number of vectors in an integral lattice with respect to their size, weight, and length, respectively. 
These questions have been extensively studied in various forms in areas such as combinatorial optimization, information and coding theory, and discrete geometry
(e.g., see \cite{Kar96,KalaiLinial95,LM18,ConSlo87}). 

In all of these cases, a trivial (and rough) upper bound on the number of these objects of size $k$ is $m^{\poly(k)}$,
 where $m$ is the underlying ground set size in the case of matroids, length for codes, and dimension in the case of  lattices. 
 There are also elementary constructions of matroids/codes/lattices where these upper bounds are tight
 (e.g., graphic matroid of the complete graph, the trivial code with distance $1$, the lattice $\mathbb{Z}^m$). 
However, consider the setting where the shortest size of such an object is $r$. 
 The  question of interest is, when the size $k$ is close to the shortest size $r$, 
 whether the number of objects still grows like $m^{\poly(k)}$.
Since circuits are well-studied objects in matroids,  this is a natural question in the context of matroids.
In coding theory, the motivation to study this question comes from ``list decodability'' (see \cite{Sudan00}),
and in lattices, this relates to the widely studied question of the ``kissing number'' of a lattice packing (see \cite{ConSlo87}). 
As we subsequently explain, these three questions turn out to be  intimately connected in certain cases.

\paragraph{Circuits in matroids.}
A circuit of a matroid is a minimal dependent set of its ground elements. 
A motivation for the above question is a seminal result of Karger~\cite{Kar93}
who showed that for a cographic matroid, 
 the number of near-minimum circuits -- circuits whose sizes are at most a constant multiple of the minimum size of a circuit--
is bounded by $\poly(m)$, that is, independent of the minimum size.
The circuits of a cographic matroid are the simple cut-sets of the associated graph, and 
 Karger's result was actually presented in terms of the number of near-minimum cuts in a graph.

An analogous result is also known in the ``dual'' setting of graphic matroids. 
The circuits of a graphic matroid are simple cycles in a graph. 
 Subramanian~\cite{Sub95}  (building on \cite{TK92}) showed a $\poly(m)$ bound on the number of near-minimum cycles. 
 Quantitatively, the results of Karger and Subramanian  show that in a graphic/cographic matroid, 
 if the shortest circuit has size $r$, then the number of circuits of size at most $\alpha r$, or $\alpha$-minimum circuits, 
 is bounded by $(2m)^{2 \alpha}$.
Subramanian raised the question of identifying other  matroids that have only polynomially many near-minimum circuits.
 
Do all matroids have such a property?
The answer is no:  the uniform matroid can have exponentially many shortest circuits.\footnote{Consider  the uniform matroid $U_{r,m}$, a matroid of ground set size $m$ where every subset of size at most $r$ is independent.
A circuit of $U_{r,m}$ is any subset of size $r+1$. Thus, the number of shortest circuits is $m \choose {r+1}$, i.e., exponential in $r$.}
Since a uniform matroid is representable (by a family of vectors over some field),
one can also rule out the possibility  of an affirmative answer for all representable matroids.

The next natural candidate to consider would be the class of binary matroids -- matroids representable over $GF(2)$ --
that also contains graphic and cographic matroids.
Circuits of  binary matroids are closely connected to codewords in binary  linear codes
 and have received considerable attention from this perspective.

\paragraph{Binary linear codes.}
Let $A$ be a matrix over $GF(2)$ representing a binary matroid $M$ on the ground set $[m]$,
i.e., $A$ has $m$ columns, and a set is independent in $M$ if and only if the corresponding set of columns in $A$ is linearly independent.
Consider the linear code $C$ whose parity check matrix is $A$. 
The codewords of $C$ are the vectors in $\nullspace(A)$ and thus, are precisely the disjoint unions of circuits of $M$.
More importantly, the minimum weight of a codeword in $C$ and the minimum size of a circuit in $M$ are same. 
And thus, any $\alpha$-minimum weight codeword of $C$ comes from  a union of
$\alpha$-minimum circuits of $M$.
Thus, the question arises: do all binary linear codes have a small number of minimum or near-minimum weight codewords?

This question derives interest from the perspective of list decoding. 
Alon~\cite{Alo97} gave a construction of a binary linear code where there are $2^{\Omega(\sqrt{m})}$ codewords of minimum weight. 
 Kalai and Linial~\cite{KalaiLinial95} studied distance distributions of codes and conjectured that the above number should be
 $2^{o(m)}$ for all binary linear codes.
However, Ashikhmin, Barg and Vl\u{a}du\c{t}~\cite{ABV01} disproved this conjecture by giving an explicit binary code with
 $2^{\Omega(m)}$ minimum weight codewords.
The question is actually much easier to answer when we consider near-minimum weight codewords. 
 Most binary linear codes  have exponentially many near-minimum weight codewords%
 \footnote{For a binary code with a random parity check matrix of dimensions $\lambda m \times m$,
  all its codewords have weights in the range  $[m/3, 2m/3]$ with a high probability, when $\lambda > h(1/3) = - (1/3) \log(1/3) - (2/3) \log(2/3)) \approx 0.918$
  (see~\cite{Bar05}).}.

 In short, we cannot get the desired polynomial bound for all binary matroids or binary linear codes. 
 Can we identify a subclass of binary matroids where this is true?
Let us first briefly see the history of the analogous question in lattices.
 
 \paragraph{Lattices.}
 The question of the number of shortest vectors in a lattice has attracted a lot of attention in mathematics. 
This number is also referred to	 as the ``kissing number'' of a lattice packing of spheres. 
Consider, for example, the  lattice $\mathbb{Z}^m$.
The number of shortest vectors in this lattice is simply $2m$.
Moreover, the number of near-shortest vectors -- whose length is at most a  constant multiple of the shortest length -- 
is bounded by $\poly(m)$.

Such a bound does not hold for general lattices. 
It is widely conjectured that  there exists a lattice packing with an exponentially large kissing number.
However, the best known lower bound on the kissing number of an $m$-dimensional lattice is only  $2^{\Omega(\log^2 m)}$ (\cite{Lee64}, also see \cite{CS87}).
On the other hand, if we consider the number of near-shortest vectors in lattices, much higher bounds are known.
Ajtai~\cite{Ajt98} showed that for some constants $\epsilon,\delta >0$ and infinitely many integers $m$,
there exists an $m$-dimensional lattice that has at least $2^{m^\epsilon}$ vectors of length  at most $(1+2^{-m^\delta})$ times
the length of the shortest vector.
A polynomial bound on the number of near-shortest vectors could still hold for some special class of lattices.
It is an interesting question to characterize such lattices. 

\subsection{Our results}
We make progress on the above questions about matroids, lattices, and codes by showing that, for a large class of each,  the number of $\alpha$-minimum circuits, vectors, or codewords grow as $m^{\mathrm{poly}(\alpha)}.$
Our main result concerns matroids and the others are derived via connections between matroids and lattices and matroids and codes.

\paragraph{Near-minimum circuits in matroids.}
We answer the above question about $\alpha$-minimum circuits in affirmative
for an extensively studied subclass of binary matroids -- called regular matroids. 
These are matroids that can be represented by a family of vectors over every field.

\begin{theorem}[\textbf{Number of near-minimum circuits in a regular matroid}]
\label{thm:num-circuits}
Let $M$ be a regular matroid with ground set size $m$.
Suppose that $M$ has no circuits of size at most $r$. 
Then for any $\alpha \geq 1$, the number of circuits in $M$ of size at most $\alpha r$
is bounded by $m^{O(\alpha^2)}$.
\end{theorem}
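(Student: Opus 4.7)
My plan is to proceed by structural induction using Seymour's decomposition theorem for regular matroids, which asserts that every regular matroid $M$ can be built as an iterated $1$-sum, $2$-sum, and $3$-sum of building blocks that are either graphic, cographic, or isomorphic to the sporadic $10$-element matroid $R_{10}$. This decomposition can be represented by a tree of size $O(m)$ whose leaves are the building blocks. For the base cases I would invoke Subramanian's theorem~\cite{Sub95} for graphic matroids and Karger's theorem~\cite{Kar93} for cographic matroids, both of which already give a bound of the form $m^{O(\alpha)}$ on the number of $\alpha$-minimum circuits; the case $R_{10}$ is trivial because it is a fixed matroid on $10$ elements and therefore has only a constant number of circuits.

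The inductive step is the heart of the argument. Suppose $M = M_1 \oplus_k M_2$ for some $k \in \{1,2,3\}$, sharing a boundary set $Z$ of size $k-1$. A standard fact about $k$-sums is that every circuit $C$ of $M$ arises in one of two ways: either $C$ is a circuit of $M_i \setminus Z$ entirely on one side, or $C$ is a symmetric difference $(C_1 \triangle C_2) \setminus Z$ of a circuit $C_1$ of $M_1$ with a circuit $C_2$ of $M_2$ whose intersections with $Z$ agree (and are nonempty). Since $|Z| \le 2$, there are only $O(1)$ compatibility patterns on $Z$ to enumerate, and for each pattern the count of matching pairs factorizes across $M_1$ and $M_2$.

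The main obstacle is that the global lower bound ``no circuit of $M$ has size $\le r$'' does not descend cleanly to the pieces: one side $M_i$ may well contain very short circuits whose $Z$-parts get cancelled during recombination. To handle this I would strengthen the theorem to a weighted statement, in which elements in the boundary of a decomposition are reweighted so that shortness on each side controls shortness of the combined circuit; concretely, assigning weight $0$ to $Z$ and weight $1$ elsewhere, the weighted size of $C$ equals the sum of weighted sizes of the contributing circuits on each side. Then an $\alpha$-minimum circuit of $M$ splits as pieces of weighted sizes $\alpha_1 r$ and $\alpha_2 r$ with $\alpha_1 + \alpha_2 \le \alpha$, each piece being counted by the inductive bound on its side.

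Multiplying bounds across the sum and summing over $O(\alpha)$ possible splits, plus over $O(1)$ compatibility patterns, gives a recursion of the form $f(m,\alpha) \le \sum_{\alpha_1 + \alpha_2 \le \alpha} f(m_1, \alpha_1) \cdot f(m_2, \alpha_2) \cdot \poly(m)$, and iterating this through the decomposition tree yields the target $m^{O(\alpha^2)}$ bound. The quadratic exponent of $\alpha$ (as opposed to the linear exponent in the graphic and cographic cases) is essentially forced by this convolution structure together with the cumulative overhead of recombination at each internal node of the decomposition. The most delicate technical point I anticipate is verifying that the $k$-sum circuit description and the weighted notion interact correctly for $3$-sums, where the overlap has rank two and the parity constraints on $Z$ have to be tracked carefully.
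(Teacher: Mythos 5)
Your proposal is essentially the fixed‑tree inductive strategy that the paper identifies as the approach of \cite{GTV18}, and that the paper explicitly argues cannot work beyond $\alpha < 3/2$. The critical gap is in your proposed fix for the descent of the lower bound: you assign weight $0$ to the boundary set $Z$ and weight $1$ elsewhere, and claim the weighted shortest‑circuit bound ``$> r$'' then passes to each side $M_i$. It does not. The only information that descends to $M_i$ is that circuits of $M_i$ \emph{avoiding} $Z$ have size $> r$ (because such circuits are also circuits of $M$). Circuits of $M_i$ that \emph{use} an element of $Z$ may be arbitrarily small, and with your weighting they have weight as small as $1$. So the inductive hypothesis ``no circuit of $M_i$ has weight $\le r$'' is simply false, and the recursion $f(m,\alpha) \le \sum_{\alpha_1+\alpha_2\le\alpha} f(m_1,\alpha_1)f(m_2,\alpha_2)\,\poly(m)$ has nothing to apply to. One could instead try to carry the weaker hypothesis ``circuits avoiding a prescribed set $R$ have weight $> r$'' through the recursion, but then $R$ grows at every internal node and you can no longer control it; this is precisely why \cite{GTV18} stops at $\alpha < 3/2$.

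The paper's proof avoids a fixed recursion entirely. It uses an \emph{unordered} decomposition tree (a refinement of Seymour's theorem due to \cite{DK13}) so that, for each circuit $C$ separately, one can choose a decomposition adapted to $C$: a set of $O(\alpha)$ center nodes is selected so that every residual subtree carries a projection of $C$ of size at most $r/2$. Small projections are then shown to be \emph{unique} (this uses the global assumption on $M$, not a per‑component assumption), the handful of large boundary elements are handled by restricting to circuits in the center component that avoid them, and the remaining small subtrees are absorbed into the center component by assigning each boundary element $e_T$ the weight $|C_T \setminus e_T|$ of its unique small projection — not weight $0$. The problem is then reduced to a weighted, set‑avoiding count of circuits in a single graphic/cographic component (the paper's \cref{lem:graphic-set}), with the number of forbidden elements bounded by $O(\alpha)$ because there are only $O(\alpha)$ centers. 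If you want to repair your argument, you would essentially need to reproduce these two ingredients (circuit‑dependent choice of decomposition and the uniqueness of small projections), at which point you are no longer doing a bottom‑up induction on a fixed tree.

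As a secondary point, even granting a working descent, your convolution recursion does not obviously yield $m^{O(\alpha^2)}$: the pieces $C_1$ may be extremely small (size as little as $2$), so $\alpha_1$ need not be bounded away from $0$, and the decomposition tree can have $\Omega(m)$ internal nodes, so the naive iteration accumulates $\poly(m)$ factors per level. The quadratic exponent in the paper instead comes from the $O(\alpha)$ centers each contributing a $m^{O(\alpha)}$ factor, which is a different accounting.
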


\noindent
Since graphic and cographic matroids are the two simplest cases of regular matroids, 
 our result significantly generalizes the results of Karger~\cite{Kar93} and Subramanian~\cite{Sub95}.
Moreover, our result also holds for a class,  more general than regular matroids, namely max-flow min-cut matroids (see Section~\ref{sec:maxflow}).
This is the most general class of binary matroids, where a natural generalization of  max-flow min-cut theorem continues to hold. 

A recent work~\cite{GTV18} took a step towards answering this question for regular matroids. 
They showed a polynomial upper bound on
  the number  of  circuits whose size is less than $3/2$  times the minimum size of a circuit.
  However, a serious shortcoming of their work is that the proof  breaks down for $\alpha \geq 3/2$ (see Section~\ref{sec:techniques}).

\paragraph{List decodability of codes.}
Since binary matroids have close connections with binary linear codes,
our \cref{thm:num-circuits}   implies a list decodability result  
for certain special binary linear codes, called regular codes.
A regular code --  defined in \cite{Kashyap08} --
is a binary linear code such that the columns of its parity check matrix represent a regular matroid. 
We get that for a regular code with distance $d$ and for any constant $\alpha$, the number of codewords with Hamming weight at most $\alpha d$
is polynomially bounded. 
This means that $\code$ is $(\alpha d, \poly(m))$-list decodable, for any constant $\alpha$.
This is in contrast to general binary linear codes, which are not list-decodable beyond the minimum distance.

\begin{corollary}[\bf List decodability of regular codes]
\label{cor:regular-codes}
For a regular code $\code \subseteq GF(2)^m$ with distance $d$, and any $\alpha \geq 1$, the number of codewords with Hamming weight at most $\alpha d$
is bounded by $m^{O(\alpha^3)}$.
\end{corollary}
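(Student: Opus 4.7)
The plan is to reduce this corollary directly to \cref{thm:num-circuits}, exploiting the correspondence described in the excerpt between a regular code and its underlying regular matroid. Let $A$ be a parity check matrix of $\code$ whose columns are indexed by $[m]$ and represent a regular matroid $M$ on $[m]$. Since $\code = \nullspace(A)$, a binary vector $c \in \code$ corresponds to its support $\supp(c) \subseteq [m]$, and this support is a disjoint union of circuits of $M$. In particular, the Hamming weight of $c$ equals the total size of this union, so the distance $d$ of $\code$ coincides with the minimum circuit size $r$ of $M$, and any codeword of weight at most $\alpha d$ corresponds to a partition $\supp(c) = C_1 \dotcup C_2 \dotcup \cdots \dotcup C_k$ into circuits.

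The next step is the crucial counting observation. Each circuit has size at least $r$, so if $c$ has weight at most $\alpha r$ then $k r \leq \alpha r$, which forces $k \leq \alpha$. Moreover, each $C_i$ individually has size at most $\alpha r$. Hence every codeword of weight at most $\alpha d$ is obtained as an \emph{unordered} collection of at most $\lfloor\alpha\rfloor$ distinct circuits of $M$, each of which lies among the $\alpha$-minimum circuits of $M$.

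I would then plug in \cref{thm:num-circuits} to bound the number $N$ of circuits of $M$ of size at most $\alpha r$ by $N \leq m^{O(\alpha^2)}$. Since the map from codewords to their circuit decompositions is injective (the $C_i$ are disjoint, so recovering $c$ from the unordered collection is trivial), the number of codewords of weight at most $\alpha d$ is at most
\[
\sum_{k=0}^{\lfloor\alpha\rfloor} \binom{N}{k} \;\leq\; (\alpha+1)\, N^{\alpha} \;\leq\; m^{O(\alpha^3)},
\]
which gives the claimed bound. The factor $\alpha$ in the exponent, converting the $\alpha^2$ of \cref{thm:num-circuits} into $\alpha^3$, is exactly the cost of allowing a codeword to be a union of up to $\alpha$ circuits.

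There is no serious obstacle here once \cref{thm:num-circuits} is in hand; the only thing to verify carefully is the standard dictionary between $\nullspace(A)$ and disjoint unions of circuits over $\GF(2)$ (which is immediate because circuits of a binary matroid are exactly the minimal nonzero supports in the kernel, and any nonzero kernel element decomposes as a symmetric difference of circuits that can be refined into a disjoint union by iteratively eliminating overlaps). The quantitative point worth double-checking is that the weight bound $\alpha r$ simultaneously bounds both the \emph{number} and the \emph{individual size} of the participating circuits, so that a single application of \cref{thm:num-circuits} at the parameter $\alpha$ suffices.
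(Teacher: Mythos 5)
Your proof takes essentially the same route as the paper: observe that each codeword of weight at most $\alpha d$ decomposes as a disjoint union of at most $\alpha$ circuits, each of size at most $\alpha d$, then invoke \cref{thm:num-circuits} to bound the number of such circuits by $m^{O(\alpha^2)}$ and multiply out to get $m^{O(\alpha^3)}$. The paper gives this same argument in two sentences immediately after the corollary statement; you have merely spelled out the bookkeeping (the injectivity of the union map and the $\sum_k \binom{N}{k}$ bound) more explicitly.
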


\noindent
To see this, observe that any codeword of weight at most $\alpha d$  comes from
a combination of at most $\alpha$ circuits of $M$, since each circuit has size at least $d$.
Since we have a bound of $m^{O(\alpha^2)}$ on the number of circuits from \cref{thm:num-circuits}, the bound on the number of codewords follows.  

\paragraph{Near-shortest vectors in lattices.}
In general, matroids are not related to lattices.
However, since regular matroids are also representable over the real field, 
they happen to be connected to certain lattices.
A  result in matroid theory  (see \cite{Oxl06}) states that 
a matroid is regular if and only if it can be represented by the set of columns of a totally unimodular matrix (TUM).
 A matrix (over reals) is a TUM if each of its minors is either $0$, $1$, or $-1$.
 TUM are of fundamental importance in discrete  optimization, as they are related to the integrality of  polyhedra. 

We define a  lattice corresponding to a TUM, called totally unimodular lattice.
  For a TUM $A$, the lattice $L(A)$ of the set of integral vectors in $\nullspace(A)$ is said to be  a totally unimodular lattice.
$$L(A) := \{v \in \Z^m \mid Av =0\}.$$ 
It turns out that the near-shortest vectors in a totally unimodular lattice  can be related to 
the  near-minimum circuits of the associated regular matroid.
And thus, 
our \cref{thm:num-circuits} implies a polynomial upper bound on the number of near-shortest vectors in 
 totally unimodular lattices.
 
 \begin{theorem}[\textbf{Number of near-shortest vectors in TU lattices}]
 \label{thm:lattice}
 Let $A$ be an $n \times m$ totally unimodular matrix. Suppose any nonzero vector $u \in L(A)$ has a length more than $\lambda$.
  Then for any $\alpha\ge 1$, the number of vectors in $L(A)$ of length at most $\alpha \lambda$ 
 is   $m^{O(\alpha^6)}$.
 \end{theorem}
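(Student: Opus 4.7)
The plan is to reduce the count of short vectors in $L(A)$ to a count of near-minimum circuits in the associated regular matroid $M(A)$, and then invoke \cref{thm:num-circuits}. The central tool is a \emph{conformal decomposition} of integer null vectors of a TUM: for every nonzero $v \in L(A)$, there exist signed circuit vectors $c_1,\ldots,c_k \in \{-1,0,+1\}^m \cap L(A)$, each $c_i$ supported on a circuit $C_i$ of $M(A)$ and conformal with $v$ (that is, $(c_i)_j v_j \geq 0$ for all $i,j$), with $v = c_1 + \cdots + c_k$. Conformality forces no cancellation, so $\norm{v}_1 = \sum_i |C_i|$. I would build this decomposition by induction on $\norm{v}_1$: pick any circuit $C \subseteq \supp(v)$, select its signed indicator with signs matching $v$ on $C$, and subtract to obtain a lattice vector of strictly smaller $\ell_1$ norm and the same sign pattern.

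With the decomposition in hand, I would translate the length bound on $v$ into combinatorial parameters. Since $v$ has integer entries, $\norm{v}_1 \leq \norm{v}_2^2$, so $\norm{v}_2 \leq \alpha \lambda$ yields $\norm{v}_1 \leq \alpha^2 \lambda^2$. Let $r$ be the minimum circuit size of $M(A)$; the signed indicator of a shortest circuit gives a lattice vector of Euclidean length $\sqrt{r}$, so the hypothesis forces $\lambda^2 < r$, hence $\sum_i |C_i| = \norm{v}_1 < \alpha^2 r$. Because each $|C_i| \geq r$, the decomposition uses at most $\alpha^2$ signed circuits, and each $C_i$ has size at most $\alpha^2 r$, i.e., is an $\alpha^2$-minimum circuit. \cref{thm:num-circuits} bounds the number of $\alpha^2$-minimum circuits by $m^{O(\alpha^4)}$, so there are $N \leq 2\,m^{O(\alpha^4)}$ signed circuit vectors available. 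Every short $v$ is determined by a multiset of at most $\alpha^2$ of these signed circuits, so the total count is at most $\binom{N+\alpha^2}{\alpha^2} \leq (N+\alpha^2)^{\alpha^2} = m^{O(\alpha^6)}$, matching the claimed bound.

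The main obstacle is cleanly establishing the conformal decomposition with \emph{integer} $\pm 1$ signed circuits and the no-cancellation identity $\norm{v}_1 = \sum_i |C_i|$. Over the reals this is Rockafellar's decomposition of null vectors into elementary vectors; what we need is its integer strengthening, which rests on two facts specific to TUM representations of regular matroids: every circuit $C$ has a $\{-1,0,+1\}$-valued indicator in $L(A)$, unique up to overall sign; and for any $v \in L(A)$ with $C \subseteq \supp(v)$, this signed indicator can be chosen to agree in sign with $v$ along $C$. Both are manifestations of the orientability of regular matroids, but they must be invoked carefully so that the inductive subtraction step preserves integrality together with the conformal sign pattern.
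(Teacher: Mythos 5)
Your proposal is correct and follows essentially the same route as the paper: decompose a short vector conformally into $\{0,\pm1\}$-valued circuit vectors, use the length hypothesis to bound the number of summands by $<\alpha^2$ and to show each summand is an $\alpha^2$-minimum circuit, invoke \cref{thm:num-circuits} to count those circuits, and multiply. The only cosmetic differences are that you pass through the $\ell_1$-norm identity $\norm{v}_1=\sum_i\abs{C_i}$ and the integer inequality $\norm{v}_1\le\norm{v}_2^2$, whereas the paper works directly with the $\ell_2$ inequality $\norm{v}^2\ge\sum_i\norm{u_i}^2$ for conformal sums (\cref{obs:conformal}); both yield the same bound on the number of summands and the same final exponent.
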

 \noindent
Here  by the length of a vector we mean its $\ell_2$-norm.\footnote{Our proof works for  any $\ell_p$ norm ($p\geq 1$),
with appropriate dependence on $p$.}
\subsection{Techniques}

A landmark result in the study of regular matroids  was Seymour's  decomposition theorem~\cite{Sey80},
that also implied the first polynomial time algorithm for testing total unimodularity of a matrix. 
The theorem states that every regular matroid can be \emph{decomposed} into  simpler matroids, 
each of which is graphic, cographic, or a special $10$-element matroid $R_{10}$.
The theorem has found many uses in discrete optimization  and also used to prove some structural results about regular matroids.
To prove any result about regular matroids, a generic approach can be to first prove the corresponding results for the above simpler matroids
and then ``piece'' them together using Seymour's theorem to ``lift'' the result to regular matroids.
The following results about regular matroids are some examples where this approach has been successful:
 extended formulations for independent set polytope~\cite{KLWW16},
finding minimum cycle basis~\cite{GH02}, and
deciding first order logic properties~\cite{GKO12}.

 Although Seymour's theorem gives a meta-strategy,
  it does not automatically imply a result for regular matroids, given the result for graphic/cographic/$R_{10}$ matroids.
 Each setting, where  one wants to  apply Seymour's theorem to prove something about regular matroids, requires some new ideas.
In fact, in many settings, Seymour's theorem does not work as it is and a strengthening of its statement is required. 
 Indeed, the recent work~\cite{GTV18} on near-minimum circuits uses a refined version of Seymour's theorem (given by \cite{Tru98}).
There are a few other results on regular matroids that have used a more refined version (see \cite{Tru98}):
 faster algorithm to test total unimodularity~\cite{Tru90},
upper bounding the cycle cover ratio~\cite{LP02}, 
and approximating the partition function of the ferromagnetic Ising model~\cite{GJ11}.

Some recent works, solving discrete optimization problems for regular matroids,
 have used a further stronger version of Seymour's theorem.
The strongest form was presented recently by Dinitz and Kortsarz~\cite{DK13}, which
 gives the flexibility to decompose a regular matroid in many different possible sequences.
They used it to solve the matroid secretary problem for regular matroids~\cite{DK13}.
Later, Fomin, Golovach, Lokshtanov, and Saurabh utilized it in designing 
parameterized algorithms for the space cover problem~\cite{FGLS17}
and the spanning circuits problem~\cite{FGLS17a}.

Our main result (\cref{thm:num-circuits}) also takes advantage of this strongest version of Seymour's theorem. 
One of our novel ideas is to use different sequences of decompositions for the same matroid to upper bound different classes of circuits.
In contrast,  the result of \cite{GTV18}, which only works for  a multiplicative factor smaller than $3/2$,  uses a fixed decomposition tree. 
In the next section, 
we give an overview of our proof and explain why the techniques of \cite{GTV18} fail to generalize to an arbitrary
multiplicative factor.

\subsection{Future directions}
One natural question motivated by our work is to find out which other matroids have only polynomially many near-minimum circuits. 
Analogous to Seymour's decomposition theorem, Geelen, Gerards, and Whittle~\cite{GGW15} have proposed a structure theorem for any proper minor-closed class of matroids representable over a finite field.
Can we use this structure theorem to upper bound the number of near-minimum circuits 
in these matroids. 

Similarly, for what all lattices can we prove a polynomial bound on the number of near-shortest vectors. 
An interesting candidate  to examine would be the lattice $L(A)$ for any matrix $A$ whose entries are $O(1)$. 


\section{Overview of our proof and comparison with previous work}
\label{sec:techniques}

As mentioned earlier,  \cref{thm:num-circuits} was already known in the special cases  of graphic and cographic matroids. 
\begin{theorem}[\bf Number of near-minimum circuits in a graphic or cographic matroid~\cite{Kar93,Sub95}]
\label{thm:graphic}
 Let $M$ be a graphic or cographic matroid with ground set size $m \geq 2$. 
If every circuit in $M$ has size more than $r$, then for any $\alpha \geq 1$, the number of circuits in $M$
of size at most $\alpha r$ is bounded by $(2m)^{2\alpha}$.
\end{theorem}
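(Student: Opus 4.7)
The plan is to treat the graphic and cographic cases in parallel, each via a tailored probabilistic argument whose accounting is essentially the same.

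\textbf{Cographic case.} Here the circuits are the bonds (inclusion-minimal cut-sets) of an underlying graph $G$ on $n\le m+1$ vertices and $m$ edges, and ``every circuit has size more than $r$'' becomes ``min cut of $G$ is more than $r$.'' I would run Karger's random-contraction procedure: iteratively contract a uniformly random surviving edge until a target number $t$ of super-vertices remain. Fix an $\alpha$-minimum bond $B$ with $|B|\le \alpha r$ and estimate the probability that no edge of $B$ is ever contracted. The key observation is that any intermediate multigraph still has min cut more than $r$ (contraction never decreases any cut), so with $n'$ super-vertices it must carry at least $n' r/2$ edges, giving
\[
\Pr[\text{a step contracts an edge of } B]\;\le\;\frac{|B|}{n'r/2}\;\le\;\frac{2\alpha}{n'}.
\]
Multiplying over $n'$ from $t+1$ up to $n$ yields $\Pr[B\text{ survives}]\gtrsim \binom{n}{t}^{-1}$. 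Choosing $t=\lceil 2\alpha\rceil$ and using the fact that the final $t$-vertex multigraph has at most $2^{t-1}$ cuts, I sum $\Pr[B\text{ survives}]$ over all $\alpha$-minimum bonds; the total probability is at most $2^{t-1}$, which yields the count $\le 2^{t-1}\binom{n}{t}\le (2m)^{2\alpha}$.

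\textbf{Graphic case.} Now circuits are simple cycles and the hypothesis becomes ``girth greater than $r$.'' I would follow Subramanian's dual-flavored argument: instead of contracting, randomly delete edges (or, equivalently, grow a random BFS tree from a random root) and argue that each short cycle is destroyed unless several ``rare'' events all occur. The girth hypothesis enforces that any two vertices within distance $\le r/2$ are joined by a \emph{unique} shortest path, so a cycle of length $\le \alpha r$ is essentially determined by $O(\alpha)$ anchor vertices plus the segments between them. Accounting for the survival probability of such a cycle under the random process gives a bound of the same shape $\binom{n}{O(\alpha)}\cdot \text{const}^{\alpha}\le (2m)^{2\alpha}$.

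The main obstacle I anticipate is that the graphic and cographic arguments do not have a direct common generalization at this level: matroid duality swaps cycles and bonds, but the clean inequality ``min-cut $>r$ forces at least $n'r/2$ edges'' used in the cographic contraction has no automatic dual for cycles, so one cannot simply dualize the proof (planar duality would, but fails for general graphs). Hence the cographic half is the clean Karger contraction computation above, while the graphic half must go through Subramanian's path-uniqueness/tree-growth machinery; because the statement is quoted directly from \cite{Kar93,Sub95}, I would ultimately reproduce the contraction calculation for the cographic case and cite Subramanian for the graphic case, verifying that both give exactly the $(2m)^{2\alpha}$ constant advertised.
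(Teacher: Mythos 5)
Your cographic half is exactly the paper's argument (Appendix~\ref{sec:graphic} of this paper reproves \cref{thm:graphic} in the stronger form of \cref{lem:graphic-set}, with weights and a forced set $R$; your $R=\emptyset$, $w\equiv 1$ version is the same Karger contraction computation, down to the choice $t=2\alpha$, the min-cut-more-than-$r$ lower bound on the number of surviving edges, the product telescoping to $\binom{n}{2\alpha}^{-1}$, and the $2^{2\alpha-1}$ cuts in the final multigraph). That half is correct and matches the paper.

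Your graphic half is where the gap is. You have correctly isolated the right structural idea --- that a cycle of length $\le \alpha r$ in a graph of girth $>r$ is pinned down by $O(\alpha)$ ``anchor'' data because any two light segments between fixed endpoints would close up into a short cycle --- but you then try to realize it through a randomized BFS/edge-deletion process and end by explicitly punting to a citation of Subramanian. No random process is needed, and the ``grow a random BFS tree'' framing is not what makes the count close. The paper's graphic proof is a purely deterministic tuple-counting argument: walk the edge sequence $(e_1,\dots,e_q)$ of a cycle $C$ once around, marking an edge each time the cumulative weight since the previous mark exceeds $r/2$ (or when an $R$-edge is reached, in the weighted version). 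This produces a tuple $t_C$ of exactly $2\alpha$ oriented edges (in the unweighted $R=\emptyset$ case) partitioning $C$ into segments of weight at most $r/2$ each. The key claim is that the map $C\mapsto t_C$ is injective: if $t_C=t_{C'}$ and $C\neq C'$, some segment pair $p_j\neq p_j'$ shares endpoints, so $p_j\cup p_j'$ contains a cycle of weight at most $r/2+r/2=r$, contradicting the girth hypothesis. Counting tuples of $2\alpha$ oriented edges gives at most $(2m)^{2\alpha}$, with no probability involved. Note also that your ``any two vertices within distance $\le r/2$ are joined by a unique shortest path'' is not quite the right uniqueness statement; what is actually used is that between two fixed vertices there is at most one path of weight $\le r/2$ lying inside a given near-short cycle, and two such paths (from two cycles sharing anchors) give the forbidden short cycle. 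So: you should replace the randomized BFS sketch by this deterministic marking-and-injectivity argument, which is both simpler and exactly what yields the stated constant.
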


\noindent
A key component of our proof of \cref{thm:num-circuits} is a deep result of Seymour~\cite{Sey80} about decomposition of regular matroids.
 Seymour's  Theorem states that every regular matroid can be built from piecing together some simpler matroids, 
 each of which is graphic, cographic or a special matroid with 10 elements, $R_{10}$.
 These building blocks are composed together via a binary operation on matroids,  called ``$k$-sum'' for $k=1,2,3$.
 One can visualize this in terms of a ``decomposition tree'' -- a binary tree where each node represents a matroid such that 
 each internal node is a $k$-sum of its two children, each leaf node is a graphic, cographic or the $R_{10}$ matroid, 
 and the root node is the desired regular matroid. 
 It is important to note that this decomposition tree is not unique, and one can perform the decomposition in different ways,
optimizing various parameters, e.g., the tree depth.
 
 The $k$-sum $M = M_1 \triangle M_2$ of two matroids $M_1$ and $M_2$ is a matroid  where $M_1$ and $M_2$ 
 interact through a \emph{common set of elements} $S$ (of size $0$, $1$ or $3$) and each circuit of $M$ is obtained by picking  a circuit
each from $M_1$ and $M_2$, and taking their \emph{sum} via elements in $S$.
Thus, one can hope to upper bound the number of circuits in $M$ using such  bounds for $M_1$ and $M_2$.
We know a polynomial upper bound on the number of near-minimum circuits for the matroids that are building blocks of Seymour's decomposition theorem:
\cref{thm:graphic} shows it for graphic and cographic matroids;
and for the matroid $R_{10}$, the bound holds trivially since it has only constantly many circuits. 
 The challenge is to show that the polynomial bound still holds when we compose these matroids together via a sequence of an arbitrary number of  $k$-sum operations.
 
 A natural attempt to get this bound would be to use an induction based on the decomposition tree of a regular matroid. 
 This was precisely the approach \cite{GTV18} took and 
 showed a polynomial upper bound on the number of circuits in a regular matroid whose size is at most $3/2$ times the shortest circuit size.
 However, their proof technique does not generalize to an arbitrary  multiplicative factor $\alpha$.
 The use of such an induction severely restricts the power of the argument 
 and it cannot be made to work for arbitrary $\alpha$.

 \begin{theorem}[\cite{GTV18}]
 \label{thm:GTV}
 Let $M$ be a regular matroid with ground set size $m$.
Suppose that $M$ has no circuits of size less than $r$. 
Then the number of circuits in $M$ of size less than $\nfrac{3r}{2}$
is bounded by $O(m^5)$.
\end{theorem}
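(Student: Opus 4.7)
The plan is to follow the inductive strategy suggested in Section~\ref{sec:techniques}: apply Seymour's decomposition theorem to write $M$ as the root of a binary decomposition tree whose leaves are graphic, cographic, or $R_{10}$ matroids, and whose internal nodes correspond to $k$-sums ($k \in \{1,2,3\}$), then induct on the tree. At each leaf, \cref{thm:graphic} gives the desired polynomial bound for graphic and cographic matroids (with $\alpha = 3/2$ it yields at most $(2m)^{3} = O(m^{3})$ near-minimum circuits), while $R_{10}$ contributes only a constant since it has constantly many circuits in total. The real content of the argument lies in showing that the polynomial bound is preserved across each $k$-sum.

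At an internal node $M = M_1 \oplus_k M_2$ with common set $S$ of size $0$, $1$, or $3$, any circuit $C$ of $M$ either lies entirely inside $M_i \setminus S$ for some $i$ (in which case it is a circuit of $M_i$ of size $\geq r$ and is counted by the inductive bound on $M_i$), or it decomposes as $C = C_1 \triangle C_2$ for circuits $C_i$ of $M_i$ with $\emptyset \neq C_1 \cap C_2 \subseteq S$. In the latter case the identity
\[
|C| \;=\; |C_1| + |C_2| - 2\,|C_1 \cap C_2|,
\]
combined with $|C_1 \cap C_2| \leq |S| \leq 3$ and $|C| < 3r/2$, forces $|C_1| + |C_2| < 3r/2 + 6$. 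The plan is then to enumerate candidate pairs $(C_1, C_2)$ by first conditioning on the common subset $T := C_1 \cap C_2 \subseteq S$ (at most $2^{|S|} \leq 8$ choices) and, for each such $T$, invoking the inductive hypothesis on $M_1$ and $M_2$ to bound the number of admissible short circuits $C_i$ containing $T$. Multiplying the two counts and summing over $T$ gives a product of two polynomials in the respective ground-set sizes, which one wants to absorb into $O(m^{5})$ via a careful choice of the exponent in the induction.

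The main obstacle --- and precisely the reason the technique is stuck at the $3/2$ threshold --- is matching the size condition $|C_1| + |C_2| < 3r/2 + O(1)$ to a meaningful ``near-minimum'' hypothesis on each $C_i$ inside $M_i$. Circuits of $M_i$ that avoid $S$ are circuits of $M$ and therefore have size $\geq r$, but circuits that use elements of $S$ can be much shorter; indeed, under a $3$-sum, $S$ itself is a triangle in both $M_1$ and $M_2$, so the minimum circuit size of $M_i$ may drop all the way down to $3$. For $\alpha < 3/2$, one can still verify that both $C_1$ and $C_2$ remain near-minimum in their respective matroids: since $|C| \geq r$, the parts $C_1 \setminus S$ and $C_2 \setminus S$ together cover at least $r$ elements, forcing each $|C_i|$ to be at least roughly $r/2$, while the smaller of the two has size at most $3r/4 + O(1)$; hence each $C_i$ is itself a $(3/2)$-minimum circuit of $M_i$ in the sense needed for the induction. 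For $\alpha \geq 3/2$ this balancing fails: one of the parts can be as short as a constant-sized circuit built from $S$, so it is not ``near-minimum'' with respect to any bounded ratio over the true minimum circuit of $M_i$, and the inductive counting collapses. This is exactly the breakdown that motivates the authors' switch to using \emph{multiple} decomposition trees for different classes of circuits in their proof of \cref{thm:num-circuits}.
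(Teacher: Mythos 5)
Your proposal diverges from the argument the paper attributes to \cite{GTV18}, and as written it has a genuine gap at the key step.

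First, a structural point. By \cref{prop:sum-circuits}, when a circuit $C$ of $M = M_1 \triangle M_2$ is non-trivial it satisfies $C_1 \cap S = C_2 \cap S = \{e\}$ for a \emph{single} element $e\in S$, so $C_1 \cap C_2 = \{e\}$ always. Conditioning on an arbitrary $T \subseteq S$ with ``$2^{|S|} \leq 8$'' choices is an overcount but not fatal; the real issue is what follows.

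You then want to invoke the inductive hypothesis on both $M_1$ and $M_2$, and to do so you assert that ``since $|C| \geq r$, the parts $C_1 \setminus S$ and $C_2 \setminus S$ together cover at least $r$ elements, forcing each $|C_i|$ to be at least roughly $r/2$,'' concluding that each $C_i$ is $(3/2)$-minimum in $M_i$. This inference does not hold. From $|C_1 \setminus \{e\}| + |C_2 \setminus \{e\}| \geq r$ you can only conclude that at least one of the two terms is $\geq r/2$; the other may be as small as $1$. More importantly, even if both $|C_i|$ were $\Theta(r)$, there is no handle on the minimum circuit size of $M_i$ itself --- under a $3$-sum, $S$ is a circuit in $M_i$, so $M_i$ has a circuit of size $3$, and any ``$\alpha$-minimum'' hypothesis relative to that minimum is useless. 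You acknowledge at the end that the balancing fails for $\alpha\geq 3/2$, but the argument you give for $\alpha<3/2$ is not valid either, so the induction never closes. Additionally, multiplying the inductive bounds on $M_1$ and $M_2$ would yield roughly the square of the target polynomial, with no mechanism to keep the exponent at $5$.

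The route the paper describes for \cite{GTV18} sidesteps exactly this. Rather than trying to apply the same near-minimality hypothesis to both sides, one splits by whether $|C_1| < r/2$ or $|C_1| \geq r/2$. In the first case one proves $C_1$ is \emph{unique} (two distinct such projections would have symmetric difference that is a disjoint union of circuits of $M$ of size $\leq r$, a contradiction --- the same trick as \cref{cla:unique}), and then absorbs $C_1$ as an integer weight on the element $e$ of $M_2$, reducing to a single weighted instance on $M_2$ where the minimum weight of a circuit is still $> r$. In the second case $|C_2| \leq (\alpha - 1/2)r$, and one works with the strictly weaker hypothesis that circuits of $M_2$ \emph{avoiding} $e$ have size $>r$ --- the weighted-with-forbidden-set form that is made precise in \cref{lem:graphic-set}. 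The need for a strengthened decomposition theorem (as in \cite{Tru98}) and the fact that this only tolerates one level of case~(ii) recursion is precisely why the bound stops at $\alpha < 3/2$. So the uniqueness-plus-absorption step, not a balance-of-sizes argument, is the missing ingredient in your sketch.
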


\paragraph{Limitations of the arguments in \cite{GTV18}.}
To see why the arguments in \cite{GTV18}
fail to generalize to an arbitrary multiplicative factor, we need to take a closer look at the $k$-sum operations.
If $M$ is a $k$-sum of two matroids $M_1$ and $M_2$ with the set $S$ of common elements,
then any circuit $C$ of $M$ 
\begin{itemize}
\item  is a circuit in $M_1$ or $M_2$ that avoids any element from $S$, or
\item is the same as $(C_1 \cup C_2) \setminus \{e\}$, where $C_1$ and $C_2$ are circuits of $M_1$ and $M_2$, respectively
that both contain a single element $e$ from $S$.
\end{itemize}
Let us say, the shortest circuit size in $M$ is more than $r$ and we want to upper bound the number of circuits in it of size at most $\alpha r$.
The starting point would be to get such  bounds for $M_1$ and $M_2$, using induction.
However, we do not know the shortest circuit sizes of $M_1$ and $M_2$. 
%
We only  know something weaker: that
any circuit in $M_1$ or $M_2$ that avoids elements from $S$  has size more than $r$ (since it is also a circuit of $M$).
%
%

%
The idea of \cite{GTV18} is to divide the   circuits $C$ in $M$ into two classes according to their decomposition into circuits of $M_1$ and $M_2$:
\begin{enumerate}
\item[(i)]  when the size of $C_1$ is small, that is, less than $r/2$,
and
\item[(ii)]  when the size of $C_1$ is at least $r/2$.
\end{enumerate}
\noindent
It turns out that if we assume that the size of $C_1$ is less than $r/2$,  then there is a unique possibility of $C_1$  
(this is the reason for such a classification). 
%
In this case, they just {absorb} $C_1$  into $M_2$ by assigning  appropriate weights to the common elements. 
And thus, 
the question just reduces to bounding the number of circuits of weight at most $\alpha r$ in $M_2$
 under the assumption that the shortest circuit weight in $M_2$ is more than $r$.  This is done by induction.

In case (ii), such a trick does not work. 
Here, one has to work only with a weaker assumption that any circuit in $M_2$ that avoids the common element $e$ has size more than $r$. 
But on the positive side, the size of $C_2$ can be at most $(\alpha-1/2)r$ and  thus, we have to bound a smaller set of circuits.

The problem arises with this weaker assumption. 
We can try to prove the desired bound with the weaker assumption, by using the same inductive proof methodology.
However, as we go deeper into the case (ii) induction, the set of elements (say $R$) that needs be avoided in the assumption
grows in size.
We can no longer prove the uniqueness of $C_1$ (case (i)), as it might or might not contain elements from $R$.
The problem can be fixed in a limited case when $R$ has size $1$,
 by ensuring that $R$ is completely contained in $M_2$ and thereby, is disjoint from $C_1$.
This follows from a slightly stronger decomposition theorem~\cite{Tru98}, which works only when $\abs{R}=1$.
This restricts the case (ii) induction depth  to a single level.
Recall that for every level we go down in the case (ii) induction, we get a decrement of $1/2$ in $\alpha$.
Moreover, we do not need to enter case (ii) when circuits $C$ have a size less than $r$ since here either $C_1$ or $C_2$ has a size less than $r/2$. 
Hence, we have 
$$ (\alpha-1/2) r < r, \text{ which implies } \alpha < 3/2.$$
This is the reason why the proof of \cite{GTV18} breaks down  beyond $\alpha \geq 3/2$.

\subsection*{Proof overview of \cref{thm:num-circuits}}

As we have seen, the induction based proof of \cite{GTV18}  does not extend for a multiplicative factor $\alpha$ larger than $3/2$.
On the other hand, such an inductive approach seems unavoidable
if we work with a given decomposition tree,
since it gives a fixed  sequence of $k$-sum operations.
%
To overcome this issue, we need several new ideas. 
As noted earlier, the decomposition tree of a regular matroid is not unique.
Our first main idea is to use {\em different}  decomposition trees to upper bound  different kinds of circuits. 
For instance, we can use a decomposition $M = (M_1 \triangle M_2) \triangle M_3$ for one kind of circuits and 
use another decomposition $M = M_1 \triangle (M_2 \triangle M_3)$ for the rest. 

However, to switch between different decompositions, we need an ``associativity property''
for the $k$-sum operations. 
As  defined, the  $k$-sum operations  do not seem to be associative.
However, a closer inspection  tells us that in fact, they can be made associative. 
Using this associativity,  Seymour's  Theorem can be adapted to give, what we call, an \emph{unordered decomposition tree (UDT)} of a regular matroid $M$, 
which allows us to construct many different possible decomposition trees for $M$.
This form of Seymour's Theorem seems to be a folklore knowledge, but is first formally given in \cite{DK13} (also see \cite{Kap14}).
We believe that using this more structured version of Seymour's theorem is necessary to obtain a result corresponding to an arbitrary multiplicative factor $\alpha$.

 For a regular matroid $M$, its UDT is an undirected tree such that each of its nodes corresponds to a graphic/cographic/$R_{10}$ matroid.
Each subtree $T$ of the UDT corresponds to a unique regular matroid $M_T$.
Moreover, if  $T_1$ and $T_2$ are the two subtrees obtained by deleting an arbitrary edge from a subtree $T$, then 
$$M_T = M_{T_1}\triangle M_{T_2}.$$
Thus, a UDT gives us many possible ways of decomposing $M$, depending on the order we choose for deleting edges from the UDT.
%

%
%
%
In the proof of \cref{thm:num-circuits}, 
instead of an inductive argument, we directly use the structure of the unordered decomposition tree. 
We first give a brief outline of the steps in our proof.
Assume that the shortest size of a circuit in $M$ is more than $r$, 
and we would like to bound the number of circuits of size at most $\alpha r$.

\begin{itemize}
\item \textbf{Circuit decomposition.} We first argue that any circuit  $C$ in $M$ can be written as a sum of its \emph{projections} on the nodes of the UDT, 
which themselves are circuits in the corresponding matroids (\cref{obs:CuCv}).
\item \textbf{Balanced division of UDT.} For any given circuit $C$ of size $\leq \alpha r$ in $M$, we divide the UDT into a balanced set of subtrees, i.e., we find a set of $O(\alpha)$ \emph{center nodes} in the UDT
such that each subtree obtained by deleting the center nodes has a projection of $C$ of size at most $r/2$ (\cref{cla:centers}).
\item \textbf{Classifying the circuits.} We then classify these circuits based on the division of the UDT they produce and argue that 
the number of different classes is polynomially bounded. 
To bound the number of circuits in a given class, we work with the corresponding division of the UDT
(\cref{cla:num-signatures}). 
\item \textbf{Uniqueness of small projections.} 
The number of circuits in a class is bounded by the \emph{product} of 
the numbers of possible projections on the center  nodes and the remaining subtrees.
We first show that if a subtree has a projection of size at most $r/2$,
then the projection is \emph{unique} for all the circuits in the class (\cref{cla:T1}).
\item \textbf{Number of projections on the center nodes.} 
The main technical step is to  bound the number of possible projections on the center nodes. 
 Since there are only constantly many center nodes, it suffices to do it for a given center node.
This task is reduced to bounding the number of weighted circuits in a graphic/cographic/$R_{10}$ matroid,
 through various technical ideas  (\cref{lem:numberCu}) such as
\begin{itemize}
\item absorbing small projections via common elements, 
\item avoiding common elements that connect to large projections. 
\end{itemize}
\end{itemize}

\noindent
Some of our lower level techniques are inspired from \cite{GTV18}: uniqueness of small projections,
assigning weights, and avoiding a set of elements.
We elaborate on each of our proof steps.

Recall that $k$-sum operations are defined in a way such that any circuit of a matroid $M=M_1 \triangle M_2$ is either completely contained in $M_1$ or $M_2$, 
or it can be written as a \emph{sum}
of two circuits, one coming from each $M_1$ and $M_2$, which we refer as \emph{projections} of the given circuit.
When a circuit of $M$ is completely contained in  $M_1$, we say it has an empty projection on $M_2$.
A crucial property of the UDT is that any division of the UDT into subtrees gives us a valid decomposition of the matroid into smaller matroids corresponding to the subtrees.
Thus, we can also obtain the projection of a circuit on any subtree of the UDT. 

\paragraph{A balanced division of the UDT.}
Our {next idea} is based on the observation that any weighted tree can be broken down into a balanced set of subtrees, i.e., it has a node
such that its removal  produces subtrees that have weights at most half of the total weight of the tree.
For a given circuit $C$ of $M$, we consider its projection size on a node of the UDT as the weight of node and
 use this balanced division recursively on the UDT. 
 %
%
By this, we obtain certain \emph{center} nodes of the UDT such that if we delete these center nodes, 
then each of the obtained subtrees has only a small projection of $C$.
Here, by a small projection we mean that its size should be at most $r/2$.
%
We argue that if the size of $C$ is at most $\alpha r$, then 
the number of centers would be at most $4 \alpha$.

Next, we classify the circuits (of size $\leq \alpha r$) according to the set of centers they produce.
Since $\alpha$ is a constant, the  number of possible classes is polynomially bounded.
Hence, we just need to  upper bound the number of circuits in a given class. 

\paragraph{Bounding the number of circuits in a class.}
To upper bound the number of circuits in a class, let us fix a set of centers.
%
 We divide the UDT into the center nodes, 
 and the subtrees we obtain if we delete centers, and write $M$ as a sum of the matroids corresponding to these centers and subtrees.
Further, we write any circuit $C$ as a sum of its projections on the matroids associated with the centers and the subtrees.
If we upper bound the number of distinct possible projections on each of these smaller matroids, then their product
 bounds the number of all the desired circuits.
Since there can be $\Omega(m)$ subtrees, this product can easily become exponentially large even if we have just two possible projections for each subtree.
We sidestep this exponential blow-up as follows.

\paragraph{If a projection is small, it is unique.}

The first important step is to show that there is only a unique possibility of the projection on a subtree (to a certain extent), besides the empty projection,
if we assume that the size of the projection is at most $r/2$.
It is true because, if there are more than one such projections then we can   show that any two of them combine to give a circuit of size at most $r$ in $M$, 
which contradicts the initial assumption.
This takes care of the projections on the subtrees in the above division of the UDT.

\paragraph{Upper bounding the number of projections on a center node.}
The next step is the most technically involved part of the proof: bounding the number of projections on the center nodes. 
As there are only a constant number of center nodes,
it suffices to polynomially bound the number of distinct projections on a given center node. 
Recall that each projection is a circuit in the respective matroid.
One might think that since the matroid associated with a single node is graphic, cographic or $R_{10}$,
it would be easy to bound the number of  circuits in it, but  it is not that straightforward. 
The main problem is that we do not know the size of the shortest circuit in this component matroid; in fact, it could be arbitrarily small. 
And thus, we cannot just get a polynomial bound on the number of circuits whose sizes can be up to $\alpha r$.
Here, we need to use a more sophisticated argument.
There are two main technical ingredients involved. 

\medskip
\textbf{Technique I: Absorbing small projections on subtrees via common elements.}
To see the first ingredient, let us consider a relatively simpler case when there is only one center node.
In this case, all the subtrees attached to the center node have projections of sizes at most $r/2$.
Let $M_0$ be the matroid associated with the center node. 
The idea is to give some weights to the elements of $M_0$ and define a map from circuits of $M_0$ to circuits of $M$
in a way that a circuit of $M_0$ with weight $s$ is mapped to a circuit in $M$ with size $s$.
If we can design such a weight assignment then we can assume that the smallest weight of a circuit in $M_0$ is more than $r$.
Then we would just need to bound the number of near-smallest weight circuits in a graphic/cographic matroid, which 
can be done by 
a weighted version of \cref{thm:graphic}.
%

The next question is: how to design such weights?
Let $T$ be any subtree attached to the center node and $M_T$ be the matroid associated with it.
Let  $e_T$ be one of the common elements between $M_T$ and $M_0$.
As discussed above, there is a unique projection on $T$ that has  size at most $r/2$ and contains $e_T$.
Let that projection be $C_T$.
We put a weight $\abs{C_T \setminus e_T}$ on the element $e_T$ of $M_0$, which is supposed to be a representative of the projection on $T$.
That is, if a circuit $C_0$ of $M_0$ takes the element $e_T$ then it means that $C_0$ will be summed up with $C_T$
 to form a circuit $C$ of $M$,
otherwise it will not be.
We put such weights for every subtree $T$ attached to the center node, and every element common between $M_T$ and $M_0$.
It turns out that this gives us a weighting scheme with exactly the desired property:
a circuit of $M_0$ with weight $s$ is mapped to a circuit in $M$ with size $s$.

We need to consider the case when there are more than one center nodes. 
The problem with the above argument would be that if we pick any one of the center nodes and delete it from UDT,
some of the obtained subtrees can have projection sizes more than $r/2$, and thus, there is no uniqueness of the projection
(recall that we get sizes less than $r/2$ only when we delete \emph{all} center nodes).
This is where we require the second technical ingredient.

\medskip
\textbf{Technique II: Avoiding common elements that connect to large projections.}
Let us pick one of the center nodes and say, the associated matroid is $M_0$.
Observe that there can be many subtrees attached to the center node that have projection sizes larger than $r/2$, 
but the crucial fact is that their number can be at most $2\alpha$ (since the total size of the circuit $C$ is at most $\alpha r$).

Let $T$ be one such subtree and $e_T$ be one of the common elements between $M_T$ and $M_0$.
%
%
Unlike what we did in the first technique, there is no unique way of assigning a weight to $e_T$.
Instead, we just consider circuits
%
$C_0$  in $M_0$ that avoid the element $e_T$.
In short, here again, we can have a weight-preserving map from circuits in $M_0$ to circuits in $M$, but the domain is restricted to 
those circuits in $M_0$ which avoid common elements $e_T$, for any subtree $T$ with  a large projection.

Using this map, we can argue that the  weight of any circuit in $M_0$, that avoids elements $e_T$, is more than $r$.
This is a weaker assumption than what we require, in the sense that a circuit in $M_0$ that takes some elements $e_T$ can have arbitrary small weight.
It turns out that such a weaker assumption is sufficient to give a polynomial upper bound on the number of  circuits of weight $\leq \alpha r$ in a graphic/cographic matroid,
as long as the number of elements $e_T$ that we need to avoid remains a constant (\cref{lem:graphic-set}).
We claim that the number of elements $e_T$ are at most $O(\alpha)$, which is a constant. 
This is true because $M_T$ and $M_0$ can have at most $3$ elements in common, and 
as we saw above, the number of  subtrees $T$ with a large projection can be at most $2\alpha$.

\paragraph{Finishing the proof.}
In essence, through various technical components, we reduce the problem to a single component matroid, which is graphic/cographic/$R_{10}$.
However, we have to consider weights on the elements and also have a weaker assumption about the smallest weight of a circuit -- that is --
we  assume a lower bound on the weight of only those circuits that avoid a fixed set $R$ of elements. 
It turns out that even with this weaker assumption, we can  get an upper bound on the number of desired circuits in a graphic/cographic matroid by  cleverly modifying the proofs of \cite{Kar93,Sub95}.
This was partially done in \cite{GTV18}, where they only considered the case of $\abs{R}=1$.
 Here, we generalize their proof to an arbitrary  set $R$ (\cref{lem:graphic-set}).
 
\subsection*{Organization of the rest of the paper.}
In Section~\ref{sec:matroids}, we introduce well-known concepts from matroid theory and 
describe  Seymour's  Theorem for regular matroids. 
In Section~\ref{sec:lattice}, we relate circuits of a regular matroid with vectors of a totally unimodular lattice and prove 
\cref{thm:lattice} using \cref{thm:num-circuits}.
Section~\ref{sec:decompositionTree} talks about a refinement of Seymour's Theorem, 
where we have more structure in the decomposition of a regular matroid. 
In Section~\ref{sec:num-circuits} we use this structured decomposition to upper bound 
the number of near-minimum circuits in regular matroids. 
Section~\ref{sec:maxflow} describes max-flow min-cut matroids. 
We argue that the same proof technique gives a polynomial bound on the number of near-minimum circuits in these matroids. 
For proving \cref{thm:num-circuits}, we first need that result for graphic and cographic matroids,
but in a stronger form. Appendix~\ref{sec:graphic} is devoted to prove this.
%


\section{Matroid preliminaries}
\label{sec:matroids}

\subsection{Matroids and circuits}
\begin{definition}[Matroid]
For a finite set $E$ and a nonempty collection $\I$ of its subsets, 
the pair $M=(E,\I)$ is called a matroid if
\begin{enumerate}
\item for every $I_1 \subseteq I_2$, $I_1 \in \I$ implies $I_2 \in \I$,
\item if $I_1, I_2 \in \I$ with $\abs{I_1} > \abs{I_2}$ then there exists an element $e \in I_1 \setminus I_2$
such that $I_2 \cup \{e\} \in \I$.
\end{enumerate}
Every set in $\I$ is said to be an \emph{independent set} of $M$. %
\end{definition}
\noindent
Every independent set of maximum size is called a \emph{base} of $M$.
A subset of $E$ that is not independent is said to be dependent.
Note that since $\I$ is nonempty, the empty set $\emptyset$ must be an independent set.

\begin{definition}[Circuit]
For a matroid $M$, any inclusion-wise minimal dependent subset is called a circuit.
\end{definition}
\noindent
We define some special classes of matroids that are  useful for us. 
For a matrix $A$, let $E$ be the set of its columns and $\I$ be the collection of all linearly independent sets of columns.
It is known that $M(A) = (E,\I)$ is a matroid. 

\begin{definition}[Linear, binary, and regular matroids]
A matroid $M$ is called representable over a field $\F$ if there exists a matrix $A$ over $\F$ such that $M = M(A)$.
\begin{itemize}
\item A matroid representable over some field is called linear. 
\item A matroid representable over $GF(2)$ is called binary.
\item A matroid representable over every field is called regular. 
\end{itemize}
\end{definition}
\noindent
Regular matroids are known to be characterized by totally unimodular matrix.

\begin{definition}[Totally unimodular matrix] 
A matrix $A$ over real numbers is said to be totally unimodular if  every square submatrix of $A$
has determinant $0$, $1$, or $-1$.
\end{definition}
\noindent
Note that by definition, each entry in a totally unimodular matrix is $0$, $1$ or $-1$.

\begin{theorem}[Characterization of regular matroids, see \cite{Oxl06}]
A matroid $M$ is regular if and only if there is a totally unimodular matrix $A$ such that $M = M(A)$.
\end{theorem}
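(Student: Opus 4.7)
The plan is to prove the two directions separately, treating the reverse implication as straightforward and the forward implication as the main obstacle.

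For the easy direction (TU implies regular), I would start from an $n \times m$ totally unimodular matrix $A$ over $\R$ and argue that its column-independence matroid is identical to the column-independence matroid of $A$ viewed as a matrix over any field $\F$. Concretely, let $A_\F$ denote the matrix obtained by replacing each entry $0,1,-1$ of $A$ by the corresponding element of $\F$. A set $S$ of columns is independent in $M(A)$ iff $A$ has a square submatrix of order $|S|$ supported on rows of $A$ and the columns of $S$ with nonzero determinant; by total unimodularity every such determinant lies in $\{0,\pm 1\}$, and $\pm 1$ is nonzero in every field. Therefore the rank of any set of columns is the same computed over $\R$ or over $\F$, so $M(A_\F)=M(A)$ for every field $\F$, and $M(A)$ is regular.

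For the hard direction (regular implies TU), my plan is to fix any basis $B$ of the regular matroid $M$ and look at the (unique) $GF(2)$-representation of $M$ in the standard form $[\,I \mid B_2\,]$, which exists because $M$ is in particular binary. The goal is to construct a \emph{signing} of $B_2$: a $\{0,\pm 1\}$-matrix $B$ over $\Z$ that reduces to $B_2$ mod~$2$ and such that $[\,I \mid B\,]$ is totally unimodular. I would first reduce the problem to checking that every square submatrix of $[\,I \mid B\,]$ with entries coming from a single ``cycle'' of $B_2$ has determinant in $\{0,\pm 1\}$; this is the content of Camion's signing lemma, which says that a $0/1$ matrix admits a TU signing iff every square submatrix that is ``Eulerian'' over $GF(2)$ signs correctly, and that whenever a TU signing exists it is essentially unique up to row/column negations. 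The heart of the matter is then to show that regularity of $M$ forces this Camion condition to hold; this is where I would invoke a second representation of $M$ — the hypothesis guarantees one over $GF(3)$ — and use it to derive the consistent signs.

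Putting the pieces together: given the $GF(3)$-representation of $M$ in the same form $[\,I \mid B_3\,]$ relative to the basis $B$, I would lift its entries to $\{0,\pm 1\}$ in the obvious way to produce a candidate $B$. To prove that the resulting real matrix $[\,I \mid B\,]$ is TU, I would proceed by induction on submatrix order: a $k \times k$ submatrix's determinant modulo $2$ equals the corresponding $GF(2)$-determinant, while its value modulo $3$ equals the $GF(3)$-determinant, and by the inductive setup both of these vanish or are units in their respective fields; the Chinese Remainder Theorem combined with a bound $|\det| \le 1$ coming from the induction (Hadamard-style, or more cleanly from the block structure) would force the determinant to land in $\{0,\pm 1\}$.

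The main obstacle will be the forward direction, specifically the sign-consistency argument: showing that the $GF(2)$ and $GF(3)$ representations can simultaneously be ``lifted'' to the same integer matrix. I expect that a direct self-contained proof is delicate and that the cleanest path is to quote Camion's signing theorem together with Tutte's excluded-minor characterization of regular matroids (no $U_{2,4}$, $F_7$, or $F_7^*$ minor), as is done in \cite{Oxl06}; alternatively one can argue via Tutte's homotopy theorem. Either way, the reverse direction is elementary and the weight of the proof lies entirely in constructing the TU signing.
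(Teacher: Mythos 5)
The paper does not actually prove this statement: it is quoted verbatim from Oxley's text \cite{Oxl06} as a known characterization, so there is no internal proof to compare you against. With that caveat, your sketch is in the spirit of the proof in the cited reference.

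Your easy direction (TU $\Rightarrow$ regular) is complete and correct: a square subdeterminant that equals $0$ or $\pm 1$ over $\Z$ reduces to $0$ or a unit in every prime subfield, hence the rank of any column set is field-independent, which is exactly what is needed. For the converse, your outline (take the standard $GF(2)$ representation $[I \mid B_2]$, produce a $\{0,\pm 1\}$ signing, show it is TU) is headed the right way, but the CRT-plus-Hadamard step you offer as the closing argument has a genuine gap. Knowing $\det(N) \bmod 2$ and $\det(N) \bmod 3$ constrains $\det(N)$ only modulo $6$; it does not bound $\abs{\det(N)}$. And the Hadamard bound for a $k\times k$ $\{0,\pm 1\}$ matrix grows roughly like $k^{k/2}$, nowhere near $1$, so it cannot supply the missing magnitude bound either. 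The ingredient you actually need here is the minimal-counterexample argument: take a smallest square submatrix $N$ of $[I\mid B]$ with $\abs{\det N}\ge 2$, use minimality to conclude every proper subminor of $N$ lies in $\{0,\pm 1\}$, show $N$ has no zero entry (else expand along it to find a smaller violator), then normalize a row and column to all $1$'s by $\pm 1$ scalings and reach a contradiction by Laplace expansion and parity. You do correctly flag that a self-contained proof is delicate and that the clean path is to invoke Camion's signing theorem together with Tutte's excluded-minor theorem (no $U_{2,4}$, $F_7$, or $F_7^*$ minor), which is indeed how \cite{Oxl06} proceeds; but as written, the CRT/Hadamard paragraph does not substitute for that machinery.
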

\noindent
Two well-known special cases of regular matroids are graphic and cographic matroids. 

\begin{definition}[Graphic matroids]
The graphic matroid $M(G)$ for a graph $G$ is defined as $(E,\I)$, where
 $E$ is the set of edges in $G$ and $\I$ is the collection of all sets of edges without cycles. 
\end{definition}

\noindent
For a graph $G$, its cographic matroid $M^*(G)$ is the duals the graphic matroid $M(G)$.
For a matroid $M= (E,\I)$, its dual is $M^* = (E,\I^*)$
where, $$\I^* = \{ I \subseteq E \mid E\setminus I \text{ contains a base set of } M \}.$$
Observe that a base set of a graphic matroid $M(G)$ is a spanning tree in $G$ (or spanning forest, if $G$ is not connected).
Thus, a set of edges is independent in $M^*(G)$ if and only if its removal keeps $G$ connected. 

The circuits of graphic and cographic matroids are easy to characterize in terms of cycles and cut-sets.
For a graph $G$, and any partition $V_1 \cup V_2$ of its vertices,
the set of edges $E(V_1,V_2)$ connecting a vertex in $V_1$ to another in $V_2$
is called a cut-set.

\begin{fact}[Circuits in graphic and cographic matroids]
\label{fac:graphic-cographic-circuit}
For a graph $G$, 
\begin{itemize}
\item a circuit of the graphic matroid $M(G)$ is any simple cycle of $G$ and
\item a circuit of the cographic matroid $M^*(G)$ is any inclusion-wise minimal cut-set of $G$.
\end{itemize}
\end{fact}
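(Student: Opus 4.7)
The plan is to unpack the definitions in both parts of the fact and verify each containment of the two claimed characterizations, using the stated definitions of independence in $M(G)$ and $M^*(G)$.

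For the graphic part, I would first observe that by the definition of $M(G)$ a set of edges is independent iff it is acyclic, so a dependent set is exactly one that contains (the edge set of) some simple cycle. I would then verify both inclusions. First, any simple cycle $C$ is dependent (it contains itself as a cycle), and for any edge $e \in C$ the set $C \setminus \{e\}$ is a simple path, which is acyclic and hence independent; so $C$ is minimal dependent, i.e.\ a circuit. Conversely, if $D$ is a circuit, then $D$ is dependent, hence contains the edge set of some simple cycle $C \subseteq D$; by minimality of $D$ and dependence of $C$ we must have $D = C$, which is a simple cycle.

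For the cographic part, I would first translate the abstract definition of independence in $M^*(G)$ into a graph-theoretic statement: a set $I \subseteq E$ is independent in $M^*(G)$ iff $E \setminus I$ contains a spanning forest of $G$, which is equivalent to saying that the number of connected components of $(V, E \setminus I)$ equals that of $G$. Consequently, $I$ is dependent iff removing $I$ strictly increases the number of components, i.e.\ iff $I$ contains a cut-set $E(V_1, V_2)$ of some connected component. I would then show both inclusions exactly as in the graphic case: a minimal cut-set $B$ is dependent, and deleting any single edge $e \in B$ yields a proper subset whose removal no longer disconnects $G$ (by minimality of $B$ as a cut-set), so $B$ is minimal dependent and hence a circuit; conversely, any circuit $D$ of $M^*(G)$ is dependent, so contains some minimal cut-set $B \subseteq D$, and minimality of $D$ forces $D = B$.

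Neither direction involves a serious obstacle — this is a standard matroid-theoretic unpacking — but the one spot that needs care is the cographic direction, where one must correctly translate ``$E \setminus I$ contains a base of $M(G)$'' into ``deleting $I$ preserves the number of connected components'' (so as to handle disconnected $G$ uniformly via spanning forests rather than spanning trees), and then justify that a minimal such $I$ is precisely an inclusion-minimal cut-set, i.e.\ a bond. Once this equivalence is made explicit, both items follow by the same ``minimal dependent $=$ minimal object of the claimed type'' argument.
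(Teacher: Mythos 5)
The paper states this as a standard fact from matroid theory without giving a proof, and your argument is the standard correct unpacking of the definitions. Both directions are handled carefully, and you correctly identify the one place requiring care --- translating ``$E\setminus I$ contains a base of $M(G)$'' into ``deleting $I$ preserves the number of connected components'' so that disconnected $G$ is handled uniformly via spanning forests --- so the proof is sound.
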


\noindent
Recall that the symmetric difference $C_1 \triangle C_2$ of two sets $C_1$ and $C_2$ is given by 
$(C_1 \setminus C_2 ) \cup (C_2 \setminus C_1)$.
Note that the symmetric difference of two cycles in a graph can be expressed as a disjoint union of cycles in the graph.
Same holds true for two cut-sets.
These two statements are special cases a more general fact about binary matroids. 
Recall that graphic and cographic matroids are regular and thus, binary.
 
\begin{fact}
\label{fac:symm-diff-circuits}
For  two circuits  $C_1$ and $C_2$ of a binary matroid $M$, their symmetric difference $C_1 \triangle C_2$ is a disjoint union of circuits of $M$.
\end{fact}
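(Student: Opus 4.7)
The plan is to use the $GF(2)$-representation of $M$. Let $A$ be a matrix over $GF(2)$ with $M = M(A)$, identify $E$ with the columns of $A$, and for $S\subseteq E$ write $\mathbf{1}_S \in GF(2)^E$ for the indicator vector of $S$. Since the symmetric difference of sets translates to addition of indicator vectors modulo $2$, the collection
\[
\mathcal{Z} := \{\, S\subseteq E : \mathbf{1}_S \in \nullspace(A)\,\}
\]
is automatically closed under $\triangle$. The proof then splits into two halves: (a) every circuit of $M$ lies in $\mathcal{Z}$, and (b) every element of $\mathcal{Z}$ is a disjoint union of circuits of $M$. Applied to $C_1$, $C_2$ and their symmetric difference, these two facts immediately yield the claim.

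For (a), I would observe that if $C$ is a circuit then the columns $\{A_e\}_{e\in C}$ admit some nontrivial $GF(2)$-dependence, supported on a nonempty $C'\subseteq C$ with $\sum_{e\in C'} A_e = 0$. The subset $C'$ is then itself dependent in $M$, and minimality of the circuit $C$ forces $C'=C$; hence $\mathbf{1}_C \in \nullspace(A)$. Combined with closure of $\mathcal{Z}$ under $\triangle$, this gives $\mathbf{1}_{C_1}, \mathbf{1}_{C_2} \in \nullspace(A)$ and therefore $C_1 \triangle C_2 \in \mathcal{Z}$.

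For (b), I would induct on $|S|$ for $S \in \mathcal{Z}$: any nonempty such $S$ is $GF(2)$-dependent and thus dependent in $M$, so it contains a circuit $C \subseteq S$; part (a) gives $\mathbf{1}_C \in \nullspace(A)$, whence $\mathbf{1}_{S\setminus C} = \mathbf{1}_S + \mathbf{1}_C \in \nullspace(A)$, and the induction hypothesis writes $S\setminus C$ as a disjoint union of circuits of $M$. Attaching $C$ to this decomposition exhibits $S$ itself as such a disjoint union, and taking $S := C_1 \triangle C_2$ finishes the argument.

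No serious obstacle is expected here, since the whole argument is essentially linear algebra over $GF(2)$. The one delicate step is the claim in (a) that minimality of a circuit forces its unique $GF(2)$-dependence to have every coefficient equal to $1$; this is precisely where binary representability is used, and it is the reason the symmetric-difference statement holds for binary matroids (and in particular for graphic and cographic matroids, as mentioned in the paragraph preceding the fact) but can fail over larger fields.
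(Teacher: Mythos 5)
The paper states this as a known \emph{Fact} without proof, so there is no in-paper argument to compare against. Your argument is the standard one for binary matroids and it is correct: you pass to a $GF(2)$-representation, observe that the cycle space $\nullspace(A)$ is closed under symmetric difference, show that every circuit's indicator lies in the cycle space (using that over $GF(2)$ the support of any nontrivial dependence supported inside a circuit must, by minimality, equal the whole circuit), and then decompose any nonempty cycle-space member into disjoint circuits by repeatedly peeling off a contained circuit. The induction is sound since removing a nonempty $C\subseteq S$ strictly decreases $|S|$ and keeps $S\setminus C=S\triangle C$ in the cycle space, and your closing remark correctly pinpoints that the ``coefficients must all equal $1$'' step is exactly where binarity is used and why the fact fails over larger fields.
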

\noindent
It is known that for a matroid $M$, one can delete one of its ground set elements $e$ to obtain another matroid $M \setminus e$ defined as follows.

\begin{definition}[Deletion]
For a matroid $M = (E , \I)$ and an element $e \in E$, 
 $M \setminus e$ is defined to be matroid on the ground set $E\setminus \{e\}$ 
 such that any independent set of $M$ not containing $e$ is an independent set of $M \setminus e$.
\end{definition}
\noindent
For a graphic matroid, deletion of an element corresponds to the deletion of the edge from the graph,
while for a cographic matroid, it corresponds to the contraction of the edge. 
It is easy to characterize the circuits of $M\setminus e$.
\begin{fact}
\label{fac:deletion-circuits}
 The circuits of $M \setminus e$ are those circuits of $M$ that do not contain $e$. 
\end{fact}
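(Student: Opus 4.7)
The plan is to prove equality of the two collections of circuits by a direct double inclusion, unpacking the definitions of ``circuit'' (inclusion-wise minimal dependent set) and ``deletion.'' The core observation I will rely on is that the independent sets of $M \setminus e$ are exactly the independent sets of $M$ that are contained in $E \setminus \{e\}$: one direction is stated in \cref{fac:deletion-circuits}'s preceding definition, and the other direction follows because $M \setminus e$ is a matroid whose ground set is $E \setminus \{e\}$ (every subset of $E \setminus \{e\}$ is either a subset of an independent set of $M$ or contains a dependent set of $M$, so the independence structure on $E \setminus \{e\}$ is uniquely pinned down). With this equivalence in hand the rest is formal.

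For the forward inclusion, I would take a circuit $C$ of $M$ with $e \notin C$. Since $C \subseteq E \setminus \{e\}$ and $C$ is dependent in $M$, the equivalence above implies $C$ is dependent in $M \setminus e$. For minimality, any $C' \subsetneq C$ is independent in $M$ by minimality of the circuit $C$, and since $C' \subseteq E \setminus \{e\}$, it is independent in $M \setminus e$ as well. Hence $C$ is a circuit of $M \setminus e$.

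For the reverse inclusion, I would take a circuit $C$ of $M \setminus e$. By construction $C \subseteq E \setminus \{e\}$, so $e \notin C$. Dependence of $C$ in $M \setminus e$ transfers to dependence in $M$: if $C$ were independent in $M$, then being a subset of $E \setminus \{e\}$ it would be independent in $M \setminus e$, a contradiction. Finally, any proper subset $C' \subsetneq C$ is independent in $M \setminus e$ by minimality, and the equivalence gives that $C'$ is independent in $M$. Thus $C$ is a circuit of $M$ that avoids $e$.

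There is really no substantive obstacle here; the statement is essentially a definitional unfolding. The only point that merits a line of care is invoking both directions of the characterization of independent sets in $M \setminus e$, since \cref{fac:deletion-circuits}'s preceding definition only spells out one direction explicitly.
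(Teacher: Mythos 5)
The paper states this as a \emph{Fact} and gives no proof of its own — it is standard matroid theory — so there is no paper argument to compare against. Your double-inclusion argument is correct and is essentially the canonical proof: once one knows that the independent sets of $M \setminus e$ are precisely the independent sets of $M$ contained in $E \setminus \{e\}$, dependence and minimality transfer verbatim in both directions.

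One small caveat: your justification that the independence structure of $M \setminus e$ is ``uniquely pinned down'' by the one-directional clause in the paper's definition is not really an argument — stated literally, that clause would also be satisfied by a larger independence family (e.g., all subsets of $E \setminus \{e\}$). What actually pins it down is that the standard definition of deletion \emph{is} $\I(M\setminus e) = \{\,I \subseteq E\setminus\{e\} : I \in \I\,\}$, and the paper's phrasing is just informal shorthand for that. It is cleaner to invoke the definition directly than to try to derive it; with that substitution, your proof is complete and correct.
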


\noindent
Another operation one can do on a matroid is to add a new element parallel to an existing element. 
This new element is essentially a copy of the existing element. 
Formally, let $M = (E, \I)$ be a matroid with a given element $e$.
One can define a new matroid  $M'$ on the ground set $E \cup \{e'\}$ with a new element $e'$
such that $\{e,e'\}$ is a circuit. This also implies that for any circuit $C$ of $M$ that contains $e$, the set 
$C \setminus \{e\} \cup \{e'\}$ is a circuit of $M'$.
In the case of a graphic matroid, adding a parallel element means adding a parallel edge in the graph,
while in the cographic case, it means splitting an edge into two by adding a new vertex.

\begin{fact}
\label{fac:closed}
The classes of regular matroids, graphic matroids, and cographic matroids are closed under the deletion operation and 
under the addition of a parallel element.
\end{fact}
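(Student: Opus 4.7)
The plan is to prove closure for each of the three classes by exhibiting, for each operation, a corresponding operation on the standard representation (TU matrix or underlying graph), and then verifying that it yields a representation of the required kind.

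For the regular case I would fix a TU matrix $A$ with $M=M(A)$. For deletion of an element $e$, simply remove the column of $A$ indexed by $e$; every square submatrix of the resulting matrix is a square submatrix of $A$, hence still has determinant in $\{0,1,-1\}$, so the new matrix is TU and represents $M\setminus e$ by the definition of $M(A)$. For adding a parallel element, append a new column $a_{e'}$ that is an exact copy of the column $a_e$. Any square submatrix of the enlarged matrix either (a) avoids one of the two copies, in which case it is a submatrix of $A$ (up to renaming a column), or (b) contains both copies, in which case two of its columns are identical and its determinant is $0$. Either way the enlarged matrix is TU, and in the represented matroid $\{e,e'\}$ is clearly a circuit, as required.

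For the graphic class, let $M=M(G)$. Deletion is immediate: $M\setminus e = M(G\setminus e)$, since cycles of $G$ not using $e$ are exactly cycles of $G\setminus e$. For parallel extension, let $G'$ be obtained from $G$ by adding a new edge $e'$ with the same endpoints as $e$; then $\{e,e'\}$ is a $2$-cycle in $G'$ and hence a circuit in $M(G')$, and the remaining circuit structure matches the definition of adding a parallel element. For the cographic class, use duality: $M^*(G)\setminus e = (M(G)/e)^{*} = M^*(G/e)$, where $G/e$ is the contraction of $e$ in $G$, which is again a graph; so $M^*(G)\setminus e$ is cographic. For adding a parallel element in $M^*(G)$, the corresponding graph operation is subdivision: let $G'$ be $G$ with the edge $e=uv$ replaced by a new vertex $w$ and two edges $e'=uw$, $e''=wv$. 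I would then verify that $M^*(G')$ is precisely the matroid obtained from $M^*(G)$ by making $e''$ parallel to $e=e'$ by inspecting cut-sets: a minimal cut-set of $G'$ either isolates $\{w\}$ (giving the circuit $\{e',e''\}$) or keeps $w$ on one side, in which case it uses exactly one of $\{e',e''\}$ and otherwise agrees with a minimal cut-set of $G$ on $e$.

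The only nontrivial verification is this last step for cographic parallel extensions — matching the minimal cut-sets of the subdivided graph with the circuits prescribed by the parallel-extension definition. Everything else is either a direct check on square submatrices of a TU matrix or a one-line graph-theoretic observation, so I would write the regular and graphic cases in a couple of lines and spend the bulk of the argument making the subdivision/cut-set correspondence explicit, invoking \cref{fac:graphic-cographic-circuit} and the duality relation $M^*(G)\setminus e = M^*(G/e)$ along the way.
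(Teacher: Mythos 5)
The paper states \cref{fac:closed} without proof, but the surrounding discussion already names exactly the graph-level correspondences you use (edge deletion and contraction for matroid deletion in the graphic and cographic cases respectively, and parallel edge / edge subdivision for parallel extension), so your elaboration — together with the routine TU-matrix checks for the regular case — matches what the paper has in mind and is correct. The only point worth flagging in the cographic subdivision step is that parallel extension is only meaningful for non-loop elements: a loop of $M^*(G)$ is a bridge of $G$, and subdividing a bridge does \emph{not} make $\{e',e''\}$ a minimal cut-set; since the definition of parallel extension already excludes loops, this does not affect the claim, but your cut-set case analysis should note it (and also the third case where $u,v,w$ all lie on one side, giving cuts avoiding both $e'$ and $e''$, which correspond to circuits of $M^*(G)$ avoiding $e$).
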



\subsection{$k$-sums and Seymour's  Theorem}
To prove \cref{thm:num-circuits}, a crucial ingredient is the remarkable decomposition theorem for regular matroids by Seymour. 
Seymour~\cite{Sey80} showed that every regular matroid can be constructed by piecing together 
three special kinds of matroids -- graphic matroids, cographic matroids and a certain matroid $R_{10}$ of size 10.
The operation involved in this composition is called a $k$-sum, for $k=1$, $2$, or $3$.
The $k$-sum operation is defined for arbitrary binary matroids. 

\begin{definition}[\textbf{Sum of two matroids, \cite{Sey80,Oxl06}}]
\label{def:sum}
Let $M_1 = (E_1, \I_1)$ and $M_2 = (E_2, \I_2)$ be two binary matroids with $E_1 \cap E_2 = S$. 
The matroid $M_1 \triangle M_2$ is defined over the ground set $E_1 \triangle E_2$
such that the circuits of $M_1 \triangle M_2$ are the minimal non-empty subsets of $E_1 \triangle E_2$ that 
are of the form $C_1 \triangle C_2$,
where  $C_i$ is a (possibly empty) disjoint union of circuits of $M_i$ for $i=1,2$. 
\end{definition}

\noindent
The fact that the above definition indeed gives a matroid can be verified from the circuit characterization of a matroid~\cite[Theorem 1.1.4]{Oxl06}. 
We are only interested in three special cases of this sum, called $1$-sum, $2$-sum, and $3$-sum.

\begin{definition}[\textbf{$1,2,3$-sums}]
The sum $M_1 \triangle M_2$ of two binary matroids $M_1 = (E_1, \I_1)$ and $M_2 = (E_2, \I_2)$  with $E_1 \cap E_2 = S$  is called  
\begin{enumerate}
\item a $1$-sum if $\abs{S}=0$,
\item a $2$-sum if $\abs{S}=1 $, $S$ is not a circuit of $M_1, M_2$ or their duals, and $\abs{E_1} , \abs{E_2}\geq 3$, and
\item a $3$-sum if $\abs{S}=3 $, $S$ is a circuit of $M_1$ and $M_2$,  
$S$ does not contain a circuit of the duals of $M_1$ and $M_2$,  and $\abs{E_1} , \abs{E_2}\geq 7$.
\end{enumerate}
\end{definition}
 
 \noindent
 The conditions on the ground set sizes are there to avoid degenerate cases. 
The following facts follows from the definition and \cref{fac:symm-diff-circuits}.  
\begin{fact}
\label{fac:circuit-in-one}
For $i=1,2$, if $C_i$ is a circuit of $M_i$ that does not contain any elements from $S$ then $C_i$ is a circuit of $M_1 \triangle M_2$.
\end{fact}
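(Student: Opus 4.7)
The plan is to verify both (a) that $C_1$ can be expressed in the form prescribed by \cref{def:sum} and (b) that it is minimal among non-empty subsets of $E_1 \triangle E_2$ of that form. Existence is immediate: take the pair $(C_1, \emptyset)$, where $C_1$ is itself the disjoint union of the single circuit $C_1$ of $M_1$, $\emptyset$ is vacuously a disjoint union of circuits of $M_2$, and $C_1 \subseteq E_1 \setminus S \subseteq E_1 \triangle E_2$ because $C_1 \cap S = \emptyset$.

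For minimality I would argue by contradiction. Suppose there exist disjoint unions $D_1$ of circuits of $M_1$ and $D_2$ of circuits of $M_2$ such that $D := D_1 \triangle D_2$ is a non-empty proper subset of $C_1$. Since $D \subseteq C_1 \subseteq E_1 \setminus S$, any element $e \in D_2 \setminus S$ would lie in $E_2 \setminus E_1$, hence not in $D_1$, hence in $D \subseteq E_1$, a contradiction; so $D_2 \subseteq S$. The same disjointness $D \cap S = \emptyset$ forces $D_1 \cap S = D_2$, so $D = D_1 \setminus S$.

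Now I would case-split on $|S|$. If $|S| = 0$ (the $1$-sum), then $D_2 = \emptyset$ trivially. If $|S| = 1$ (the $2$-sum), then $D_2 \subseteq \{s\}$ must be a disjoint union of circuits of $M_2$, and the $2$-sum hypothesis rules out $\{s\}$ being a circuit of $M_2$, so again $D_2 = \emptyset$. In both cases $D = D_1$ is a non-empty disjoint union of circuits of $M_1$ properly contained in the circuit $C_1$, contradicting the fact that every proper non-empty subset of $C_1$ is independent in $M_1$.

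The delicate case is $|S| = 3$ (the $3$-sum). Here $D_2$ is a disjoint union of circuits of $M_2$ lying inside $S$, and since $S$ is itself a circuit of $M_2$ every proper subset of $S$ is independent in $M_2$, so $D_2 \in \{\emptyset, S\}$. The $\emptyset$ subcase reduces to the previous argument, so assume $D_2 = S$; then $S \subseteq D_1$ and $D = D_1 \setminus S$. Using the $3$-sum hypothesis that $S$ is also a circuit of $M_1$, together with \cref{fac:symm-diff-circuits} applied iteratively (equivalently, that the cycle space of a binary matroid is closed under symmetric difference), I would conclude that $D_1 \triangle S = D$ is a disjoint union of circuits of $M_1$, once more contradicting the minimality of the circuit $C_1$. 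The main obstacle is precisely this $D_2 = S$ subcase, where one must combine the $3$-sum assumption on $S$ with the binary-matroid closure property to expel $D_2$ from the picture.
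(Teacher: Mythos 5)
Your proof is correct. The paper gives no explicit argument for \cref{fac:circuit-in-one}, remarking only that it ``follows from the definition and \cref{fac:symm-diff-circuits}''; your write-up is a careful fleshing-out of precisely that route, using the definition of the $k$-sum together with the closure of a binary matroid's cycle space under symmetric difference (an easy iterate of \cref{fac:symm-diff-circuits}) to dispose of the $D_2 = S$ subcase in a $3$-sum.
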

\begin{fact}
\label{fac:disjointCircuits}
Let $C_i$ be a disjoint union of circuits of $M_i$ for $i=1,2$.
If $C_1 \triangle C_2$ is a subset of $E_1 \triangle E_2$ then it is a disjoint union of circuits in $M_1 \triangle M_2$.
\end{fact}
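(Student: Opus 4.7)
The plan is to prove this by induction on $|C_1 \triangle C_2|$, exploiting the ``cycle-space'' structure implicit in Definition~\ref{def:sum}. Let $\mathcal{F}$ denote the collection of all subsets $X \subseteq E_1 \triangle E_2$ that can be written as $X = D_1 \triangle D_2$ where each $D_i$ is a (possibly empty) disjoint union of circuits of $M_i$. By Definition~\ref{def:sum}, the circuits of $M_1 \triangle M_2$ are exactly the minimal non-empty elements of $\mathcal{F}$, and by hypothesis $C_1 \triangle C_2 \in \mathcal{F}$.

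The first step is to verify that $\mathcal{F}$ is closed under symmetric difference. Given $X = D_1 \triangle D_2$ and $X' = D_1' \triangle D_2'$ in $\mathcal{F}$, associativity and commutativity of $\triangle$ give $X \triangle X' = (D_1 \triangle D_1') \triangle (D_2 \triangle D_2')$. By iterated application of \cref{fac:symm-diff-circuits}, in the binary matroid $M_i$ the symmetric difference $D_i \triangle D_i'$ of two disjoint unions of circuits is again a disjoint union of circuits of $M_i$ (this is the standard fact that the cycle space of a binary matroid is a $GF(2)$-subspace). Moreover, since $X, X' \subseteq E_1 \triangle E_2$ we have $X \cap S = X' \cap S = \emptyset$, whence $(X \triangle X') \cap S = \emptyset$ and so $X \triangle X' \subseteq E_1 \triangle E_2$. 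Thus $X \triangle X' \in \mathcal{F}$.

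The second step is the induction. If $C_1 \triangle C_2 = \emptyset$, it is the disjoint union of zero circuits and we are done. Otherwise, choose a minimal non-empty $C \in \mathcal{F}$ with $C \subseteq C_1 \triangle C_2$, which exists since $C_1 \triangle C_2$ itself lies in $\mathcal{F}$. Such a $C$ is in fact minimal in all of $\mathcal{F}$: any strictly smaller non-empty element of $\mathcal{F}$ would still sit inside $C_1 \triangle C_2$, contradicting the choice of $C$. Hence $C$ is a circuit of $M_1 \triangle M_2$. Let $X = (C_1 \triangle C_2) \triangle C$; since $C \subseteq C_1 \triangle C_2$, we have $X = (C_1 \triangle C_2) \setminus C$, which is disjoint from $C$ and has strictly smaller size. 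By the closure proved above, $X \in \mathcal{F}$, so the inductive hypothesis gives that $X$ is a disjoint union of circuits of $M_1 \triangle M_2$. Combining with $C$ yields $C_1 \triangle C_2 = C \sqcup X$ as a disjoint union of circuits of $M_1 \triangle M_2$.

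The only genuinely nontrivial ingredient is the closure of $\mathcal{F}$ under symmetric difference, which rests on extending \cref{fac:symm-diff-circuits} from ``two circuits'' to ``two disjoint unions of circuits'' in a binary matroid; this is the standard observation that the cycle space is the $GF(2)$-span of the circuits and should require no additional work. Beyond that the proof is a routine descent on $|C_1 \triangle C_2|$, and no subtlety arises from the sizes of $|S|$ (i.e.\ the value of $k$) since \cref{def:sum} is uniform in $k$.
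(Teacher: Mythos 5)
Your proof is correct. The paper itself gives no explicit argument for this fact, merely asserting that it ``follows from the definition and \cref{fac:symm-diff-circuits},'' and your proof fills in those details in precisely the intended way: observe that the family $\mathcal{F}$ of sums $D_1\triangle D_2$ lying in $E_1\triangle E_2$ is closed under symmetric difference (by the $GF(2)$-subspace structure of cycle spaces in binary matroids), identify circuits of $M_1\triangle M_2$ with minimal non-empty members of $\mathcal{F}$ per \cref{def:sum}, and peel one off at a time by induction on cardinality. No gaps.
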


\noindent
The operation of $1$-sum is the easiest sum operation.
\begin{fact}[\textbf{Circuits in a $1$-sum}]
\label{fac:1sum-circuits}
If $M$ is a $1$-sum of $M_1$ and $M_2$ then any circuit of $M$ is either a circuit of $M_1$
or a circuit of $M_2$. 
\end{fact}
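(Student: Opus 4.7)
The plan is to directly unwind \cref{def:sum} in the special case $\abs{S}=0$. In a $1$-sum the ground sets $E_1$ and $E_2$ are disjoint, so $E_1 \triangle E_2 = E_1 \cup E_2$, and for any $C_1 \subseteq E_1$ and $C_2 \subseteq E_2$ the symmetric difference $C_1 \triangle C_2$ is just the disjoint union $C_1 \cup C_2$. Hence any circuit $C$ of $M = M_1 \triangle M_2$ is a minimal non-empty set of the form $C_1 \cup C_2$, where $C_i$ is a (possibly empty) disjoint union of circuits of $M_i$.

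The first step will be to show that each $C_i$ is either empty or a single circuit of $M_i$. If $C_1$ decomposed as a disjoint union $D \sqcup D'$ of two or more circuits of $M_1$, then $D$ by itself (taken with the empty disjoint union on the $M_2$-side) is a non-empty subset of $C$ of the permitted form and is strictly smaller than $C$, contradicting minimality. The symmetric argument handles $C_2$.

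The second step will be to rule out the case where both $C_1$ and $C_2$ are non-empty. In that situation $C_1$ alone (paired with the empty disjoint union of $M_2$-circuits) is again a non-empty set of the required form, and $C_1 \subsetneq C_1 \cup C_2 = C$ contradicts minimality. Therefore exactly one of $C_1, C_2$ is empty, and $C$ coincides with a single circuit of $M_1$ or of $M_2$, which is what we want.

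The proof is a direct unwinding of the definition and does not present a real obstacle; the only point to be careful about is that the minimality condition in \cref{def:sum} is tested against \emph{all} non-empty sets of the form $C_1 \triangle C_2$, including those with one side empty, so at each reduction the smaller candidate must be explicitly exhibited in that specific form.
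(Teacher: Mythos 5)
Your proof is correct and takes the natural route: the paper states this as an unproved Fact immediately after \cref{def:sum}, implicitly treating it as a direct consequence of the definition, and your argument is precisely that direct unwinding (disjoint ground sets make the symmetric difference a disjoint union, then minimality forces one side empty and the other a single circuit).
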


\noindent
Characterizing the circuits of a $2$-sum or a $3$-sum is a bit more non-trivial.
The following lemma~\cite[Lemma 2.7]{Sey80} gives a way to represent circuits of $M_1\triangle M_2$
in terms of circuits of $M_1$ and $M_2$.

\begin{proposition}[\textbf{Circuits in a $2$-sum or a $3$-sum, \cite{Sey80}}]
\label{prop:sum-circuits}
Let $\C_1$ and $\C_2$ be the sets of circuits of 
$M_1$ and $M_2$, respectively. 
Let $M$ be a $2$-sum or a $3$-sum of $M_1$ and $M_2$ and 
let $E_1 \cap E_2  = S$ ($\abs{S} = 1$ or $3$).
Then for any circuit $C$ of $M$, exactly one of the following holds:
\begin{enumerate}
\item  $C \in \C_1$ with $S \cap C = \emptyset$ 
\item $C \in \C_2$ with $S \cap C = \emptyset$
\item there exist unique $e \in S$, $C_1 \in \C_1$, and $C_2 \in \C_2$
 such that
$$S \cap C_1 = S \cap C_2 = \{e\}  \mbox{ and } C = C_1 \triangle C_2.$$
\end{enumerate}
\end{proposition}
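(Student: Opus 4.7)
The plan is to argue directly from \cref{def:sum}. Since $C$ is a circuit of $M = M_1 \triangle M_2$, there exist disjoint unions of circuits $C_1'$ of $M_1$ and $C_2'$ of $M_2$ with $C = C_1' \triangle C_2'$, and this representation is inclusion-minimal in the sense that no proper non-empty subset of $C$ admits such a form. Since $C \subseteq E_1 \triangle E_2$ avoids $S = E_1 \cap E_2$, and since $C_i'$ can touch the other ground set only inside $S$, the symmetric difference must cancel on $S$, forcing $C_1' \cap S = C_2' \cap S$; call this common set $T \subseteq S$. The proof then splits on $|T|$, followed by a uniqueness argument.

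If $T = \emptyset$, every individual circuit component of $C_1'$ or $C_2'$ avoids $S$ and is therefore itself a circuit of $M$ by \cref{fac:circuit-in-one}. Any such component is a non-empty subset of $C$ of the form required in \cref{def:sum}, so by minimality $C$ equals exactly one such component, yielding case~1 or case~2. If $|T|=1$, say $T = \{e\}$, then $C_1'$ decomposes as a unique circuit $C_1 \in \C_1$ through $e$ (with $C_1 \cap S = \{e\}$) disjointly unioned with circuits of $M_1$ avoiding $S$, and analogously for $C_2'$. The set $C_1 \triangle C_2$ lies in $E_1 \triangle E_2$ (the $e$'s cancel), is non-empty (the $k$-sum conditions rule out $C_1 = C_2 = \{e\}$), and has the form required in \cref{def:sum}, so by minimality the extra components must be absent, giving $C = C_1 \triangle C_2$ as in case~3.

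The step I expect to be the most technically delicate is $|T| \geq 2$, which can only occur in a $3$-sum, where $|S|=3$ and $S$ is itself a circuit of both $M_1$ and $M_2$. My plan is to exploit that disjoint unions of circuits of a binary matroid form a $\GF(2)$-subspace (the cycle space), so that $C_1' \triangle S$ and $C_2' \triangle S$ are again disjoint unions of circuits of $M_1$ and $M_2$, with $(C_1' \triangle S) \triangle (C_2' \triangle S) = C$ and new $S$-intersection $T \triangle S$ of size $3 - |T| \in \{0,1\}$. This reduces to the previously handled cases. The care needed here is in checking that the alternative representation still exhibits $C$ in the same minimal form, so that the earlier conclusions transfer.

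For uniqueness of $(e, C_1, C_2)$ in case~3, suppose $(e^*, C_1^*, C_2^*)$ is a second valid representation. Then $C_1 \triangle C_1^* = C_2 \triangle C_2^*$ lies in $E_1 \cap E_2 = S$ and is a disjoint union of circuits of $M_1$. In a $2$-sum, $|S|=1$ and $\{e\}$ is not a circuit, so this set is empty. In a $3$-sum, the only nonempty disjoint union of circuits of $M_1$ inside $S$ is $S$ itself (since all proper subsets of the circuit $S$ are independent), but its $S$-slice $\{e\} \triangle \{e^*\}$ has size at most $2 < 3$, ruling this out. Hence $C_1 = C_1^*$, forcing $e = e^*$ and then $C_2 = C_2^*$. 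Finally, the three cases are mutually exclusive because in case~3 the circuit $C$ meets both $E_1 \setminus S$ and $E_2 \setminus S$ nontrivially, whereas cases~1 and~2 place $C$ on a single side.
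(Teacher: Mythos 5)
The paper states this proposition as a citation to Seymour~\cite[Lemma~2.7]{Sey80} and gives no proof of its own, so there is no in-paper argument to compare against; you are supplying a proof from scratch. Your argument is correct, and it proceeds in the natural way directly from \cref{def:sum}: write $C = C_1' \triangle C_2'$ with each $C_i'$ a disjoint union of circuits of $M_i$, observe that cancellation on $S$ forces $C_1'\cap S = C_2'\cap S =: T$, and case-split on $|T|$. The $T=\emptyset$ and $|T|=1$ cases reduce to minimality of $C$ together with the degeneracy conditions in the definition of $2$- and $3$-sums (which rule out $C_i = \{e\}$), and the $|T|\ge 2$ case is handled cleanly by toggling both sides by the circuit $S$ and re-running the earlier cases on the new representation. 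The uniqueness and mutual-exclusivity arguments are also right. One small presentational point: you invoke the fact that disjoint unions of circuits of a binary matroid form a $\GF(2)$-subspace (so that $C_i' \triangle S$ is again such a set, and so that $C_1 \triangle C_1^*$ is one in the uniqueness step). The paper only records the two-circuit special case as \cref{fac:symm-diff-circuits}, so it would be worth stating the general version explicitly, though it is standard and follows by induction from that fact.
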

\noindent
With all the required definitions, we can finally present Seymour's  theorem for regular matroids \cite[Theorem 14.3]{Sey80}.

\begin{theorem}[\textbf{Seymour's Theorem}]
\label{thm:Seymour}
Every regular matroid can be obtained by means of $1$-sums, $2$-sums and $3$-sums,
starting from matroids which are graphic, cographic or $R_{10}$.
\end{theorem}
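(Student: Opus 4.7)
The plan is to prove the theorem by strong induction on $\abs{E(M)}$: show that either $M$ already belongs to the base class (graphic, cographic, or $R_{10}$), or $M$ admits an \emph{exact $k$-separation} for some $k\in\{1,2,3\}$ that realizes it as $M_1\triangle M_2$ with each summand a strictly smaller regular matroid. Since the $k$-sum summands of a regular matroid are themselves regular (regularity is minor-closed and the $k$-sum construction reads off minor-like pieces), the induction hypothesis then applies to $M_1$ and $M_2$, and small cases can be handled by direct enumeration.

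First I would invoke Tutte's excluded-minor characterization: a binary matroid is regular if and only if it has no $F_7$ or $F_7^*$ minor. This packages regularity in a minor-closed form compatible with the $k$-sum construction, so whenever a separation is produced, the induced matroids on each side remain regular. It also lets us assume, in the inductive step, that $M$ is $3$-connected, since any $1$- or $2$-separation immediately produces a $1$- or $2$-sum decomposition via \cref{def:sum}.

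With that reduction in place, the heart of the argument is \emph{splitter theory}. I would prove, by analyzing how $3$-connected binary matroids extend by single elements, two structural facts: first, that $R_{10}$ is a splitter for the class of regular matroids (so if $R_{10}$ occurs as a proper minor of $M$, then $M=R_{10}$); and second, that if a specific $12$-element regular matroid $R_{12}$ appears as a minor of $M$, then its natural exact $3$-separation lifts to one of $M$ and yields a $3$-sum decomposition. The remaining case --- $M$ contains none of $F_7$, $F_7^*$, $R_{10}$, or $R_{12}$ as a minor --- must then force $M$ to be graphic or cographic, which is itself a separate excluded-minor theorem building on Tutte's characterization of graphic matroids.

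The main obstacle, by a wide margin, will be the lifting step for $R_{12}$, namely showing that an exact $3$-separation of an $R_{12}$-minor can be pushed out to an exact $3$-separation of the ambient matroid while preserving regularity on both sides, together with the final excluded-minor dichotomy that no-$R_{12}$ implies graphic or cographic. Both steps are technically intricate and form the deepest content of Seymour's original argument; any realistic plan would either work through this full machinery or, as in the present paper, cite it as a black box and proceed.
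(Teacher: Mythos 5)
The paper does not prove this theorem; it cites it verbatim as \cite[Theorem~14.3]{Sey80}, so there is no ``paper's own proof'' to compare against. Your sketch is, as far as it goes, a faithful rendering of Seymour's actual proof architecture: Tutte's excluded-minor characterization of regular matroids (no $F_7$, no $F_7^*$), reduction to the $3$-connected case via $1$- and $2$-separations, the splitter theorem showing $R_{10}$ is a splitter for regular matroids, the $R_{12}$ case giving an exact $3$-separation that lifts to the ambient matroid, and the final excluded-minor dichotomy that a $3$-connected regular matroid with no $R_{10}$ or $R_{12}$ minor is graphic or cographic.

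That said, what you have written is an outline, not a proof, and you say so yourself in the last paragraph. The three steps you defer --- proving the splitter theorem (and that $R_{10}$ is a splitter), proving that an $R_{12}$ minor's exact $3$-separation ``blocks up'' to an exact $3$-separation of $M$, and proving the graphic-or-cographic dichotomy for matroids with none of the four excluded minors --- are not peripheral details; they are the entire content of Seymour's theorem and occupy most of \cite{Sey80}. None of these can be ``handled by direct enumeration'' or dispatched by appealing to minor-closedness alone. Since the present paper treats the statement as a black box, the honest conclusion is that your proposal correctly identifies the roadmap to a proof but does not supply one; if a self-contained proof were required, you would have to work through the full splitter-theory machinery rather than cite it.
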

\noindent
The matroid $R_{10}$, which forms one of the building blocks for Seymour's Theorem, is represented by the following matrix over $GF(2)$.
\[
\begin{pmatrix}
 1 & 1 & 0&0&1& 1&0&0&0&0 \\
1&1&1&0&0 &0&1&0&0&0 \\
0&1&1&1&0 &0&0&1&0&0 \\
0&0&1&1&1& 0&0&0&1&0 \\
1&0&0&1&1& 0&0&0&0&1 
\end{pmatrix}
\]
The following fact about $R_{10}$ is useful.
\begin{fact}
\label{fac:R10}
The matroid obtained by deleting any element from $R_{10}$ is a graphic matroid.
\end{fact}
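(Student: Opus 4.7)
The plan is to leverage the symmetry of $R_{10}$ and then verify the statement for a single element. First, I would establish that the automorphism group of $R_{10}$ acts transitively on its ten elements. A clean way to see this is to observe that, after suitable elementary row operations over $GF(2)$, the given matrix can be put into a form whose $10$ columns are precisely the $\binom{5}{3}=10$ weight-$3$ vectors in $GF(2)^5$. In that description, the symmetric group $S_5$ acts on $GF(2)^5$ by permuting coordinates, and this action permutes the ten weight-$3$ columns transitively; it therefore induces a transitive action of $S_5$ on the ground set of $R_{10}$ by matroid automorphisms. (Alternatively, one could directly exhibit the cyclic symmetry of the first five columns together with a single additional automorphism swapping the $A$-block with the identity block.) Consequently, it suffices to show that $R_{10}\setminus e$ is graphic for one particular element $e$.

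To handle that single case, I would delete one fixed column, say the last one, leaving a $5\times 9$ matrix over $GF(2)$ of rank $5$, i.e., a matroid of rank $5$ on $9$ elements. I would then exhibit an explicit connected graph $G$ on $6$ vertices and $9$ edges (the natural candidate is $K_{3,3}$) together with a bijection between the remaining columns and the edges of $G$, and verify that a subset of columns is minimally linearly dependent over $GF(2)$ if and only if the corresponding set of edges forms a simple cycle in $G$. Because the matroid is small, the verification can be completed by enumerating circuits on both sides (the $4$-cycles and $6$-cycles of $K_{3,3}$ on one side, and the minimal linear dependencies of the $5\times 9$ matrix on the other).

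The main obstacle is this second step: producing the correct graph together with an explicit edge labelling, and verifying that every circuit matches. Although the computation is finite and mechanical, care is needed both to pick the graph correctly and to keep the enumeration organized; choosing, say, $K_5$ minus an edge instead of $K_{3,3}$ would fail. For a cleaner write-up one can simply invoke the well-known fact (e.g., from \cite{Oxl06}) that $R_{10}\setminus e\cong M(K_{3,3})$ for every element $e$. By contrast, the first step is a routine symmetry observation once the weight-$3$ vector description of $R_{10}$ is in hand.
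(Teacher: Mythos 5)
The paper offers no proof of this statement: it is presented as a background \emph{Fact}, implicitly relying on the standard matroid-theory literature (e.g.\ \cite{Oxl06}, where $R_{10}\setminus e\cong M(K_{3,3})$ for every $e$ is recorded). So there is no argument in the paper for your proposal to be compared against.

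That said, your outline is the standard way one would actually verify the claim, and it is correct. The two ingredients are exactly right: $R_{10}$ admits a representation by the ten weight-$3$ vectors in $GF(2)^5$, on which $S_5$ acts by coordinate permutation, giving a transitive automorphism group and thus reducing to a single element; and a direct check (by row-reducing the remaining $5\times 9$ matrix and matching circuits to cycles, or simply by citation) shows $R_{10}\setminus e\cong M(K_{3,3})$, which has the correct rank ($5$) and ground-set size ($9$). Your own caveat is the honest one: as written this is a sketch, since the finite verification in the second step is not carried out; but for a background fact of this kind, appealing to the known isomorphism $R_{10}\setminus e\cong M(K_{3,3})$ is exactly what the paper itself is doing by labelling the statement a Fact.
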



\section{Number of $\alpha$-shortest vectors in a totally unimodular lattice}
\label{sec:lattice}

In this section, we show how \cref{thm:lattice} follows from 
\cref{thm:num-circuits}.
Recall that for an $n\times m$ matrix $A$, the lattice $L(A)$ is defined as:
$$L(A ) = \{v \in \Z^m \mid A v =0\}.$$
We first define circuits of a matrix $A$, which are vectors in $L(A)$ and show a correspondence between the 
circuits of a TU matrix $A$ and the circuits of the associated regular matroid $M(A)$.
Thus, an upper bound on number of near-minimum circuits in $M(A)$ (\cref{thm:num-circuits}) implies an upper bound on number of near-minimum circuits of 
the matrix $A$.
%
Finally, we argue that any $\alpha$-minimum vector in $L(A)$ comes from a combination of at most $\alpha^2$-many 
$\alpha$-minimum circuits of $A$.
A specialized version  of this statement was shown in \cite{GTV18} -- 
any $2$-minimum vector in $L(A)$ comes from a
$2$-minimum circuit of $A$.

\begin{definition}[\textbf{Circuits of a matrix}]
For a matrix~$A$, a vector $u \in L(A)$ is a circuit of $A$ if 
for any vector $v \in L(A)$ with
 $\supp(v) \subseteq \supp(u)$, it must be that 
$ v = \gamma u$ for some integer $\gamma$.
\end{definition}

\noindent
Note that the circuits of $A$ come in pairs, in the sense that if $u$ is a circuit of~$A$, then so is $-u$.
The following is well-known for the circuits of a TU matrix (see~\cite[Lemma 3.18]{Onn10}).

\begin{fact}
\label{fac:matrix-circuit}
Every circuit of a TU matrix 
has its coordinates in $\{-1,0,1\}$.
\end{fact}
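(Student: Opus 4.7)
The plan is to combine the minimality property of a circuit with the Cramer's rule formula for kernel vectors, leveraging total unimodularity to control the size of the minors involved.

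First I would show that a circuit $u$ is necessarily \emph{primitive}, meaning $\gcd$ of its entries equals $1$: if $u = d \cdot u'$ for some integer $d \geq 2$ and integer vector $u'$, then $u' \in L(A)$ with $\supp(u') = \supp(u)$, so by the circuit property $u' = \gamma u$ for some integer $\gamma$, forcing $d\gamma = 1$, a contradiction. Next, let $S = \supp(u)$ and $A_S$ be the submatrix of $A$ formed by the columns indexed by $S$. I claim $\rank(A_S) = \abs{S} - 1$: it is at most $\abs{S}-1$ because $A_S u_S = 0$ with $u_S \neq 0$, and if it were strictly less, the rational kernel of $A_S$ would have dimension $\geq 2$, and clearing denominators on a rational vector in that kernel not proportional to $u_S$ would produce an integer vector in $L(A)$ with support in $S$ that is not an integer multiple of $u$, contradicting the circuit property.

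With $\rank(A_S) = \abs{S}-1$, I can select $\abs{S}-1$ linearly independent rows of $A_S$ to form a $(\abs{S}-1) \times \abs{S}$ matrix $B$ whose kernel is one-dimensional and contains $u_S$. Now I would invoke Cramer's rule: define $w \in \Z^{\abs{S}}$ by $w_j = (-1)^{j+1} \det(B_{\hat{\jmath}})$, where $B_{\hat{\jmath}}$ denotes $B$ with the $j$-th column removed. Expanding determinants by rows shows $Bw = 0$, so $w$ lies in $\ker(B)$; since $B$ has full row rank some such minor is nonzero, so $w \neq 0$. Each entry of $w$ is a signed minor of $A$, hence lies in $\{-1, 0, 1\}$ by total unimodularity.

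Because $\ker(B)$ is one-dimensional, there is a scalar $\lambda \in \mathbb{Q}$ with $w = \lambda u_S$. Writing $\lambda = p/q$ in lowest terms and using primitivity of $u_S$ forces $q = \pm 1$, so $\lambda$ is a nonzero integer, hence $\abs{\lambda} \geq 1$. For every $j \in S$, $\abs{(u_S)_j} = \abs{w_j}/\abs{\lambda} \leq 1$, and since $(u_S)_j$ is an integer, $(u_S)_j \in \{-1, 0, 1\}$. Outside $S$ the coordinates of $u$ vanish, so the claim follows.

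The only mildly delicate step is the rank computation for $A_S$, where one must pass from the (integer) circuit condition to a statement about the rational kernel; the rest is routine Cramer's rule plus the defining property of a TU matrix.
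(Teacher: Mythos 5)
Your proof is correct and follows essentially the standard argument that the paper outsources to \cite{Onn10}: primitivity of a circuit, the rank-$(\abs{S}-1)$ observation, and the Cramer's-rule cofactor vector whose entries are signed minors of a TU matrix and hence lie in $\{-1,0,1\}$. All the steps check out, including the slightly delicate passage from the integer circuit condition to the one-dimensionality of the rational kernel of $A_S$ and the divisibility argument pinning down $\lambda$ as an integer.
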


\noindent
We show a correspondence between circuits of $A$ and circuits of $M(A)$ when $A$ is TU.
\begin{lemma}[\textbf{Circuits of a TU matrix and a regular matroid}]
\label{lem:circuits}
Let $A$ be a TU matrix and  $M=(E,\I)$ be the regular matroid represented by  it.
Then the circuits of $M$ have a one to one correspondence 
with the circuits of~$A$ (up to change of sign).
\end{lemma}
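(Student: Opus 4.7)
The plan is to exhibit the bijection explicitly in both directions, relying on \cref{fac:matrix-circuit} as the key input from total unimodularity. Let $M = M(A)$, indexed so that each column of $A$ corresponds to an element of the ground set $E$.

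\textbf{From matroid circuits to matrix circuits.} Given a matroid circuit $S \subseteq E$ of $M$, I would first observe that by minimal dependence the columns $A|_S$ have rank exactly $\abs{S}-1$, so the kernel of $A|_S$ is one-dimensional over $\mathbb{R}$. Pick any $j \in S$; since $A|_{S \setminus j}$ has full column rank, the unique vector $u \in \mathbb{R}^m$ with $\supp(u) \subseteq S$, $u_j = 1$, and $Au = 0$ can be written via Cramer's rule as ratios of subdeterminants of $A$, which by total unimodularity lie in $\{-1,0,1\}$. Thus $u \in L(A)$. Moreover $\supp(u) = S$ exactly: if some $u_i = 0$ for $i \in S$, then $u$ would witness dependence of $A|_{S \setminus i}$, contradicting minimality of $S$. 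This $u$ is a circuit of $A$: any $v \in L(A)$ with $\supp(v) \subseteq \supp(u) = S$ lies in the one-dimensional kernel of $A|_S$, so $v = \gamma u$ for some $\gamma \in \mathbb{R}$, and since $u_j = 1$, $\gamma = v_j \in \mathbb{Z}$.

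\textbf{From matrix circuits to matroid circuits.} Conversely, given a circuit $u$ of $A$, set $S = \supp(u)$. The equation $Au = 0$ already shows $A|_S$ is dependent, so $S$ contains a matroid circuit $S'$ of $M$. If $S' \subsetneq S$, the forward construction yields a nonzero $v \in L(A)$ with $\supp(v) = S' \subsetneq \supp(u)$; by the circuit property of $u$, $v = \gamma u$ for some integer $\gamma$, forcing $\gamma = 0$ and $v = 0$, a contradiction. Hence $S' = S$ and $S$ is itself a matroid circuit.

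\textbf{Uniqueness up to sign.} If $u$ and $u'$ are two circuits of $A$ with the same support $S$, both lie in the one-dimensional kernel of $A|_S$, so $u' = \gamma u$ for some $\gamma \in \mathbb{R}$; the circuit definition forces $\gamma \in \mathbb{Z}$, and since both $u$ and $u'$ have entries in $\{-1,0,1\}$ by \cref{fac:matrix-circuit} with equal supports, we must have $\gamma \in \{+1,-1\}$. Composing the two constructions above then gives mutually inverse maps between matroid circuits of $M$ and signed pairs $\{u,-u\}$ of matrix circuits of $A$.

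The main obstacle is the existence step in the forward direction: one must guarantee that the one-dimensional real kernel of $A|_S$ contains an integral (indeed $\{-1,0,1\}$-valued) vector. This is precisely where total unimodularity is used, via Cramer's rule, and is already packaged in \cref{fac:matrix-circuit}; once existence is secured, the rest is bookkeeping about supports and the definition of a matrix circuit.
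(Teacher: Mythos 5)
Your proof is correct and follows essentially the same route as the paper: both directions of the bijection are established by matching minimal linear dependence of column subsets with minimal support among kernel vectors, and one-dimensionality of the kernel of $A$ restricted to a circuit's support gives uniqueness up to sign. The one cosmetic difference is that you lean on total unimodularity (via Cramer's rule and \cref{fac:matrix-circuit}) to produce the integral vector in the kernel, while the paper just observes that the one-dimensional rational kernel has a unique primitive integer generator up to sign; so the correspondence in this lemma actually holds for any rational matrix, and total unimodularity is only needed for \cref{fac:matrix-circuit}, not here.
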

\begin{proof}
By definition, a circuit of $A$ has an inclusion-wise minimal support. 
Thus for a circuit $u$ of $A$,
the columns in~$A$ corresponding to the set~$\supp(u)$
are minimally dependent. 
We know that a minimal dependent set is a circuit in the associated matroid. 
 Hence, the set~$\supp(u)$ is a circuit of matroid~$M$.

In the other direction, if $C \subseteq E$ is a circuit of matroid~$M$,
then the set of columns of~$A$ corresponding to~$C$ is minimally linear dependent. 
Hence, there is a unique linear dependence (up to a multiplicative factor) among the set of columns corresponding to $C$.
This means that there are precisely two circuits $u , -u \in L(A)$ with their support being~$C$.
\end{proof}

\noindent
\cref{lem:circuits} together with \cref{thm:num-circuits} gives the following corollary.
Let $\norm{\cdot}$ denote the $\ell_2$-norm of a vector.

\begin{corollary}
\label{cor:TU-circuits}
Let $A$ be an $n\times m$ TU matrix. 
If for every circuit $u$ of $A$, we have $\norm{u} >r$ then the number of its circuits $u$ with $\norm{u} \leq \alpha r$
is $m^{O(\alpha^4)}$.
\end{corollary}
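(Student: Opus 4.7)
The plan is to reduce \cref{cor:TU-circuits} directly to \cref{thm:num-circuits} applied to the regular matroid $M(A)$, using the support correspondence from \cref{lem:circuits} together with the $\{-1,0,1\}$-valuedness of TU circuits from \cref{fac:matrix-circuit}.

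First I would translate the length hypothesis into a combinatorial condition on circuit sizes in $M(A)$. By \cref{fac:matrix-circuit}, every circuit $u$ of the TU matrix $A$ has coordinates in $\{-1,0,1\}$, so $\norm{u}^2 = \abs{\supp(u)}$. Hence the assumption that every circuit $u$ of $A$ satisfies $\norm{u} > r$ is equivalent to saying that every circuit of $A$ has support of size strictly greater than $r^2$, and the condition $\norm{u} \le \alpha r$ is equivalent to $\abs{\supp(u)} \le \alpha^2 r^2$. By \cref{lem:circuits}, the map $u \mapsto \supp(u)$ is two-to-one from the circuits of $A$ onto the circuits of $M(A)$ (since $u$ and $-u$ share a support), so the two hypotheses transfer verbatim to $M(A)$: every circuit of $M(A)$ has size more than $r^2$, and counting circuits $u$ of $A$ with $\norm{u} \le \alpha r$ reduces, up to a factor of $2$, to counting circuits of $M(A)$ of size at most $\alpha^2 r^2$.

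Next I would invoke \cref{thm:num-circuits} on $M(A)$ with the rescaled parameters $r' := r^2$ and $\alpha' := \alpha^2$, which is legitimate since $\alpha \ge 1$ forces $\alpha' \ge 1$. The theorem then yields an upper bound of $m^{O((\alpha')^2)} = m^{O(\alpha^4)}$ on the number of circuits of $M(A)$ of size at most $\alpha' r' = \alpha^2 r^2$, and multiplying by $2$ to account for the $\pm$ ambiguity gives the same bound $m^{O(\alpha^4)}$ on the number of circuits $u$ of $A$ with $\norm{u} \le \alpha r$.

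There is no real obstacle beyond what is already packaged inside \cref{thm:num-circuits}; the entire argument is a bookkeeping reduction. The only thing worth flagging is the exponent inflation from $\alpha^2$ to $\alpha^4$: because the $\ell_2$-length of a $\{-1,0,1\}$-vector is the square root of its support size, a multiplicative slack $\alpha$ in length corresponds to a multiplicative slack $\alpha^2$ in support size, and this square is unavoidable in any approach that passes through the matroid $M(A)$ and measures circuits by their cardinality.
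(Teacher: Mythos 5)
Your reduction is exactly the one the paper gives: use Fact~\ref{fac:matrix-circuit} to rewrite $\norm{u}^2=\abs{\supp(u)}$, transfer both hypotheses to $M(A)$ via Lemma~\ref{lem:circuits}, and apply Theorem~\ref{thm:num-circuits} with the rescaled parameters $r'=r^2$, $\alpha'=\alpha^2$. Your version is slightly more explicit about the $\pm u$ two-to-one factor and about checking $\alpha'\ge 1$, but the argument and the resulting $m^{O(\alpha^4)}$ bound are the same.
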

\begin{proof}
If $u$ is a circuit of $A$ with $\supp(u) = \gamma$ then $\norm{u} = \sqrt{\gamma}$ (from \cref{fac:matrix-circuit}).
Thus, any circuit $u$ of $A$ with $\norm{u} \leq \alpha r$ corresponds to a circuit of the regular matroid $M(A)$ of size at most $\alpha^2 r^2$.
The desired bound follows from \cref{thm:num-circuits}.
\end{proof}

\noindent
We now show that we can  get an upper bound on the number of all short vectors in $L(A)$ from \cref{cor:TU-circuits}.
We define a notion of conformality among two vectors and show that every vector in $L(A)$ is a conformal combination of circuits of $A$.

\begin{definition}[\textbf{Conformal vectors \cite{Onn10}}]
Let $u,v \in \R^m$. 
We say that~$u$ is \emph{conformal to}~$v$, denoted by
$u \sqsubseteq v $, 
if $u_iv_i \geq 0$ and $\abs{u_i} \leq \abs{v_i}$, for each $1\leq i \leq m$.
\end{definition}

\noindent
The following lemma which says that each vector in $L(A)$ is a conformal sum of circuits, follows from \cite[Lemma 3.2 and  3.19]{Onn10}.

\begin{lemma}
\label{lem:conformal-sum}
Let $A$ be a TU matrix. 
Then for any nonzero vector $v \in L(A)$, we have 
$$v = u_1 + u_2 + \cdots + u_p$$ 
where each $u_i$ is 
a circuit of~$A$ and is conformal to~$v$.
\end{lemma}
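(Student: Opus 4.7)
My plan is to prove the lemma by induction on $\|v\|_1$, which is a nonnegative integer since $v \in \Z^m$. The base case $v = 0$ is vacuous (take $p = 0$). For the inductive step the only nontrivial ingredient I would need is:

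\textbf{Key sub-lemma.} \emph{For every nonzero $v \in L(A)$ there exists a circuit $u$ of $A$ with $u \sqsubseteq v$.}

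Granted the sub-lemma, I would set $u_1 := u$ and let $v' := v - u_1$. Since $u_1 \in L(A) \cap \Z^m$ and each entry of $u_1$ agrees in sign with the corresponding entry of $v$ and is no larger in magnitude, a coordinatewise check gives $v' \in L(A) \cap \Z^m$, $v' \sqsubseteq v$, and $\|v'\|_1 = \|v\|_1 - \|u_1\|_1 < \|v\|_1$. The inductive hypothesis applied to $v'$ would then yield $v' = u_2 + \cdots + u_p$ with each $u_i$ a circuit of $A$ conformal to $v'$, hence (by transitivity of $\sqsubseteq$) to $v$, giving the desired conformal decomposition $v = u_1 + u_2 + \cdots + u_p$.

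\textbf{Proving the sub-lemma.} For the sub-lemma I would consider the polyhedral cone
\[
K_v \;:=\; \bigl\{\, w \in \R^m : A w = 0,\ w_i v_i \ge 0 \ \forall i,\ w_i = 0 \text{ whenever } v_i = 0 \,\bigr\}.
\]
The cone $K_v$ is pointed, contains the nonzero vector $v$, and hence has at least one extreme ray. A standard active-constraint count for polyhedral cones shows that a nonzero $u \in K_v$ spans an extreme ray precisely when the binding-constraint subspace at $u$ is one-dimensional; a rank calculation turns this into the equivalent statement that the columns of $A$ indexed by $\supp(u)$ have nullity one, i.e., $\supp(u)$ is a circuit of the matroid $M(A)$. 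By \cref{lem:circuits} there is then a circuit $u'$ of $A$ with $\supp(u') = \supp(u)$, and one-dimensionality of the null space on $\supp(u)$ forces $u = \lambda u'$ for some $\lambda \in \R$; replacing $u'$ by $-u'$ if needed, I may take $\lambda > 0$, so $u'$ has the same sign pattern as $u$ on $\supp(u)$ and in particular $u'_i v_i \ge 0$ for all $i$. Finally, $u' \in \{-1,0,1\}^m$ by \cref{fac:matrix-circuit}, while $v$ being integral gives $|v_i| \ge 1$ for $i \in \supp(u') \subseteq \supp(v)$; together these yield $|u'_i| \le |v_i|$, hence $u' \sqsubseteq v$, establishing the sub-lemma.

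\textbf{Main obstacle.} The heart of the argument is the polyhedral-cone step characterizing extreme rays of $K_v$ as circuits of $M(A)$; everything else is routine sign and magnitude bookkeeping. Total unimodularity enters essentially only through \cref{fac:matrix-circuit}, which forces circuits of $A$ to take values in $\{-1,0,1\}$ and thereby guarantees both that $u_1$ in the induction step is dominated coordinatewise by $v$ and that $\|v - u_1\|_1$ strictly decreases at every step, so the induction terminates after at most $\|v\|_1$ iterations.
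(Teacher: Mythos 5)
Your proof is correct. Note, though, that the paper does not give its own argument for this lemma -- it simply cites \cite[Lemma 3.2 and 3.19]{Onn10} -- so there is no ``paper proof'' to compare against; what you have written is a genuine, self-contained proof rather than a variant route. Your argument is the standard one for conformal circuit decompositions: peel off one conformal circuit at a time, with termination guaranteed by strict decrease of $\|\cdot\|_1$ (which in turn relies on \cref{fac:matrix-circuit} to ensure circuits are integral, hence in $L(A)$, and to get $|u'_i| \le |v_i|$). The key sub-lemma -- existence of a conformal circuit dominated by $v$ -- via the pointed cone $K_v$ is exactly right: pointedness follows from the sign and support constraints, so $K_v$ has an extreme ray, and your identification of extreme rays with vectors whose support is a matroid circuit (nullity-one argument on the binding constraints, using that $u$ has full support on $\supp(u)$ to rule out a proper dependent subset) is sound. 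One small presentational point: you frame the base case as $v = 0$, but the lemma is stated only for nonzero $v$; it would be cleaner to say the recursion terminates once $v - u_1 - \cdots - u_j = 0$, at which point $p = j$.
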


\noindent
We have  the following easy observation for a conformal sum.

\begin{observation}
\label{obs:conformal}
If $v,u_1,u_2, \dots, u_p \in \Z^m $ are such that $v = u_1 +u_2 + \cdots + u_p$ and each $u_i$
is conformal to $v$ then 
$$\norm{v}^2 \geq \norm{u_1}^2 + \norm{u_2}^2 + \cdots + \norm{u_p}^2.$$
\end{observation}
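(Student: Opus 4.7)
The plan is to reduce the observation to a coordinate-wise inequality and then use the fact that conformality forces the one-dimensional summands $u_{1,j}, \dots, u_{p,j}$ to share a common sign, which in turn makes their sum collapse to a sum of absolute values.

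First I would fix a coordinate $j \in \{1, \dots, m\}$ and analyze the quantities $v_j$ and $u_{1,j}, \dots, u_{p,j}$. By the definition of conformality, $u_{i,j} v_j \ge 0$ for every $i$, so either $v_j = 0$ or every nonzero $u_{i,j}$ has the same sign as $v_j$. In the degenerate case $v_j = 0$, the bound $\abs{u_{i,j}} \le \abs{v_j}$ forces $u_{i,j} = 0$ for every $i$, so both contributions to coordinate $j$ vanish. In the nondegenerate case, all the $u_{i,j}$'s have a common sign, so $\sum_i \abs{u_{i,j}} = \bigl|\sum_i u_{i,j}\bigr| = \abs{v_j}$.

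Next, I would compare $\sum_i u_{i,j}^2$ to $v_j^2$ using the elementary identity
\[
v_j^2 \;=\; \Bigl(\sum_{i=1}^{p} \abs{u_{i,j}}\Bigr)^{\!2} \;=\; \sum_{i=1}^{p} u_{i,j}^2 \;+\; 2 \sum_{1 \le i < k \le p} \abs{u_{i,j}}\, \abs{u_{k,j}} \;\ge\; \sum_{i=1}^{p} u_{i,j}^2,
\]
where the cross terms are nonnegative. This yields the desired bound at coordinate $j$. Summing the inequality over all coordinates $j = 1, \dots, m$ gives $\norm{v}^2 \ge \sum_{i=1}^{p} \norm{u_i}^2$, as claimed.

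There is no real obstacle here; the only thing to watch is that one must use both halves of the conformality definition (the sign condition to align the signs of $u_{i,j}$ with $v_j$, and the magnitude bound to handle the $v_j = 0$ case). Integrality of the $u_i$'s is in fact not used at all, so the statement extends verbatim to real vectors.
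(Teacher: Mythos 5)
Your proof is correct. The paper states this observation without proof, and your coordinate-wise argument — using the sign condition to align the $u_{i,j}$'s with $v_j$ so that $\abs{v_j}=\sum_i\abs{u_{i,j}}$, then dropping nonnegative cross terms, with the magnitude condition covering the $v_j=0$ case — is the natural one the authors evidently had in mind.
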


\noindent
We are ready to prove \cref{thm:lattice}.

\begin{proof}[Proof of \cref{thm:lattice}]
From the assumption in the theorem, 
for any circuit $u$ of $A$, we have $\norm{u} > \lambda$.
Consider a vector $v \in L(A)$ with $\norm{v} \leq \alpha \lambda$.
From \cref{lem:conformal-sum}, we can write $v$ as a conformal sum of circuits 
$$v = u_1 + u_2 + \dots + u_p.$$
We know that  $\norm{u_i} > \lambda$ for each $i$.
This together with \cref{obs:conformal} implies that 
$$(\alpha \lambda)^2 \geq \norm{v}^2 \geq \sum_{i=1}^p \norm{u_i}^2 > p \lambda^2.$$
Thus, we get that $p < {\alpha}^2$.

Note that each $u_i $ is smaller than $v$ and hence, $\norm{u_i} \leq \alpha \lambda$.
From \cref{cor:TU-circuits}, the number of circuits $u$ of $A$ with $\norm{u} \leq \alpha \lambda$ is $m^{O(\alpha^4)}$.
Thus, there can be at most $m^{O(p \alpha^4)}$ vectors of the form $u_1 + u_2 +\cdots + u_p$.
This gives a bound of $m^{O(\alpha^6)}$ on the number of vectors $v$ with $\norm{v} \leq \alpha \lambda$.
\end{proof}

\section{A strengthening of Seymour's Theorem}
\label{sec:decompositionTree}
In this section, we look at a stronger version of Seymour's Theorem,
which gives a more structured decomposition of a regular matroid.
One way to present Seymour's Theorem (\cref{thm:Seymour}) can be in terms of a decomposition tree. 
\begin{theorem}[Seymour's Theorem]
\label{thm:binary}
For every regular matroid $M$, there exists a decomposition tree  $\BT(M)$ -- 
a rooted binary tree whose every vertex is regular matroid such that
\begin{itemize}
\item every internal vertex is a $k$-sum of its two children for $k=1$, $2$ or $3$,
\item every leaf vertex is a graphic matroid, a cographic matroid or the $R_{10}$ matroid,
\item the root vertex is the matroid $M$.
\end{itemize}
\end{theorem}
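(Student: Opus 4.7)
The plan is to derive this tree reformulation directly from the constructive content of Seymour's Theorem (\cref{thm:Seymour}) by strong induction on the ground set size $\abs{E(M)}$. The statement is essentially a bookkeeping restatement: Seymour's Theorem says the decomposition exists; we just need to record the successive applications as the nodes of a binary tree.

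For the base case, if $M$ is already a graphic matroid, a cographic matroid, or $R_{10}$, I take $\BT(M)$ to be the single-vertex tree whose sole vertex (simultaneously root and leaf) is labeled by $M$. All three required properties are trivially satisfied. For the inductive step, assuming the result for all regular matroids with ground set strictly smaller than $\abs{E(M)}$, I invoke \cref{thm:Seymour} to write $M = M_1 \triangle M_2$ as a $k$-sum for some $k\in\{1,2,3\}$, where each $M_i$ is regular (regularity being preserved under $k$-sums, as recorded by Seymour). By induction each $M_i$ admits a decomposition tree $\BT(M_i)$. Then I define $\BT(M)$ by creating a new root labeled $M$ whose two children are the roots of $\BT(M_1)$ and $\BT(M_2)$. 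Every internal vertex of the resulting tree is a $k$-sum of its two children, every leaf is graphic/cographic/$R_{10}$, and the root is $M$.

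The only nontrivial point, and the main (mild) obstacle, is verifying that the induction is well-founded, i.e., that $\abs{E(M_i)} < \abs{E(M)}$ for $i=1,2$. Using $E(M) = E_1 \triangle E_2 = (E_1 \cup E_2)\setminus S$ with $S = E_1 \cap E_2$, one computes $\abs{E(M)} = \abs{E_1} + \abs{E_2} - 2\abs{S}$, so $\abs{E_i} < \abs{E(M)}$ holds if and only if $\abs{E_{3-i}} > 2\abs{S}$. For a $1$-sum, $\abs{S}=0$ and the matroids $M_1, M_2$ are each nonempty, so the bound is automatic. For a $2$-sum, $\abs{S}=1$ and the definition requires $\abs{E_i} \geq 3 > 2$. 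For a $3$-sum, $\abs{S}=3$ and the definition requires $\abs{E_i}\geq 7 > 6$. Thus in every case both children have strictly smaller ground set, so the induction terminates and the leaf-level guarantees of Seymour's Theorem are reached.

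In summary, the proof is almost mechanical: the content is entirely in \cref{thm:Seymour}, and the work here amounts to organizing its recursive application into a binary tree and checking that the size-decrement conditions in the definitions of $2$-sums and $3$-sums are exactly what is required for the induction to go through.
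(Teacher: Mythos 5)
Your proof is correct and is essentially the argument the paper leaves implicit: the paper presents \cref{thm:binary} as a rephrasing of \cref{thm:Seymour} without spelling out the induction, and you have supplied exactly the natural bookkeeping, including the one nontrivial check that the cardinality constraints in the definitions of $2$-sum ($\abs{E_i}\geq 3$) and $3$-sum ($\abs{E_i}\geq 7$) are precisely what makes the recursion well-founded via $\abs{E(M)}=\abs{E_1}+\abs{E_2}-2\abs{S}$. One small presentational caveat: the induction really needs the single-step form of Seymour's theorem (every regular matroid is graphic, cographic, $R_{10}$, or a $k$-sum of two strictly smaller \emph{regular} matroids), which is what Seymour actually proves and which also supplies the fact you cite in passing that both summands are again regular; the paper's phrasing of \cref{thm:Seymour} already packages the full recursive conclusion, so strictly speaking you are deriving the tree form from the underlying lemma rather than from the theorem as literally stated.
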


\noindent
A few observations can be made about such a decomposition tree. 
Recall from Section~\ref{sec:matroids} that for two matroids with grounds sets $E_1$ and $E_2$,
their $k$-sum is a matroid on the ground set $E_1\triangle E_2$.
\begin{figure}
\centering 
\begin{tikzpicture}
\draw (0,0) circle [radius=0.45] node {$M_1$};
\draw (2,0) circle [radius=0.45] node {$M_2$};
\draw (1,1.5) circle [radius=0.45] node {\footnotesize $2$-sum};
\draw (3,1.5) circle [radius=0.45] node {$M_3$};
\draw (2,3) circle [radius=0.45] node {\footnotesize $3$-sum};
\draw [->] (0.3,0.35) -- (0.73,1.13);
\draw [->] (1.3,1.85) -- (1.73,2.63);
\draw [->] (1.7,0.35) -- (1.27,1.13);
\draw [->] (2.7,1.85) -- (2.27,2.63);
\end{tikzpicture}
\caption{The decomposition tree of a matroid given by $(M_1 \twosum M_2) \threesum M_3$.}
\label{fig:binaryTree}
\end{figure}

\begin{observation}
\label{obs:binary}
 For any vertex $M_0$ of the decomposition tree $\BT(M)$ of a binary matroid $M$,
\begin{enumerate}
\item Each element in $M_0$  belongs to exactly one of its children matroids.  Arguing recursively, each element in $M_0$ belongs to a   unique leaf in the subtree rooted at $M_0$.
\item The ground set of $M_0$ is the symmetric difference of all the ground sets of the leaf vertices in the subtree rooted at $M_0$.
\end{enumerate}
\end{observation}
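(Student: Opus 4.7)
The plan is to prove both claims by a short induction on the height of the subtree of $\BT(M)$ rooted at $M_0$, leveraging only the definitional fact that for any $k$-sum $M_1 \triangle M_2$ with common elements $S = E_1 \cap E_2$, the resulting ground set is $E_1 \triangle E_2 = (E_1 \cup E_2) \setminus S$ (see \cref{def:sum}). Both parts are essentially bookkeeping statements about how ground-set elements propagate down the decomposition tree, so no deep structural insight is required.

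For the first claim, if $M_0 = M_1 \triangle M_2$, then every $e \in E(M_0) = E_1 \triangle E_2$ lies in exactly one of $E_1, E_2$; hence $e$ belongs to the ground set of a unique child. Applying the same reasoning one level at a time along the decomposition, $e$ is routed through a unique root-to-leaf path in the subtree rooted at $M_0$, placing it in a unique leaf, as claimed. The base case where $M_0$ is itself a leaf is trivial.

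For the second claim, we again induct on the height of the subtree rooted at $M_0$. The base case, $M_0$ a leaf, is immediate. For the inductive step, suppose $M_0 = M_1 \triangle M_2$, and let $\mathcal{L}_i$ denote the set of leaves of the subtree rooted at $M_i$ for $i=1,2$; the leaves of the subtree rooted at $M_0$ are exactly the disjoint union $\mathcal{L}_1 \cup \mathcal{L}_2$ (disjoint as tree vertices). By the induction hypothesis, $E(M_i)$ is the symmetric difference of $\{E(L) : L \in \mathcal{L}_i\}$ for each $i$. Since symmetric difference is associative and commutative (it is just coordinate-wise XOR of indicator vectors over the universe of all element labels), it follows that
\[
E(M_0) \;=\; E_1 \triangle E_2
\]
equals the symmetric difference of $\{E(L) : L \in \mathcal{L}_1 \cup \mathcal{L}_2\}$, completing the inductive step.

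There is no serious obstacle; the one subtlety worth flagging is that the ``common'' elements introduced at various $k$-sum steps do appear in the ground sets of several distinct leaves, but they always appear in an even number of such leaves along the chain of sums in which they were shared, and hence cancel out in the overall symmetric difference. This is exactly the reason the statement is phrased with symmetric difference rather than union.
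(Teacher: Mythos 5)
The paper does not actually give a proof of this observation; it is stated as an immediate consequence of the fact (recalled in the sentence just before it) that a $k$-sum $M_1 \triangle M_2$ has ground set $E_1 \triangle E_2$. Your proof fills in the details via the obvious induction on the height of the subtree rooted at $M_0$, which is exactly the argument the authors are implicitly appealing to, and both parts go through correctly: part (1) because $e \in E_1 \triangle E_2$ places $e$ in exactly one of $E_1,E_2$, and part (2) because symmetric difference is associative and commutative and the leaf sets $\mathcal{L}_1$ and $\mathcal{L}_2$ are disjoint as tree vertices. One very minor remark: your closing comment that shared elements ``appear in an even number of such leaves'' can be sharpened --- by part (1) applied to each of $M_1$ and $M_2$, any element of a common set $S = E_1 \cap E_2$ lies in exactly one leaf on each side, hence in exactly two leaves overall --- but this does not affect correctness, since your actual argument for part (2) relies only on associativity/commutativity of $\triangle$ rather than on parity bookkeeping.
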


\noindent
For example, Figure~\ref{fig:binaryTree} shows the decomposition tree for a matroid $M = (M_1 \twosum M_2) \threesum M_3$.
Note that the decomposition tree specifies an order of  the decomposition or composition, that is, 
$M$ can be obtained by first taking a $2$-sum of $M_1$ and $M_2$   and then taking a $3$-sum of 
the resulting matroid with $M_3$.
It is not clear if the $k$-sum operations are associative. 
It turns out that one can strengthen the decomposition theorem such that the $k$-sum operations involved in the composition are associative up to a certain extent. 

\subsection{Associativity of the $k$-sum}
The operations of $1$-sums are trivially associative. 
It can be shown that the  $2$-sum operations are always associative and so are $3$-sum operations in some special cases.
The following lemma gives a criterion when the associativity holds.

\begin{lemma}[Associativity of $k$-sums]
\label{lem:associative}
Let $M_1,M_2, M_3,M_4$ be binary matroids with ground sets $E_1,E_2,E_3,E_4$, respectively.
Let $M_2 = M_3 \triangle M_4$ be a $k$-sum of $M_3$ and $M_4$ for $k= 1, 2$, or $3$ with $S_1 := E_3 \cap E_4$. 
Let $M = M_1 \triangle M_2$ be a $k$-sum of $M_1$ and $M_2$ for $k=1, 2$, or $3$ with $S_2 := E_1 \cap E_2$. 
Further, we assume that the set $S_2$, which is contained in $E_2 = E_3 \triangle E_4$, is entirely contained  in $E_3$ (or in $E_4$, which is a similar case).
Then 
$$M = M_1 \triangle (M_3 \triangle M_4) = (M_1 \triangle M_3) \triangle M_4,$$
where $M_1 \triangle M_3$ is defined via the common set $S_2$ and 
$(M_1 \triangle M_3) \triangle M_4$ is defined via the common set $S_1$.
\end{lemma}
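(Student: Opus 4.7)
The plan is to verify that both $M_1 \triangle (M_3 \triangle M_4)$ and $(M_1 \triangle M_3) \triangle M_4$ are well-defined sums on the same ground set, and then use \cref{def:sum} on each side to show their families of circuits coincide. By associativity of the set-theoretic symmetric difference, both candidate matroids live on $E_1 \triangle E_3 \triangle E_4$. As a preliminary step I would verify the rearranged sums are valid $k$-sums: combining $S_2 = E_1 \cap E_2$ with $E_2 = E_3 \triangle E_4$, $S_1 = E_3 \cap E_4$, the implicit disjointness $E_1 \cap S_1 = \emptyset$, and the hypothesis $S_2 \subseteq E_3$, one obtains $S_1 \cap S_2 = \emptyset$, $S_2 \cap E_4 = \emptyset$, $E_1 \cap E_4 = \emptyset$, and $E_1 \cap E_3 = S_2$. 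The circuit/cocircuit and size conditions needed for the reassociated sums to be $k$-sums along $S_2$ and $S_1$ are inherited directly from the original decompositions.

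The key technical step is a uniform parametrization of disjoint unions of circuits of each intermediate sum. Combining \cref{prop:sum-circuits} (or equivalently \cref{fac:symm-diff-circuits}) with \cref{fac:disjointCircuits}, a set is a disjoint union of circuits of $M_3 \triangle M_4$ iff it can be written as $B_3 \triangle B_4$ for disjoint unions of circuits $B_3, B_4$ of $M_3, M_4$ satisfying the compatibility $B_3 \cap S_1 = B_4 \cap S_1$ (equivalently, $B_3 \triangle B_4$ is disjoint from $S_1$, hence lies in $E_3 \triangle E_4$). The analogous statement holds for $M_1 \triangle M_3$, parametrized by $F_1, F_3$ with $F_1 \cap S_2 = F_3 \cap S_2$.

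Substituting these parametrizations into \cref{def:sum} gives a single, symmetric description: a subset of $E_1 \triangle E_3 \triangle E_4$ arises as ``$C_1 \triangle C_2$'' on the LHS iff it equals $A_1 \triangle B_3 \triangle B_4$ for disjoint unions of circuits $A_1, B_3, B_4$ of $M_1, M_3, M_4$ satisfying
\[
B_3 \cap S_1 = B_4 \cap S_1 \qquad \text{and} \qquad A_1 \cap S_2 = B_3 \cap S_2,
\]
where the second equation uses $S_2 \cap E_4 = \emptyset$ to drop the $B_4$-term forced by the ambient subset condition. The identical substitution on the RHS, this time using $E_1 \cap S_1 = \emptyset$ to drop the $F_1$-term in the $S_1$-constraint, produces the same pair of compatibilities with $(F_1, F_3, A'_4)$ playing the role of $(A_1, B_3, B_4)$. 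Hence the families of candidate ``$C_1 \triangle C_2$''-sets coincide verbatim, so their inclusion-minimal nonempty members do too, yielding equal circuit families and thus equal matroids. The main obstacle is bookkeeping the various $S_i$-intersections; the hypothesis $S_2 \subseteq E_3$ is precisely what makes the two sides' compatibility constraints match, since it forces $S_2 \cap E_4 = \emptyset$ and prevents an extra $B_4 \cap S_2$ term on the LHS that would have no counterpart on the reassociated side.
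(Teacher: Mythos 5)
Your proposal is correct, and it takes a cleaner route than the paper's proof, though the core ingredients are the same (the ground-set intersection bookkeeping, \cref{fac:disjointCircuits}, and the closure of disjoint unions of circuits under symmetric difference in a binary matroid). The paper instead argues one direction at a time: it picks a circuit $C$ of $M_1 \triangle (M_3 \triangle M_4)$, uses \cref{prop:sum-circuits} to decompose $C = C_1 \triangle C_2$ with $C_2 = C_3 \triangle C_4$ as actual circuits, observes $S_2 \cap C_1 = S_2 \cap C_3$ from $S_2 \subseteq E_3$, recombines $C_1 \triangle C_3$ into a disjoint union of circuits of $M_1 \triangle M_3$ via \cref{fac:disjointCircuits}, and then adds $C_4$ to conclude $C$ is a disjoint union of circuits of the right-hand side; it repeats symmetrically and finishes with a minimality argument. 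Your proof instead parametrizes all disjoint unions of circuits of each intermediate sum and shows the two candidate families, as sets of subsets of $E_1 \triangle E_3 \triangle E_4$ with constraints $B_3 \cap S_1 = B_4 \cap S_1$ and $A_1 \cap S_2 = B_3 \cap S_2$, are identical verbatim; equality of circuits is then immediate rather than requiring a two-sided disjoint-union-plus-minimality closing step, and the ``trivial'' case the paper skips (circuit contained in $M_1$ or in $M_3 \triangle M_4$) is subsumed automatically since empty disjoint unions are allowed. Two small points to tighten: you should state once that disjoint unions of circuits in a binary matroid form a $GF(2)$-vector space (so the ``iff'' in your parametrization of disjoint unions of circuits of $M_3 \triangle M_4$ really holds), and you should note explicitly that the implicit disjointness $E_1 \cap S_1 = \emptyset$ is an unstated hypothesis of the lemma (the paper's proof also relies on it, remarking ``Since $S_1$ is disjoint from $E_1$'').
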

\begin{proof}
We will show that the sets of circuits of the two matroids 
$M_1 \triangle (M_3 \triangle M_4)$ and $ (M_1 \triangle M_3) \triangle M_4$ are the same.
This would imply that the two matroids are the same.
Consider a circuit $C$ of $M_1 \triangle (M_3 \triangle M_4)$. 
From \cref{prop:sum-circuits}, there are two possibilities for the circuit $C$.
First is when $C$ is a circuit of $M_1$ or $M_3 \triangle M_4$ that avoids the common elements $S_2$.
We skip this easy case and only consider the other possibility which is non-trivial.
In the other possibility,
$C$ must be of the form $C_1 \triangle C_2$, where $C_1$ and $C_2$ are circuits in $M_1$ and $M_3 \triangle M_4$, respectively
and  $S_2 \cap C_1  = S_2 \cap C_2$. 
Similarly for $C_2$, there exist circuits $C_3$ and $C_4$ of $M_3$ and $M_4$ respectively
such that $C_2 = C_3 \triangle C_4$.

 Since $S_2$ is contained entirely in $E_3$ we get that 
 $$S_2 \cap C_1 = S_2 \cap C_2 =S_2 \cap C_3.$$
Thus, $C_1 \triangle C_3$ is a subset of $E_1 \triangle E_3$ and hence, is a disjoint union of circuits of $M_1 \triangle M_3$, from \cref{fac:disjointCircuits}.
Since $C_4$ is a circuit of $M_4$, it follows that $(C_1 \triangle C_3) \triangle C_4 $
is a disjoint union of circuits in $(M_1 \triangle M_3) \triangle M_4$, again from \cref{fac:disjointCircuits}.
But,
 $$(C_1 \triangle C_3) \triangle C_4  = C_1 \triangle (C_3 \triangle C_4) = C.$$ 
Thus, $C$ is a disjoint union of circuits in $(M_1 \triangle M_3) \triangle M_4$.

The other direction is similar. 
Consider a circuit $C$ of $(M_1 \triangle M_3) \triangle M_4$. 
In the non-trivial case, the circuit $C$ must be of the form $C' \triangle C_4$,
where $C'$ and $C_4$ are circuits of $(M_1 \triangle M_3)$ and $M_4$, respectively,
with $S_1 \cap C' = S_1 \cap C_4$
(\cref{prop:sum-circuits}).
Similarly,
$C' = C_1 \triangle C_3$, where $C_1$ and $C_3$ are circuits in $M_1$ and $M_3$, respectively.
Since $S_1$ is disjoint from $E_1$, it must be that $S_1 \cap C' = S_1 \cap C_3$.
Thus, $C_3 \triangle C_4$ is a subset of $E_3 \triangle E_4$ and hence, is a disjoint union of circuits of $M_3 \triangle M_4$ (\cref{fac:disjointCircuits}).
Similarly, since $C_1$ is a circuit in $M_1$, it follows that $C_1 \triangle (C_3 \triangle C_4)$
is a disjoint union of circuits in $M_1 \triangle (M_3 \triangle M_4)$.
But, 
$$C_1 \triangle (C_3 \triangle C_4) = (C_1 \triangle C_3) \triangle C_4 = C.$$
Thus, $C$ is a disjoint union of circuits in $M_1 \triangle (M_3 \triangle M_4)$.

We have shown that a circuit of one matroid is a disjoint union of circuits in the other matroid and vice-versa.
Consequently, by the minimality of circuits, it follows that their sets of circuits must be the same. 
\end{proof}

\noindent
To summarize the above lemma, the sequence of two $k$-sums $M_1 \triangle (M_3 \triangle M_4)$ is associative,
when the common sets involved in the $k$-sum operations are completely contained in the starting matroids. 
Note that this is always  true for a $2$-sum operation since the common set has a single element in this case. 
However, this need not be always true in the case of a $3$-sum.
It is possible that the common set $S_2$ between $M_1$ and $M_3 \triangle M_4$
has elements in both $M_3$ and $M_4$.
Dinitz and Kortsarz~\cite{DK13} call such a set $S_2$ as a bad sum-set.
More generally, they define a notion of \emph{good} or \emph{bad} for a decomposition tree of a regular matroid
obtained from Seymour's Theorem.

\begin{definition}[Good decomposition tree \cite{DK13}]
The decomposition tree $\BT(M)$ of a regular matroid $M$, as in \cref{thm:binary}, is said to be good
if for every internal vertex $M_0$ of the tree $\BT(M)$, which is a $k$-sum of its two children $M_1$ and $M_2$,
the common set  $ S $ between $M_1$ and $M_2$ is completely contained in one of the leaf vertices of the subtree rooted at $M_1$
and also in one of the leaf vertices of the subtree rooted at $M_2$.
\end{definition}

\noindent
As we  see later, if we have a good decomposition tree of a binary matroid, then 
the $k$-sum operations involved in the decomposition have 
 a suitably defined associative property.
 Dinitz and Kortsarz~\cite{DK13} showed how to modify a given decomposition tree of a regular matroid to get a good decomposition tree.
 To do this, their basic step involves  `moving' an element from one matroid to another matroid. 
 Consider the above discussed example of a matroid given by $M_1 \triangle (M_3 \triangle M_4)$ 
 and assume that the common set of elements between $M_1$ and $M_3 \triangle M_4$, 
 i.e., $E_1 \cap (E_3 \triangle E_4)$, has elements in both $E_3$ and $E_4$.
 In this case, they delete an element from one of the matroids, say $M_3$, 
 and add an element in $M_4$ parallel%
 \footnote{Two elements $a$ and $b$ of a matroid are in parallel if $\{a,b\}$ is a circuit.} 
  to an existing element, to create new matroids $M'_3$ and $M'_4$ 
 such that  $$M_1 \triangle (M_3 \triangle M_4) =  M_1 \triangle (M'_3 \triangle M'_4).$$
 Doing this repeatedly with the decomposition tree, starting from the leaves and going up to the root vertex,
 they obtain the desired decomposition tree.  
 Formally,  \cite[Lemma 3.1]{DK13} implies the following  stronger version of Seymour's Theorem. 

\begin{lemma}
\label{lem:goodDecomposition}
For any regular matroid, there is a good decomposition tree such that
each of the leaf vertices is a graphic, cographic or the $R_{10}$ matroid. 
\end{lemma}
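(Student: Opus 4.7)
The plan is to start from the decomposition tree $\BT(M)$ guaranteed by \cref{thm:binary} and iteratively repair any bad internal vertex by local element-shifting operations that preserve both the composed matroid $M$ and the classes of the leaves. Since $1$-sums and $2$-sums have common sets of sizes $0$ and $1$ respectively, they are automatically good; hence the only obstacle lies in $3$-sum internal vertices whose three common elements are scattered across more than one leaf of one of the two subtrees.

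The basic repair step at an internal $3$-sum $M_0 = M_1 \triangle M_2$ with common set $S = \{e_1,e_2,e_3\}$ is as follows. By \cref{obs:binary}, each $e_i$ belongs to a unique leaf $L_i$ of the subtree rooted at $M_1$. If these three leaves are not all equal, I would ``move'' an offending element $e_j$ from its current leaf $L_j$ to a chosen target leaf $L$ in the same subtree: pick any element $f \in L$, add a new element $e_j'$ parallel to $f$ inside $L$, and simultaneously delete $e_j$ from $L_j$. By \cref{fac:closed}, the graphic, cographic, and $R_{10}$ classes are closed under both additions of parallel elements and deletions, so the modified leaves remain admissible building blocks. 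Moreover, at each internal vertex on the path between $L_j$ and $L$ the $k$-sum is stable under this local swap (a deletion on one side compensated by the addition of a parallel copy on the other), so the composed matroid continues to equal $M$. Repeating the move for each misplaced element consolidates $S$ into a single leaf on the $M_1$-side, and analogously on the $M_2$-side, making $M_0$ good; at this point \cref{lem:associative} certifies that the resulting decomposition is legitimate as an associative composition.

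I would process internal vertices bottom-up, so that deeper common sets are already consolidated before any shallower vertex is touched. The crux is checking that moves made higher up do not break goodness established deeper. The key invariant is that a move initiated at $M_0$ only shifts elements of $S(M_0) = E(M_1) \cap E(M_2)$ between leaves of one subtree below $M_0$, while for any deeper internal vertex $M_0'$ inside that subtree, the common set $S(M_0')$ is disjoint from $S(M_0)$ as sets of ground elements; thus shifting members of $S(M_0)$ cannot relocate any element of $S(M_0')$. The main obstacle, and the delicate bookkeeping point, is the $R_{10}$ case: a deletion from an $R_{10}$ leaf is not in general an $R_{10}$ matroid, but by \cref{fac:R10} every single-element deletion of $R_{10}$ is graphic, so whenever a move touches an $R_{10}$ leaf I would simply re-label it as graphic. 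Termination is immediate, since each move strictly decreases the total count, summed over all internal vertices, of leaves that host an element of some common set, a nonnegative integer bounded by $3$ times the number of internal vertices of $\BT(M)$.
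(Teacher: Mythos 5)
Your high-level plan mirrors the Dinitz--Kortsarz procedure that the paper invokes as a black box, and the parts of your argument that concern leaf classes are fine: closure of graphic and cographic matroids under deletion and parallel addition (\cref{fac:closed}) and re-labeling an $R_{10}$ leaf as graphic after a deletion (\cref{fac:R10}) are exactly the checks the paper makes. You are also right that $1$-sums and $2$-sums are automatically good, so only $3$-sums need repairing. But the crucial step of your proposal --- the ``move'' --- is not correctly specified, and as written it would not preserve the matroid.

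You propose: ``pick any element $f \in L$, add a new element $e_j'$ parallel to $f$ inside $L$, and simultaneously delete $e_j$ from $L_j$.'' Adding a parallel copy of an \emph{arbitrary} element $f$ and deleting some other element $e_j$ elsewhere in the tree will, in general, change the composed matroid $M$. The correct choice of $f$ is forced: if $M_0$ has children $M_1$ and $M_2 = M_3 \triangle M_4$ with $S_2 = E_1 \cap E_2$ split between $E_3$ and $E_4$, then $S_2$, being a circuit of $M_2$, decomposes via \cref{prop:sum-circuits} as $C_3 \triangle C_4$ where $C_3, C_4$ share a single element $s$ of the inner common set $S_1 = E_3 \cap E_4$. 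The element one adds on the receiving side must be placed parallel to $s$ (or, more generally, the move is dictated by what $C_3$ and $C_4$ look like), and showing that this specific deletion/addition pair leaves $M_3 \triangle M_4$ --- and hence $M$ --- unchanged requires an argument; it is precisely the content of \cite[Lemma~3.1]{DK13}. Your proposal asserts rather than proves both the choice of $f$ and the preservation claim (``the $k$-sum is stable under this local swap''), which is where the real work lies. The paper deliberately avoids reconstructing that argument: it cites [DK13, Lemma~3.1] for the existence of a good decomposition tree whose leaves are obtained from the original leaves by deletions and parallel additions, and then only verifies that those two operations keep each leaf graphic, cographic, or (after relabeling) $R_{10}$.
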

 \begin{proof}
 \cite[Lemma 3.1]{DK13} says that for any regular matroid $M$ and a given decomposition tree $\BT(M)$ of $M$,
 one can construct a good decomposition tree with the same tree structure, but  each leaf vertex $L$ is possibly replaced with 
 another matroid obtained from $L$ by deleting some elements  in it and/or adding elements parallel to some elements in it.
Recall that the classes of graphic and cographic matroids are closed under deletion of an element or addition of an element in parallel (\cref{fac:closed}).
Since the $R_{10}$ matroid does not have any circuit with $3$ elements, the procedure of Dinitz and Kortsarz~\cite{DK13} can only delete an
element from $R_{10}$ and not add one. 
The matroid obtained by deleting some elements from $R_{10}$ is a graphic matroid (\cref{fac:R10}).  
Thus, all the leaf vertices remain graphic, cographic or $R_{10}$.
 \end{proof}

\subsection{Unordered decomposition tree}
 Next, we define an \emph{unordered decomposition tree (UDT)}, 
 which allows us to decompose a matroid in many different ways.
We  show that  we can obtain an unordered decomposition tree of a matroid from its  good decomposition tree.

Let $T$ be a tree with its vertex set $V(T)$ such that each vertex $v \in V(T)$ 
has a corresponding binary matroid $M_v$ with the ground set $E_v$.
Further, for any two vertices $u$ and $v$ of the tree we have the following.
\begin{equation}
\label{eq:tree-edge}
\abs{E_u \cap E_v} = \begin{cases}
			0, 1 \text{ or } 3 & \text{ if there is an edge between } u \text{ and } v,\\ 
			 0 & \text{ otherwise}.
\end{cases}
\end{equation}
In particular, this means that any element is a part of at most two ground sets. 
For any subtree $T'$ of this  tree $T$, we define a binary matroid $M_{T'}$ with its ground being
$E_{T'} = \triangle_{u \in {V(T')}} E_u$. 
It is defined recursively as follows:
\begin{itemize}
\item if $T'$ is a vertex, say $v$, then $M_{T'} = M_v$.
\item Otherwise, let $e = (u,v)$ be an edge in the tree $T'$. Let $T_1$ and $T_2$
be the two subtrees obtained by removing the edge $e$ from $T'$. 
From (\ref{eq:tree-edge}), it follows that $E_{T_1} \cap E_{T_2} = E_u \cap E_v$. 
Define 
$$M_{T'}= M_{T_1} \triangle M_{T_2}.$$
\end{itemize}

\noindent
Here, the sum $M_{T_1} \triangle M_{T_2}$ is a $1$-sum, $2$-sum or a $3$-sum depending on the size of $E_{T_1} \cap E_{T_2} ={E_u \cap E_v}$.
At first, it is not clear if $M_{T'}$ (and $M_{T}$) is uniquely defined since one can pick any edge $e$ from $T'$ 
and get different subtrees (see \cref{fig:unorderedTree}). In the following, we argue that $M_T$ is indeed uniquely defined.
For this we  need the associativity of the $k$-sum operations proved in \cref{lem:associative}.

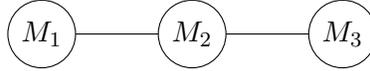
\begin{figure}
\centering 
\begin{tikzpicture}
\draw (0,0) circle [radius=0.45] node {$M_1$};
\draw (2,0) circle [radius=0.45] node {$M_2$};
\draw (4,0) circle [radius=0.45] node {$M_3$};
\draw (0.45,0) -- (1.55,0);
\draw (2.45,0) -- (3.55,0);
\end{tikzpicture}
\caption{The unordered decomposition tree (UDT) representing the matroid given by $(M_1 \triangle M_2) \triangle M_3 = M_1 \triangle (M_2 \triangle M_3 )$.}
\label{fig:unorderedTree}
\end{figure}

\begin{claim}
For a tree $T$ as above, the matroid $M_T$ is uniquely defined.
\end{claim}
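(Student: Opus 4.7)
The plan is to prove the claim by strong induction on $|V(T)|$. The base case $|V(T)| = 1$ is immediate since there is no edge to choose and $M_T = M_v$ for the unique vertex $v$. For the inductive step with $|V(T)| \geq 2$, it suffices to show that any two choices of splitting edge produce the same matroid; the induction hypothesis guarantees that the recursively computed matroids on all strict subtrees are well-defined.

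Fix distinct edges $e, e' \in T$. Let $T_A, T_B$ be the subtrees obtained by removing $e = (u_1, u_2)$ with $u_1 \in T_A$ and $u_2 \in T_B$; without loss of generality $e' \in T_A$. Write $e' = (a, b)$ with $a \in T_A^1$, $b \in T_A^2$, where $T_A^1, T_A^2$ are the two components of $T_A - e'$, arranged so that $u_1 \in T_A^2$. Removing $e'$ from $T$ instead yields two components: $T_A^1$ and a subtree $T^\star$ whose vertex set is $V(T_A^2) \cup V(T_B)$ (joined through $e$). By the induction hypothesis applied to the strictly smaller subtrees $T_A$ and $T^\star$, we obtain $M_{T_A} = M_{T_A^1} \triangle M_{T_A^2}$ (splitting $T_A$ on $e'$) and $M_{T^\star} = M_{T_A^2} \triangle M_{T_B}$ (splitting $T^\star$ on $e$). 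Consequently, splitting $T$ on $e$ first produces $(M_{T_A^1} \triangle M_{T_A^2}) \triangle M_{T_B}$, while splitting on $e'$ first produces $M_{T_A^1} \triangle (M_{T_A^2} \triangle M_{T_B})$.

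To conclude that these two matroids coincide, I will invoke Lemma \ref{lem:associative} with $M_1 = M_{T_A^1}$, $M_3 = M_{T_A^2}$, and $M_4 = M_{T_B}$. The main obstacle is to verify the containment hypothesis of that lemma, namely that $S_2 := E_{T_A^1} \cap (E_{T_A^2} \triangle E_{T_B}) \subseteq E_{T_A^2}$. For this I plan to exploit condition \eqref{eq:tree-edge}, which says that two vertices of $T$ with intersecting ground sets must be adjacent in $T$ (and thus any element lies in at most two ground sets, which must come from an adjacent pair). Consequently, any element that appears both in some vertex of $T_A^1$ and in some vertex outside $T_A^1$ must correspond to the unique edge of $T$ crossing the boundary of $T_A^1$, namely $e' = (a, b)$. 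This forces $S_2 \subseteq E_a \cap E_b \subseteq E_b \subseteq E_{T_A^2}$, so the hypothesis is satisfied, Lemma \ref{lem:associative} gives the required equality, and the induction closes.
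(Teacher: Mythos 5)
Your argument follows essentially the same strategy as the paper's proof: strong induction together with a single application of \cref{lem:associative}, with the containment hypothesis verified via condition~(\ref{eq:tree-edge}). The paper restricts to a pair of \emph{adjacent} edges and then appeals to transitivity along a path in $T$ to reach arbitrary pairs, whereas you treat an arbitrary pair of edges $e,e'$ directly by splitting off $T_A^1$; this is a mild streamlining of the same idea, and you also make explicit the verification that $S_2 \subseteq E_{T_A^2}$, which the paper states rather tersely. One small slip in the last step: the intermediate inclusion $E_b \subseteq E_{T_A^2}$ is false in general, since elements of $E_b$ that are shared with another vertex of $T_A^2$ cancel in the symmetric difference $E_{T_A^2} = \triangle_{v \in V(T_A^2)} E_v$. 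The correct conclusion, which your preceding sentence already supports, is that any $x \in E_a \cap E_b$ lies in no ground set other than $E_a$ and $E_b$ by condition~(\ref{eq:tree-edge}), hence appears exactly once in $\triangle_{v \in V(T_A^2)} E_v$, giving $S_2 \subseteq E_a \cap E_b \subseteq E_{T_A^2}$ directly without passing through $E_b$.
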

\begin{proof}
We want to show that $M_{T}$ is the same matroid whatever be our choice of edge $e$ to decompose it into two subtrees. 
We argue inductively.
We  assume that for any subtree $T'$ of tree $T$, the matroid $M_{T'}$ is uniquely defined.
We first consider two neighboring edges $e_1$ and $e_2$ in $T$, and the subtrees obtained after removing them.
Let $e_1 = (u,v)$ 
and $T_1$ and $T_2$ be the two subtrees obtained by removing $e_1$ such that 
$u$ is in $T_1$ and $v$ is in $T_2$.
Similarly, let $e_2 = (v,w)$ and $T_3$ and $T_4$ be such subtrees with 
$v$  in $T_3$ and $w$  in $T_4$.
We  show that 
\begin{equation}
\label{eq:T1T2T3T4}
M_{T_1} \triangle M_{T_2} =  M_{T_3} \triangle M_{T_4}.
\end{equation}

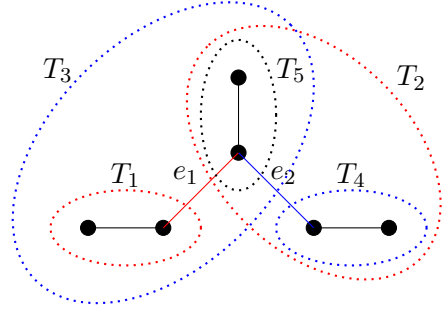
\begin{figure}
\centering 
\begin{tikzpicture}

\draw[black,fill=black] (0,0) circle (0.1);
\draw[black,fill=black] (1,0) circle (0.1);
\draw[black,fill=black] (2,1) circle (0.1);
\draw[black,fill=black] (3,0) circle (0.1);
\draw[black,fill=black] (4,0) circle (0.1);
\draw[black,fill=black] (2,2) circle (0.1);

\draw (0,0) -- (1,0);
\draw [red] (1,0) -- (2,1); 
\draw [blue] (2,1) -- (3,0);
\draw (3,0) -- (4,0);
\draw (2,1) -- (2,2);

\node at (1.3,0.7) {$e_1$};
\node at (2.6,0.7) {$e_2$};

 \draw[dotted,thick,red] (0.5,0) ellipse (1 and 0.5);
  \draw[dotted,thick,blue] (3.5,0) ellipse (1 and 0.5);
 \draw[dotted,thick,red, rotate around={135:(3,1)}] (3,1) ellipse (2 and 1.3);
 \draw[dotted,thick,blue, rotate around={45:(1,1)}] (1,1) ellipse (2.4 and 1.5);
 
 \node at (0.5,0.7) {$T_1$};
 \node at (3.5,0.7) {$T_4$};
 \node at (-0.4,2.1) {$T_3$};
 \node at (4.3,2) {$T_2$};

 \draw[dotted,thick] (2,1.5) ellipse (0.5 and 1);
 \node at (2.7,2.1) {$T_5$};
 
\end{tikzpicture}
\caption{Obtaining two subtrees $T_1$ and $T_2$ by removing an edge $e_1 = (u,v)$. 
Obtaining two subtrees $T_3$ and $T_4$ by removing an edge $e_2= (v,w)$.}
\label{fig:subtrees}
\end{figure}

\noindent
Let $S_1$ be the set  $E_{T_1} \cap E_{T_2} = E_u \cap E_v$ and
$S_2$ be the set $E_{T_3} \cap E_{T_4} = E_v \cap E_w$.
Observe that $T_1$ is a subtree of $T_3$ and 
$T_4$ is a subtree of $T_2$ (see \cref{fig:subtrees}). 
In particular, when we remove $e_2$ from $T_2$, one of the two subtrees we get is $T_4$.
Let the other one be $T_5$.
Since $M_{T_2}$ is well defined by the assumption it can be written as $M_{T_5} \triangle M_{T_4}$
(note that $E_{T_5} \cap E_{T_4} = E_v \cap E_w =S_2$). 
Thus, we can write 
$$M_{T_1} \triangle M_{T_2} = M_{T_1} \triangle (M_{T_5} \triangle M_{T_4}).$$

\noindent
Since the commons set between $E_{T_1}$ and $E_{T_5} \triangle E_{T_4}$ 
is $E_u \cap E_v$, which is completely contained in $E_{T_5}$, we can use \cref{lem:associative}
to get 
\begin{equation}
\label{eq:subtree-exchange}
M_{T_1} \triangle M_{T_2} = M_{T_1} \triangle (M_{T_5} \triangle M_{T_4}) =  (M_{T_1} \triangle M_{T_5}) \triangle M_{T_4}.
\end{equation}
Note that $T_1$ and $T_5$ are the subtrees we get by removing $e_1$ from $T_3$.
Thus,  $M_{T_1} \triangle M_{T_5}$ is the same as $M_{T_3}$.
Combining this with (\ref{eq:subtree-exchange}) 
we get (\ref{eq:T1T2T3T4}).

We have shown that we get the same matroid whether we remove $e_1$ or $e_2$, in  the case when $e_1$ and $e_2$ are neighboring edges.
It immediately follows that for any choice of edge $e$ in the tree $T$, we get the same matroid 
when we take the sum of the two matroids associated with the two subtrees in $T-\{e\}$.
\end{proof}
\noindent
We show that for any regular matroid, we can get such an unordered decomposition tree
from its good decomposition tree. 

\begin{lemma}
\label{lem:uDecomposition}
For any regular matroid $M$, there is an unordered decomposition tree $T$ such that $M=M_T$, 
where each vertex of the tree $T$ is a graphic, cographic or the $R_{10}$ matroid.
\end{lemma}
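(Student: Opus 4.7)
My plan is to build the desired unordered decomposition tree directly from a good decomposition tree of $M$, using the associativity machinery already developed. First, I would apply Lemma \ref{lem:goodDecomposition} to obtain a good decomposition tree $\BT(M)$ all of whose leaves are graphic, cographic, or $R_{10}$. I then proceed by induction on the number of internal nodes of $\BT(M)$. In the base case, $\BT(M)$ consists of a single leaf, so $M$ itself is graphic, cographic, or $R_{10}$, and the UDT is the trivial one-vertex tree labeled by $M$.

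For the inductive step, let the root of $\BT(M)$ be a $k$-sum $M = M_1 \triangle M_2$ with common set $S = E_{M_1} \cap E_{M_2}$. The two subtrees of $\BT(M)$ rooted at $M_1$ and $M_2$ are themselves good decomposition trees of $M_1$ and $M_2$ (goodness is an ``every internal node'' condition, hence inherited), so by the inductive hypothesis there exist UDTs $T_1$ and $T_2$ with $M_{T_1} = M_1$ and $M_{T_2} = M_2$, and whose vertices are precisely the leaves of $\BT(M_1)$ and $\BT(M_2)$ respectively (still graphic, cographic, or $R_{10}$). By the defining goodness property, $S$ is contained in a single leaf $v_1$ of $\BT(M_1)$ and in a single leaf $v_2$ of $\BT(M_2)$. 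I then form $T$ from the disjoint union $T_1 \dotcup T_2$ by adding one new edge joining $v_1$ and $v_2$; this is manifestly a tree whose vertices are all graphic, cographic, or $R_{10}$.

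Two things must be verified. First, $T$ satisfies the edge condition (\ref{eq:tree-edge}). The new edge contributes $|E_{v_1} \cap E_{v_2}| = |S| \in \{0,1,3\}$; intersection sizes among pairs already in $T_1$ or already in $T_2$ are inherited from the induction. For any cross pair $u \in V(T_1)$, $u' \in V(T_2)$ with $(u, u') \neq (v_1, v_2)$, any shared element of $E_u$ and $E_{u'}$ would lie in both $E_{M_1}$ and $E_{M_2}$, hence in $S$; but goodness forces $S \subseteq E_{v_1} \cap E_{v_2}$ only, so $E_u \cap E_{u'} = \emptyset$, as required. Second, I must show $M_T = M$. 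Since the Claim immediately preceding this lemma guarantees that $M_T$ is well-defined independently of the edge used to split $T$, I may compute it by removing the freshly added edge $(v_1, v_2)$: this splits $T$ back into $T_1$ and $T_2$, yielding $M_T = M_{T_1} \triangle M_{T_2} = M_1 \triangle M_2 = M$.

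The main obstacle is the bookkeeping in the first verification, namely controlling exactly which pairs of vertices of $T$ share ground-set elements. This is precisely the purpose the goodness property was introduced for: it confines the common set of every internal $k$-sum to a single leaf on each side, which in turn forces the element-sharing pattern on the leaves to match exactly the edges of the UDT one would like to construct. Once this is in place, the recursive definition of $M_{T'}$ together with the uniqueness Claim does all the remaining work.
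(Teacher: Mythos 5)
Your construction is exactly the one the paper uses: apply Lemma~\ref{lem:goodDecomposition} to obtain a good decomposition tree, take its leaves as the vertices of the UDT, and for each internal $k$-sum add an edge between the two leaves that goodness places the common set $S$ in; your induction just makes the paper's recursion explicit, and the computation $M_T = M_{T_1}\triangle M_{T_2} = M$ via the well-definedness claim is the same. One small slip in your cross-pair check of condition~(\ref{eq:tree-edge}): for $u\in V(T_1)$, $u'\in V(T_2)$ an element of $E_u\cap E_{u'}$ need not lie in $E_{M_1}\cap E_{M_2}$, since $E_u$ can contain elements cancelled by symmetric differences internal to $T_1$ (hence absent from $E_{M_1}$); the clean argument is that such an element already lies in two leaves of $T_1$ by the inductive hypothesis, so appearing also in a leaf of $T_2$ would put it in three ground sets, violating the restriction implicit in~(\ref{eq:tree-edge}).
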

\begin{proof}
From \cref{lem:goodDecomposition},
we get a good decomposition tree $\BT(M)$ of $M$, where each leaf vertex is a graphic, cographic, or the $R_{10}$ matroid.
The unordered decomposition tree $\UT(M)$ for the matroid $M$ has its vertices
 corresponding to the leaf vertices of $\BT(M)$.  
The edges of $\UT(M)$  correspond to the internal vertices of $\BT(M)$, and thus, correspond to $k$-sum operations on two intermediate matroids.
For any vertex $M_0$ of the tree $\BT(M)$, we construct its unordered decomposition tree $\UT(M_0)$ recursively.
\begin{itemize}
\item If $M_0$ is a leaf vertex of $\BT(M)$, then its unordered decomposition tree $\UT(M_0)$ is defined as a vertex labeled with $M_0$.
\item If $M_0$ is an internal vertex of $\BT(M)$, then suppose it is a $k$-sum of its two children $M_1$ and $M_2$ 
with $S$ being the set of common elements in $M_1$ and $M_2$.
By the definition of a good decomposition tree, $S$ is contained in a unique leaf $L_1$ of the subtree rooted at $M_1$ and also in a unique leaf $L_2$ of the subtree rooted at $M_2$.
 The tree $\UT(M_0)$ is obtained by taking the union of $\UT(M_1)$ and $\UT(M_2)$ and adding an edge between the vertex $L_1$ of $\UT(M_1)$ and the vertex $L_2$ of $\UT(M_2)$.
\end{itemize}
In this construction, it is clear that we add an edge between two vertices $L_1$ and $L_2$ if and only if they contain a set of common elements. 
Moreover, by the definitions of a decomposition tree and an unordered decomposition tree, it follows that 
$\UT(M)$ corresponds to the given matroid $M$.
\end{proof}


\section{Bounding the number of circuits in a regular matroid}
\label{sec:num-circuits}

This section is devoted to a proof of \cref{thm:num-circuits}.
Using the unordered decomposition tree of a regular matroid from the previous section,
we bound the number of near-minimum circuits in a regular matroid $M$.
Recall that in \cref{thm:num-circuits} we assume there are no circuits in $M$ of size at most $r$.

Let $T$ be the unordered decomposition tree from \cref{lem:uDecomposition} such that $M= M_T$.
Let $M_v$ be matroid corresponding to a vertex $v$ in $T$ with ground set $E_v$.
Recall that $E = \triangle_{v \in V(T)} E_v$.
Recall from the definition of an unordered decomposition tree 
that if $T_1$ and $T_2$ are subtrees of $T$ obtained by removing an edge
then we can write $M = M_{T_1} \triangle M_{T_2}$.
By \cref{prop:sum-circuits}, any circuit $C$ of $M$ can be written as $C_{T_1} \triangle C_{T_2}$ such that one of the following holds
\begin{itemize}
\item $C_{T_2}$ is empty and $C_{T_1}$ is a circuit of $M_{T_1}$ that avoids the common elements $E_{T_1} \cap E_{T_2}$,
\item $C_{T_1}$ is empty and $C_{T_2}$ is a circuit of $M_{T_2}$ that avoids the common elements $E_{T_1} \cap E_{T_2}$,
\item $C_{T_1}$ and $C_{T_2}$ are circuits of $M_{T_1}$ and $M_{T_2}$, respectively such that
 both $C_{T_1}$ and $C_{T_2}$ contain a single common element coming from $E_{T_1} \cap E_{T_2}$.
\end{itemize}
%
Recursively, one can define $C_{T'}$ for any subtree $T'$  of $T$, including the case when $T'$ is a vertex.
And we can write $$C = \triangle_{v \in V(T)} C_v.$$
Note that $C_v$ can be empty for  some vertices $v$.
%
See Figure~\ref{fig:circuits-tree}.
The following observation follows.
\begin{observation}[Projections of a circuit on the vertices of UDT]
\label{obs:CuCv}
For any circuit $C$ of $M$,
\begin{itemize}
\item the set of vertices $u$ of the tree $T$ such that $C_u \neq \emptyset$, form a connected subgraph of $T$,
\item for any two adjacent vertices $u$ and $v$ in the tree $T$, if $C_u$ and $C_v$ are non-empty then
  $C_u$ and $C_v$ both contain an element from $E_u \cap E_v$, which is the same for them,
 \item for any two non-adjacent vertices $u$ and $v$ in the tree $T$, $C_u \cap C_v = \emptyset$.
\end{itemize}
\end{observation}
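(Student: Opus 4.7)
The plan is to prove all three properties simultaneously by induction on $|V(T)|$, the base case $|V(T)|=1$ being immediate. For the inductive step, fix an arbitrary edge $(u_0,v_0)$ of $T$ and let $T_1,T_2$ be the subtrees obtained by removing it, with $u_0\in V(T_1)$ and $v_0\in V(T_2)$. By definition $M=M_{T_1}\triangle M_{T_2}$ with common set $E_{T_1}\cap E_{T_2}=E_{u_0}\cap E_{v_0}$, so \cref{prop:sum-circuits} gives two cases: (a) exactly one of $C_{T_1},C_{T_2}$ is empty and the other is a circuit of the respective matroid avoiding all common elements, or (b) both are nonempty circuits sharing a unique element $e^*\in E_{u_0}\cap E_{v_0}$.

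The third property follows directly from (\ref{eq:tree-edge}) without induction: non-adjacent vertices $u,v$ satisfy $E_u\cap E_v=\emptyset$, and $C_u\subseteq E_u$, $C_v\subseteq E_v$ by construction, so $C_u\cap C_v=\emptyset$.

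For the first property (connectedness), in case (a) say $C_{T_2}=\emptyset$; unfolding the recursion gives $C_w=\emptyset$ for every $w\in V(T_2)$, so the support lies entirely inside $V(T_1)$ and is connected by induction applied to $T_1$. In case (b) the key observation is that each ground-set element lies in at most two of the $E_v$'s (a consequence of (\ref{eq:tree-edge})), so $e^*\in E_{u_0}\cap E_{v_0}$ belongs to no other $E_w$. Since $C_{T_1}=\triangle_{w\in V(T_1)}C_w$, $e^*\in C_{T_1}$, and the $C_w$ are disjoint from $\{e^*\}$ for $w\neq u_0$, the element $e^*$ must lie in $C_{u_0}$; symmetrically $e^*\in C_{v_0}$. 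Applying induction to $T_1$ and $T_2$, the supports within each form connected subgraphs containing $u_0$ and $v_0$ respectively, and since $u_0v_0$ is an edge of $T$ the entire support is connected.

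For the second property, consider adjacent $u,v$ with $C_u,C_v$ both nonempty. If $\{u,v\}\subseteq V(T_1)$ or $\{u,v\}\subseteq V(T_2)$, apply the inductive hypothesis inside that subtree. Otherwise $\{u,v\}=\{u_0,v_0\}$, which rules out case (a) (since projections on both sides of the removed edge are nonempty) and places us in case (b); the paragraph above then gives $e^*\in C_{u_0}\cap C_{v_0}\subseteq E_{u_0}\cap E_{v_0}$, as required. The main subtle point throughout is ensuring that a common element such as $e^*$ really propagates down the recursion to $C_{u_0}$ rather than being absorbed into some other projection; this relies crucially on the fact that any ground-set element occurs in at most two of the $E_v$'s, which is built into (\ref{eq:tree-edge}), and beyond this the argument is a direct unwrapping of the recursive definition of the projections.
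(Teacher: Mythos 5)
Your proof is correct, and it formalizes precisely the reasoning the paper leaves implicit (the paper states this as an observation ``following'' from the recursive definition of $C_{T'}$ together with \cref{prop:sum-circuits} and condition (\ref{eq:tree-edge}), without spelling out an argument). Your induction, the direct derivation of the third bullet from (\ref{eq:tree-edge}), and the key step that the shared element $e^*\in E_{u_0}\cap E_{v_0}$ must sit inside $C_{u_0}$ and $C_{v_0}$ because no other $E_w$ can contain it, are exactly the ingredients the paper takes for granted.
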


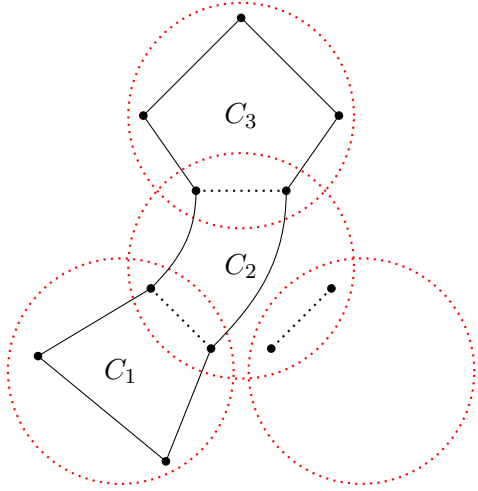
\begin{figure}
\centering 
\begin{tikzpicture}

\draw[dotted,thick,red] (2,2) circle (1.5);
\draw[dotted,thick,red] (0.4,0.6) circle (1.5);
\draw[dotted,thick,red] (3.6,0.6) circle (1.5);
\draw[dotted,thick,red] (2,4) circle (1.5);

\draw[black,fill=black] (1.4,3) circle (0.05);
\draw[black,fill=black] (2.6,3) circle (0.05);
\draw[black,fill=black] (0.8,1.7) circle (0.05);
\draw[black,fill=black] (1.6,0.9) circle (0.05);
\draw[black,fill=black] (3.2,1.7) circle (0.05);
\draw[black,fill=black] (2.4,0.9) circle (0.05);

\draw [dotted,thick] (1.4,3) -- (2.6,3);
\draw [dotted,thick] (0.8,1.7)--(1.6,0.9);
\draw [dotted,thick] (3.2,1.7) --(2.4,0.9);

\draw (0.8,1.7) to [out=45,in = 270] (1.4,3);
\draw (1.6,0.9) to [out=45,in = 270] (2.6,3);

\draw (0.8,1.7)--(-0.7,0.8) -- (1,-0.6) --(1.6,0.9);
\draw (1.4,3) -- (0.7,4) -- (2,5.3) -- (3.3,4) -- (2.6,3);
\draw[black,fill=black] (-0.7,0.8) circle (0.05);
\draw[black,fill=black] (1,-0.6) circle (0.05);
\draw[black,fill=black] (0.7,4) circle (0.05);
\draw[black,fill=black]  (2,5.3) circle (0.05); 
\draw[black,fill=black]  (3.3,4) circle (0.05);

\node at (0.4,0.6) {$C_1$};
\node at (2,2) {$C_2$};
\node at (2,4) {$C_3$};
 
\end{tikzpicture}
\caption{Decomposition of a circuit. Each dotted circle shows a vertex of the unordered decomposition tree.
For each vertex of the tree we have a component matroid, which is a graphic matroid in this example, 
i.e., the elements  of the matroid are the graph edges and circuits are cycles. 
 The intersection of two dotted circles contains an element (edge) common between the two corresponding  matroids. 
 We have a cycle $C$ which is equal to the symmetric difference $C_1 \triangle C_2 \triangle C_3$, where $C_1,C_2,C_3$ are cycles in the component  matroids.}
\label{fig:circuits-tree}
\end{figure}

Let $\gamma_v(C) = \abs{C \cap E_v} = \abs{C_v \cap E}$. Thus, 
$\abs{C} = \sum_{v \in V(T)} \gamma_v(C)$.
For any subtree $T'$, we define 
$$\gamma_{T'}(C) = \sum_{v \in V(T')} \gamma_{v}(C).$$

\subsection{Decomposing the tree $T$ into smaller weight subtrees.}
We  classify the circuits $C$ of $M$ according to the distribution of $\gamma_v(C)$ for $v \in V(T)$.
For this purpose, we create a signature for each circuit.
Let $C$ be a circuit of $M$ of size at most $\alpha r$.
For circuit $C$, we first find a center vertex of the tree $T$.
That is, a vertex $u_0$ in the tree $T$ such that if $T_1,T_2,\dots, T_k$ are the subtrees
obtained by deleting $u_0$, then $\gamma_{T_i}(C) \leq  \alpha r/2$ for each $i \in [k]$.  
The following claim shows that such a vertex exists.

\begin{claim}[Balanced division of a tree]
\label{cla:center}
Let $T $ be a tree and let each vertex $v \in V(T)$ be associated with a non-negative integer $\gamma_v$.
Let $q := \sum_{v \in V(T)} \gamma_{v } $. 
Then there exists a center vertex $c(T) \in V(T)$ such that 
for the subtrees $T_1,T_2,\dots, T_k$ obtained by deleting $c(T)$ from $T$,
we have 
$$\sum_{v \in T_i} \gamma_v \leq q/2$$ for each $1\leq i \leq k $.
\end{claim}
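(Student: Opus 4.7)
The plan is to prove this by the classical tree-centroid walking argument. The key observation is that, at any vertex $v$ of $T$, the subtrees obtained by deleting $v$ have total weight $\sum_i \sum_{w \in T_i} \gamma_w = q - \gamma_v \le q$, so \emph{at most one} of these subtrees can have weight strictly greater than $q/2$. Call such a subtree (if it exists) the heavy subtree at $v$. If no heavy subtree exists, $v$ itself qualifies as the desired center and we are done.

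Otherwise, I would define a walk on $T$ as follows: start at an arbitrary vertex $v_0$; at step $i$, if $v_i$ has a heavy subtree $T^{(i)}$, move to the neighbor $v_{i+1}$ of $v_i$ lying in $T^{(i)}$, and stop when no heavy subtree exists. To show termination, I would verify that the walk never backtracks. Indeed, when we move from $v_i$ to $v_{i+1}$, the subtree attached to $v_{i+1}$ that contains $v_i$ has total weight
\[
q - \sum_{w \in T^{(i)}} \gamma_w < q - q/2 = q/2,
\]
so this subtree is not heavy at $v_{i+1}$; hence if $v_{i+1}$ has a heavy subtree, it lies on the opposite side of the edge $(v_i, v_{i+1})$. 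Therefore the sequence $v_0, v_1, v_2, \dots$ is a simple path in $T$, and since $T$ is finite the walk must terminate at some vertex $c(T)$ having no heavy subtree, which is the desired center.

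The argument is essentially routine once one isolates the right monovariant; the only subtlety is noticing that ``at most one heavy subtree'' combined with ``the subtree left behind is not heavy'' rules out backtracking, which is exactly what makes the walk terminate in a finite tree. I do not anticipate any serious obstacle, so the proof should be quite short.
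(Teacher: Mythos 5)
Your proof is correct and takes essentially the same approach as the paper's: walk toward the heavy subtree, observing that the subtree left behind has weight strictly less than $q/2$, so the walk never backtracks and must terminate in a finite tree. The only cosmetic difference is that you state the ``at most one heavy subtree'' observation up front, which lets you avoid the paper's separate treatment of the case where the walk reaches a leaf.
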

\begin{proof}
We give a procedure to find the vertex $c(T)$.
Start with an arbitrary vertex $v_0$. 
Let $T_1,T_2,\dots, T_k$ be the subtrees obtained by removing $v_0$.
Let $T_i$ be the subtree which maximizes $\sum_{v \in T_i} \gamma_{v}$
and let $q_0$ be the maximum value of this sum.
If $q_0 \leq q/2$ then we output $v_0$ as the desired vertex. 
On the other hand, if we have $q_0 > q/2$ 
then we get that 
$$\sum_{v \in V(T) \setminus V(T_i)} \gamma_v  < q/2. $$
Let $v_1$ be the vertex in $T_i$ connected with $v_0$.
If $v_1$ is a leaf, that is, $T_i$ is a single vertex $v_1$
 then we output $v_1$ as the desired vertex $c(T)$. 
The vertex $v_1$ has the desired property because its removal gives a single subtree 
that is $T-T_i$, which has weight $< q/2$.

In the other case when $v_1$ is not a leaf, we repeat the procedure.
That is, check if there is a subtree connected to $v_1$ with weight $>q/2$
and if yes, then move to the neighbor of $v_1$ in this subtree. 
We claim that this algorithm terminates since we never go back to a previous vertex.
%
To see this, observe that the subtree connected to $v_1$ which contains $v_0$ is $T-T_i$ and its weight is less than $q/2$. 
Thus, we do not move back to $v_0$.
%
\end{proof}

\noindent
Note that the center vertex is not uniquely defined. For our purpose, any vertex with the desired property would work.
We apply the center finding procedure recursively on the subtrees $T_1,T_2, \dots, T_k$,
till we reach a point where for any of the obtained subtrees $T'$, the sum $\gamma_{T'}$ is at most $r/2$.

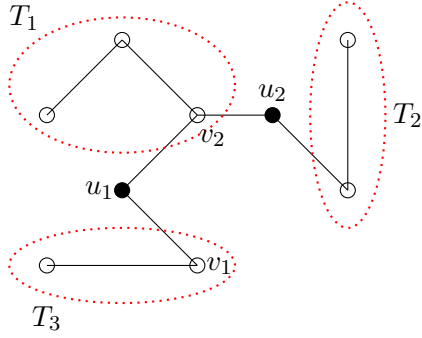
\begin{figure}
\centering 
\begin{tikzpicture}

\draw[black] (0,0) circle (0.1);
\draw[black] (0,2) circle (0.1);
\draw[black,fill=black] (1,1) circle (0.1);
\draw[black] (1,3) circle (0.1);
\draw[black] (2,2) circle (0.1);
\draw[black] (2,0) circle (0.1);
\draw[black,fill=black] (3,2) circle (0.1);
\draw[black] (4,3) circle (0.1);
\draw[black] (4,1) circle (0.1);

\draw (0,0) -- (2,0);
\draw (1,1) -- (2,0);
\draw (1,1) -- (2,2);
\draw (1,3) -- (2,2);
\draw (1,3) -- (0,2);
\draw (3,2) -- (2,2);
\draw (3,2) -- (4,1);
\draw (4,3) -- (4,1);

\draw[dotted,thick,red] (1,0) ellipse (1.5 and 0.5);
\draw[dotted,thick,red] (4,2) ellipse (0.5 and 1.5); 
\draw[dotted,thick,red] (1,2.4) ellipse (1.5 and 0.9);

\node at (-0.3,3.3) {$T_1$};
\node at (4.8,2) {$T_2$};
\node at (0,-0.7) {$T_3$};

\node at (0.7,1) {$u_1$};
\node at (3,2.3) {$u_2$};

\node at (2.3,0) {$v_1$};
\node at (2.2,1.7) {$v_2$};

\end{tikzpicture}
\caption{The set $U$ is shown via bold vertices, removing which  gives us subtrees $T_1,T_2$ and, $T_3$, indicated by dotted circles.}
\label{fig:tree-centers}
\end{figure}

\begin{claim}[Balanced recursive division of a tree]
\label{cla:centers}
Let $T $ be a tree with each vertex $v$ associated with a non-negative integer $\gamma_v$.
 Let 
$$\gamma_T= \sum_{v \in T} \gamma_v \leq 2^p r$$ 
for some integers $p,r \geq 0$.
Then there exists a set $U$ of at most $ 2^{p+2}$ vertices such that 
on removing these vertices from the tree $T$, for each obtained subtree $T'$ (see \cref{fig:tree-centers}) we have that
$$\gamma_{T'} = \sum_{v \in T'} \gamma_v  \leq r/2.$$
\end{claim}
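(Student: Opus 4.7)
The plan is to apply Claim~\ref{cla:center} recursively and prove a slightly refined bound by strong induction on $\gamma_T$: for any weighted tree~$T$, a valid set $U \subseteq V(T)$ exists with $|U| \le 4\gamma_T/r - 1$ whenever $\gamma_T > r/2$, and with $U = \emptyset$ whenever $\gamma_T \le r/2$. Since the hypothesis $\gamma_T \le 2^p r$ gives $4\gamma_T/r - 1 \le 2^{p+2} - 1 < 2^{p+2}$, this implies the claim.

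The base case $\gamma_T \le r/2$ is immediate with $U = \emptyset$. In the inductive step, with $\gamma_T > r/2$, I would apply Claim~\ref{cla:center} to obtain a center $c(T)$ whose removal produces subtrees $T_1,\ldots,T_k$ of weights $w_i \le \gamma_T/2 < \gamma_T$. I split the children into light ($w_i \le r/2$) and heavy ($w_i > r/2$) subtrees, indexed by $S := \{i : w_i > r/2\}$; the inductive hypothesis gives $U_i = \emptyset$ for the light ones and $|U_i| \le 4w_i/r - 1$ for $i \in S$. Setting $U := \{c(T)\} \cup \bigcup_{i \in S} U_i$, one obtains
\[
|U| \;\le\; 1 + \sum_{i \in S}\bigl(4 w_i / r - 1\bigr) \;\le\; 1 + \tfrac{4\gamma_T}{r} - |S|,
\]
using $\sum_{i \in S} w_i \le \sum_i w_i \le \gamma_T$.

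The main verification then splits on $|S|$. When $|S| \ge 2$, the bound is at most $4\gamma_T/r - 1$ immediately. For $|S| = 0$ the estimate reduces to $|U| \le 1$, which is $\le 4\gamma_T/r - 1$ from $\gamma_T > r/2$. For $|S| = 1$ it gives $|U| \le 4w_1/r \le 2\gamma_T/r$ via $w_1 \le \gamma_T/2$, and again $2\gamma_T/r \le 4\gamma_T/r - 1$ follows from $\gamma_T > r/2$.

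The main obstacle — and the reason a direct induction on $p$ with the nominal target $|U|\le 2^{p+2}$ fails — is that Claim~\ref{cla:center} may split $T$ into arbitrarily many children, so per-child bounds blow up multiplicatively if added naively. The resolution is to carry the $-1$ slack inside the inductive statement and amortize it against the count of heavy children, crucially using \emph{both} the halving guarantee $w_i \le \gamma_T/2$ \emph{and} the total-weight constraint $\sum_i w_i \le \gamma_T$ furnished by Claim~\ref{cla:center}.
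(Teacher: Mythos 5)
Your proof is correct. It relies on the same key ingredient as the paper --- repeatedly applying \cref{cla:center} to heavy subtrees --- but the accounting is organized differently. The paper builds $U$ in $p+1$ explicit rounds, maintaining the invariant that after round $i$ every remaining subtree has weight at most $2^{p-i}r$, and in round $i+1$ it adds one center per subtree still heavier than $2^{p-i-1}r$; a global counting argument (each such subtree weighs more than $2^{p-i-1}r$ while the total is at most $2^p r$) caps the number of new centers at $2^{i+1}$, and summing the geometric series gives $|U|\le 2^{p+2}-2$. You instead carry the strengthened inductive invariant $|U|\le 4\gamma_T/r - 1$ whenever $\gamma_T > r/2$, descending recursively into heavy children and amortizing the $+1$ for each newly chosen center against the $-1$ slack contributed by every heavy child; your $|S|=0,1$ cases correctly invoke both the halving guarantee $w_i\le \gamma_T/2$ and the threshold $\gamma_T > r/2$ to absorb the $+1$ when fewer than two heavy children exist. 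Your diagnosis of why a naive induction on $p$ with the nominal target $2^{p+2}$ fails (unbounded branching) is apt; the paper escapes it by a per-level global count, you by local amortization. Your bound $2^{p+2}-1$ is marginally weaker than the paper's $2^{p+2}-2$, but both satisfy the claimed $2^{p+2}$. One small caveat worth noting: your invariant $4\gamma_T/r-1$ is formally undefined when $r=0$, but in that degenerate case $\gamma_T\le 2^p\cdot 0 = 0$ forces $\gamma_T=0$ and the base case $U=\emptyset$ applies, so nothing is lost.
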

\begin{proof}
We construct the set $U$ in $p+1$ rounds. 
Let $U_i \subseteq V(T)$ be the set of vertices we construct after the $i$-th round and $\T_i$ be the set of 
subtrees obtained by by deleting the vertices in $U_i$ from the tree $T$.
We want to ensure that for each subtree $T' \in \T_i$, 
we get
$$ \gamma_{T'}  \leq 2^{p-i} r.$$
Note that this is true for $i=0$ with the set $U_0 = \emptyset$ and $\T_0 = \{T\}$.
For $0\leq i \leq p$, we describe the $(i+1)$-th round
assuming that  we have $U_i$ and $\T_i$ with the desired properties.
Among the subtrees in $\T_i$, we need not care about the subtrees $T'$
that already have $\gamma_{T'} \leq 2^{p-i-1}r$. 
So, consider the subtrees $T' \in \T_i$ that have
$$ \gamma_{T'} > 2^{p-i-1}r.$$
There can be at most $2^{i+1}$ such subtrees, as $\gamma_T \leq 2^p r$.
For each such subtree $T'$ we find a center vertex $c(T')$ using \cref{cla:center}.
This choice of $c(T')$ is such that removing it from $T'$ gives us subtrees $T''$ having
$$ \gamma_{T''}  \leq 2^{p-i-1} r.$$
Thus,  we define the set $U_{i+1}$ to be the union of $U_i$ and the set of centers $c(T')$, i.e., 
$$U_{i+1} =U_i  \cup \{c(T') \mid T' \in \T_i, \; \gamma_{T'} > 2^{p-i-1}r \} .$$
By construction, $U_{i+1}$ has the desired property.
For the size bound, observe that
$$\abs{U_{i+1}} \leq \abs{U_{i}} + 2^{i+1} .$$
Solving the recurrence we get 
$$\abs{U_{p+1}} \leq 2^{p+2}.$$
\end{proof}

\subsection{Classifying the circuits using signatures.}
Recall that $C$ is a circuit of $M$ of size at most $\alpha r$, which implies $\gamma_T(C) \leq \alpha r$.
From \cref{cla:centers},
there exists a set $U(C) \subseteq V(T)$ of at most $4\alpha$ vertices whose removal  
ensures that for any obtained subtree $T'$ 
we have 
\begin{equation}
\label{eq:tree-rby2}
\gamma_{T'}(C)  \leq r/2.
\end{equation}
To bound the number of such circuits, we classify the circuits using a signature. Then, we bound the number of circuits
associated with a fixed signature and also the number of possible signatures.
 $\signature(C)$ will be defined for any circuit $C$ of $M$ with $\abs{C} \leq \alpha r$.
The first part of $\signature(C)$ is the set $U(C)$ which ensures (\ref{eq:tree-rby2}).
Note that the set $U(C)$, and thus, the signature is not uniquely defined. 
We take any arbitrary choice of $U(C)$ for the signature.

We further expand the signature. Recall that $C = \triangle_{v \in V(T)} C_v$.
Let $u$ be a vertex in $U(C)$ and let $N(u)$ be the set of its neighboring vertices in the tree $T$.
We define a subset $N^*(u)$ of neighboring vertices $N(u)$ as follows:
$$N^*(u) := \{v \in N(u) \mid   v \text{ is on a path connecting } u \text{ to another vertex } u' \in U(C) \}.$$
For example, in Figure~\ref{fig:tree-centers}, the vertex $u_1$ has $N(u_1) = \{v_1,v_2\}$ as neighbor set and 
$N^*(u_1) = \{v_2\}$. 
We claim that the set $C_u$ and the set $C_v$ for any vertex $v \in N^*(u)$ must be non-empty. 
This is because (i) from the construction of the set $U(C)$,   if a subtree $T'$ of $T$
contains a vertex from $U(C)$, then it must be the case that $\gamma_{T'}(C) \neq 0$, i.e., $C_{T'}$ is non-empty
and (ii) the vertices $v$ with $C_v \neq \emptyset$ form a connected subgraph of $T$ (\cref{obs:CuCv}).

Recall from \cref{obs:CuCv} that  $C_u$ and $C_v$ have a single common element coming from $E_u \cap E_v$. 
 For each $u \in U(C)$ and for each $v \in N^*(u)$, we put this common element of $C_u$ and $C_v$ in $\signature(C)$.
This finishes the construction of the signature. 
$$\signature(C) := U(C) \cup \{(u, E_u \cap C_v) \mid u \in U(C), \; v \in N^*(v)\}.$$
First, we upper bound the number of possibilities for the signature. 
\begin{claim}
\label{cla:num-signatures}
There are at most $(9m)^{4\alpha}$ possibilities for  $\signature(C)$ with $\abs{C} \leq \alpha r$. 
\end{claim}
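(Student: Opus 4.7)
The plan is to decompose the count of signatures into two independent pieces: first bound the number of ways to choose the vertex set $U(C) \subseteq V(T)$ with $|U(C)| \leq 4\alpha$, and then, given $U(C)$, bound the number of ways to pick, for each pair $(u, v)$ with $u \in U(C)$ and $v \in N^*(u)$, a single element from $E_u \cap E_v$.

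For the first piece, I would use that the tree $T$ has at most $m$ vertices (each component matroid contributes at least one non-shared element to the ground set $E$), so the number of subsets of $V(T)$ of size at most $4\alpha$ is at most $m^{4\alpha}$. For the second piece, since each common set $E_u \cap E_v$ across an edge of $T$ has size $0$, $1$, or $3$, each such pair $(u,v)$ contributes a factor of at most $3$. Therefore it will suffice to prove $\sum_{u \in U(C)} |N^*(u)| \leq 8\alpha$; combined with the previous bound this yields $m^{4\alpha} \cdot 3^{8\alpha} = (9m)^{4\alpha}$, as desired.

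The main obstacle, and the only nontrivial step, is this bound on $\sum_{u \in U(C)} |N^*(u)|$. My plan is to recognize $|N^*(u)|$ as exactly the degree of $u$ in the minimal subtree $T^*$ of $T$ spanning $U(C)$: by definition, the neighbors of $u$ in $T$ that lie on a path from $u$ to some other vertex of $U(C)$ are exactly the $T^*$-neighbors of $u$. Every leaf of $T^*$ must belong to $U(C)$, so every vertex in $V(T^*) \setminus U(C)$ has $\deg_{T^*} \geq 2$. Applying the handshake identity to $T^*$ then gives
\[
\sum_{u \in U(C)} |N^*(u)| \;=\; 2(|V(T^*)| - 1) \;-\; \sum_{v \in V(T^*) \setminus U(C)} \deg_{T^*}(v) \;\leq\; 2(|V(T^*)| - 1) - 2|V(T^*) \setminus U(C)| \;=\; 2|U(C)| - 2 \;\leq\; 8\alpha,
\]
which is precisely what is needed. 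A naive bound of the form $|U(C)| \cdot \max_u |N^*(u)|$ would be far too weak, since individual degrees could in principle be as large as $|U(C)| - 1$; the Steiner-tree interpretation is essential in order to bring the exponent down to $O(\alpha)$ rather than $O(\alpha^2)$.
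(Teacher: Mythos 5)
Your proposal is correct and mirrors the paper's overall decomposition of the count: at most $m^{4\alpha}$ choices of $U(C)$, a factor of $3$ per tuple $(u, E_u \cap C_v)$, and the key bound $\sum_{u \in U(C)} \abs{N^*(u)} \leq 8\alpha$. The only real difference is how that last sum is bounded. The paper roots $T$ arbitrarily and charges each $u \in U(C)$ two pairs (the neighbor of $u$ toward its nearest $U(C)$-ancestor $A(u)$, and the neighbor of $A(u)$ toward $u$), concluding $\sum_u \abs{N^*(u)} \leq 2\abs{U(C)}$. You instead identify $N^*(u)$ with $\deg_{T^*}(u)$ where $T^*$ is the Steiner tree of $U(C)$, observe that leaves of $T^*$ lie in $U(C)$ so non-terminal vertices have $T^*$-degree at least $2$, and apply the handshake identity to get the marginally tighter $\sum_u \abs{N^*(u)} \leq 2\abs{U(C)} - 2$. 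Both are standard tree arguments yielding the same $O(\alpha)$ exponent; yours is arguably cleaner and makes the ``why $O(\alpha)$ and not $O(\alpha^2)$'' point explicit, but the route and final bound $(9m)^{4\alpha}$ are the same.
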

\begin{proof}
There are at most $4\alpha$ vertices in $U(C)$ and thus, there at most $m^{4 \alpha}$ possibilities for $U(C)$.
We would like to bound the number of tuples $(u, E_u \cap C_v)$ in $\signature(C)$.
Observe that this number is bounded by 
$$\sum_{u \in U(C)} \abs{N^*(u)}.$$
We claim that this sum is bounded by $2\abs{U(C)} \leq 8 \alpha$. 
To see this, make the tree $T$ rooted at an arbitrary vertex. 
For any vertex $u \in U(C)$, let $A(u) \in U(C)$ be its nearest ancestor in the rooted tree $T$.
For each $u \in U(C)$, we count two vertices -- first, the neighbor of $u$ connecting it to $A(u)$ and second, the neighbor of $A(u)$ connecting it to $u$.
This way, we have counted every vertex in $\cup_{u \in U(C)} N^*(v)$.

Recall that the element $E_u \cap C_v$ comes from $E_u \cap E_v$, which contains at most $3$ elements. 
Thus, for each  tuple $(u,C_u \cap E_v)$, there are at most $3$ possibilities.
This gives us an upper bound of $3^{8 \alpha}$ on the number of possibilities of all tuples. 

Multiplying together the bounds for $U(C)$ and all the tuples $(u, E_u \cap C_v)$, 
we get an upper bound on the number of possible signatures. 
\end{proof}

\subsection{Bounding the number of circuits for a given signature.}
Our aim is to bound the number of circuits with a given signature.
 Let $\S$ be a signature of a circuit of size at most $\alpha r$.
Let $\C_{\S}$ be the set of all circuits $C$ of size at most $\alpha r$ such that 
$\signature(C) = \S$.
Let $U$ be the set of vertices in the signature $\S$.
Let $\T$ be the set of subtrees obtained after removing the vertices in $U$ from $T$.
Then, for the ground set $E$ of matroid $M$ we can write
$$ E = \left( \triangle_{u \in U} E_u \right) \triangle \left( \triangle_{T' \in \T} E_{T'} \right).$$
Also, every circuit $C$ of $M$ can be written as
$$ C = \left( \triangle_{u \in U} C_u \right) \triangle \left( \triangle_{T' \in \T} C_{T'} \right).$$
To bound the number of circuits in $C$ in $\C_\S$, we  bound the number of possibilities for the component circuits $C_u$ and $C_{T'}$
for $u \in U$ and $T' \in \T$.
We know that for any subtree $T' \in \T$ and for any circuit $C \in \C_\S$, 
$$\gamma_{T'}(C) = \abs{E_{T'} \cap C} \leq r/2.$$

\noindent
For any subtree $T' \in \T$, let $E_{T-T'}$ be the set $\triangle_{v \in V(T) \setminus V(T')} E_v$.
Then we have $E = E_{T'} \triangle E_{T-T'}$.
 The circuit $C_{T'}$ also has elements from $E_{T'} \cap E_{T-T'}$, which we call its boundary elements. 
We  first show that for all circuits $C \in \C_\S$, there is a unique possibility of the circuit $C_{T'}$, once we fix these boundary elements
$C_{T' } \cap E_{T-T'}$.
\begin{claim}
Let $C$ and $D$ be two circuits in $\C_\S$.
If $C_{T'} \cap E_{T-T'} = D_{T'} \cap E_{T-T'}$, then
$C_{T'} = D_{T'}$.
\label{cla:unique}
\end{claim}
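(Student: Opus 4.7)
The plan is to show that if $C_{T'} \neq D_{T'}$, then their symmetric difference would yield a non-empty disjoint union of circuits of $M$ of total size at most $r$, contradicting the hypothesis that $M$ has no circuit of size at most $r$. The key structural input is the valid $k$-sum decomposition $M = M_{T'} \triangle M_{T-T'}$ (since $T'$ is a connected subtree of the UDT obtained by deleting the vertices in $U$), so by \cref{prop:sum-circuits} each of $C_{T'}$ and $D_{T'}$ is either empty or a single circuit of $M_{T'}$.

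First, I would dispose of the asymmetric case. If $D_{T'} = \emptyset$ but $C_{T'} \neq \emptyset$, then since the two projections agree on the common set $E_{T'} \cap E_{T-T'}$, the circuit $C_{T'}$ must avoid this common set entirely. By \cref{fac:circuit-in-one}, $C_{T'}$ is then itself a circuit of $M$; but $|C_{T'}| = \gamma_{T'}(C) \leq r/2$, contradicting the hypothesis on $M$. So either both projections are empty (hence trivially equal), or both are non-empty circuits of $M_{T'}$.

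In the main case, both $C_{T'}$ and $D_{T'}$ are non-empty circuits of $M_{T'}$ whose intersections with the common set $E_{T'} \cap E_{T-T'}$ coincide. By \cref{fac:symm-diff-circuits}, $C_{T'} \triangle D_{T'}$ is a disjoint union of circuits of $M_{T'}$, and because the common portion cancels in the symmetric difference, this union lies inside $E_{T'} \setminus E_{T-T'} \subseteq E$. Applying \cref{fac:disjointCircuits} (with the $M_{T-T'}$-side taken to be the empty union), $C_{T'} \triangle D_{T'}$ is a disjoint union of circuits of $M$.

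To finish, I would bound the size of $C_{T'} \triangle D_{T'}$. Since every element of $C$ lies in $E$ and therefore belongs to a unique $E_v$ disjoint from any $k$-sum common set, the non-boundary portion of $C_{T'}$ satisfies $|C_{T'} \setminus E_{T-T'}| = \gamma_{T'}(C) \leq r/2$, and similarly for $D$. Because the boundary parts of $C_{T'}$ and $D_{T'}$ are equal and cancel, $|C_{T'} \triangle D_{T'}| \leq \gamma_{T'}(C) + \gamma_{T'}(D) \leq r$. If this symmetric difference were non-empty, it would contain a circuit of $M$ of size at most $r$, contradicting the standing hypothesis; hence $C_{T'} = D_{T'}$. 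The only delicate point, and what I expect to be the sole obstacle, is the size bookkeeping — separating the elements of $C_{T'}$ that sit in the ground set $E$ from the (invisible) common elements of the $k$-sum — but once that accounting is done, the argument reduces to two clean invocations of the symmetric-difference facts.
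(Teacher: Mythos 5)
Your proof is correct and follows essentially the same route as the paper: take the symmetric difference $C_{T'} \triangle D_{T'}$, observe that the boundary elements cancel so it lies entirely in $E$, conclude via the symmetric-difference and $k$-sum facts that it is a disjoint union of circuits of $M$, and then derive a contradiction from the size bound $\gamma_{T'}(C) + \gamma_{T'}(D) \leq r/2 + r/2 = r$. The only cosmetic difference is that you separate out the case where exactly one of the projections is empty (the paper's argument silently subsumes it), and you invoke \cref{fac:disjointCircuits} rather than \cref{fac:circuit-in-one} — which is, if anything, the more precise citation since the symmetric difference is a disjoint union of circuits rather than a single circuit.
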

\begin{proof}
Let us assume for the sake of contradiction that $C_{T'}$ and $D_{T'}$ are different.
Consider the symmetric difference $C_{T'} \triangle D_{T'}$, which is a disjoint union of circuits in $M_{T'}$ (\cref{fac:symm-diff-circuits}).
From the assumption of the claim,  the set $C_{T'} \triangle D_{T'}$
does not have any elements from $E_{T-T'}$.
In particular, this means that $C_{T'} \triangle D_{T'}$
 is actually a subset of $E  = E_{T'} \triangle E_{T-T'}$.
In this case, the set $C_{T'} \triangle D_{T'}$ is, in fact, a disjoint union of circuits in $M$ (\cref{fac:circuit-in-one}).
 However, $$\abs{C_{T'} \triangle D_{T'} } \leq \abs{C \cap E_{T'}} + \abs{D \cap E_{T'}} \leq r/2 +r/2 =r,$$ 
 which contradicts the hypothesis of the theorem that $M$ has no circuits of size at most $r$.
\end{proof}

\noindent
We argue that for some of the subtrees $T' \in \T$, the boundary elements of circuit $C_{T'}$ indeed get 
fixed by the signature $\S$.
We distinguish two types of subtrees in the set $\T$. 
Let $\T_0 \subseteq \T$ be the set of all the subtrees which have at most one vertex of $U$ adjacent to them.
The other set $\T_1 = \T  \setminus \T_0$ has all the subtrees which have at least two vertices of $U$ adjacent to them.
For example, in Figure~\ref{fig:tree-centers}, the subtrees $T_2$ and $T_3$ are in $\T_0$ and
the subtree $T_1$ is in $\T_1$.

We first show that for any subtree $T' \in \T_1$,
all circuits $C_{T'}$, for $C \in \C_\S$, have the same boundary elements.

\begin{claim}
\label{cla:T1-signature}
For any two circuits $C,D \in \C_\S$ and any subtree $T' \in \T_1$,
$$C_{T'} \cap E_{T-T'} = D_{T'} \cap E_{T-T'}.$$
\end{claim}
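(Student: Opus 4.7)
The plan is to prove that the signature $\S$ fully determines the boundary set $C_{T'} \cap E_{T-T'}$, from which the claim is immediate. The key point is that for $T' \in \T_1$, every adjacency between $T'$ and the rest of the tree is ``seen'' by the signature, whereas for $T' \in \T_0$ this may not be the case (which is precisely why $\T_0$ needs a different treatment later).

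First I would describe the boundary structure. Since $T'$ is a connected component of $T \setminus U$, any vertex of $V(T-T')$ adjacent in $T$ to a vertex of $T'$ must lie in $U$. Enumerate these as $u_1,\dots,u_k \in U$ with $k \geq 2$ by definition of $\T_1$, and let $v_i \in V(T')$ be the unique neighbor of $u_i$ in $T$ (uniqueness since $T$ is a tree). By property (\ref{eq:tree-edge}) together with the fact that every ground-set element belongs to at most two $E_w$'s, every element of $E_{T'}\cap E_{T-T'}$ lies in $E_{u_i}\cap E_{v_i}$ for exactly one $i$, and is contained in no other $E_w$ with $w\in V(T')$.

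Next I would argue that each pair $(u_i, E_{u_i}\cap C_{v_i})$ is actually recorded in $\S$, i.e.\ that $v_i\in N^*(u_i)$. Because $T'\in\T_1$, for any fixed $i$ there is some $j\neq i$ with $u_j\in U$ also adjacent to $T'$; the unique $T$-path from $u_i$ to $u_j$ starts with the edge $(u_i,v_i)$ and then travels through $T'$. Thus $v_i$ lies on a path from $u_i$ to another vertex of $U$, so $v_i\in N^*(u_i)$, and $\S$ fixes $E_{u_i}\cap C_{v_i}$ for every $i$.

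The central identity to establish is
\[
C_{T'} \cap E_{T-T'} \;=\; \bigcup_{i=1}^k \bigl(E_{u_i}\cap C_{v_i}\bigr).
\]
Unfolding the recursive definition of $C_{T'}$ gives $C_{T'} = \triangle_{v\in V(T')} C_v$ (shared elements between adjacent component circuits in $T'$ cancel in the symmetric difference at each merge step, by \cref{prop:sum-circuits}). By the first paragraph, any boundary element $e\in E_{T'}\cap E_{T-T'}$ sits in exactly one $E_v$ with $v\in V(T')$, namely $v=v_i$; hence $e\in C_{T'}$ if and only if $e\in C_{v_i}$, proving the identity. Since the right-hand side is a union of sets each determined by $\S$, the left-hand side is determined by $\S$, so $C_{T'}\cap E_{T-T'} = D_{T'}\cap E_{T-T'}$ for any $C,D\in\C_\S$.

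The only delicate step is the justification of $C_{T'} = \triangle_{v\in V(T')} C_v$ from the edge-by-edge recursion; this follows by induction on $|V(T')|$ using \cref{prop:sum-circuits}, but it is mostly bookkeeping. The conceptual heart of the argument is the fact that membership in $\T_1$ forces every adjacent $u\in U$ to contribute to $N^*(u)$, which is what lets the signature capture all boundary data.
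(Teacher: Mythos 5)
Your proof is correct and follows essentially the same approach as the paper: observe that membership of $T'$ in $\T_1$ forces each boundary neighbor $v_i$ into $N^*(u_i)$, so the tuples $(u_i, E_{u_i}\cap C_{v_i})$ are all recorded in $\S$, and these tuples determine $C_{T'}\cap E_{T-T'}$. You spell out explicitly (via the identity $C_{T'}\cap E_{T-T'} = \bigcup_i (E_{u_i}\cap C_{v_i})$ and the unfolding $C_{T'} = \triangle_{v\in V(T')} C_v$) the last step that the paper leaves implicit in the sentence ``Since this is true for any vertex $u \in U$ adjacent to $T'$\ldots''; this is a useful clarification but not a different argument.
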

\begin{proof}
Let $u$ be any vertex in $U$ that is adjacent to $T'$ and let $v \in V(T')$ be the vertex in $T'$
connected to $u$.
 Since $T'$ is adjacent to at least one more vertex $u' \in U$ that is different from $u$, 
 the vertex $v$ must be on the path connecting $u$ to $u'$.
In other words, $v \in N^*(u)$.
Thus, the tuple $(u, E_u  \cap C_v)$ is in $\signature(C)$ for any circuit $C \in \C_\S$.
Since $C$ and $D$ have the same signature, we get  $E_u \cap C_v = E_u \cap D_v$. 
%
Since this is true for any vertex $u \in U$ that is adjacent to $T'$ and its neighbor $v$ in $T'$, 
we can write $$E_{T-T'} \cap C_{T'} = E_{T-T'} \cap D_{T'}.$$
\end{proof}

\noindent
\cref{cla:unique,cla:T1-signature} together imply that 
for any subtree $T' \in \T_1$, there is only one possible circuit $C_{T'}$ for all $C \in \C_\S$.
\begin{lemma}[Unique projection on a subtree in $\T_1$]
\label{cla:T1}
For any subtree $T' \in \T_1$,
the set 
$\{C_{T'} \mid C \in \C_\S \}$
has cardinality exactly one.
\end{lemma}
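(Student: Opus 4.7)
The plan is to obtain this statement as an immediate consequence of the two preceding claims, namely \cref{cla:unique} and \cref{cla:T1-signature}. Fix a subtree $T' \in \T_1$ and pick any two circuits $C, D \in \C_\S$. Since $T' \in \T_1$, there are at least two vertices of $U$ adjacent to $T'$, which is precisely the hypothesis under which \cref{cla:T1-signature} is proved. Applying that claim yields
\[
C_{T'} \cap E_{T-T'} \;=\; D_{T'} \cap E_{T-T'},
\]
so the ``boundary'' parts of the two projections already coincide.

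Next, I would feed this boundary agreement into \cref{cla:unique}, whose conclusion is that $C_{T'} = D_{T'}$ whenever the boundary parts are equal (the key ingredient here was the lower bound on the minimum circuit size of $M$, combined with the bound $\gamma_{T'}(C), \gamma_{T'}(D) \leq r/2$ valid for circuits in $\C_\S$). Since $C$ and $D$ were arbitrary elements of $\C_\S$, this shows that every circuit $C \in \C_\S$ produces the same projection $C_{T'}$, so the set $\{C_{T'} \mid C \in \C_\S\}$ has cardinality at most one. Since we only care about signatures that are actually realized by some circuit (there is nothing to count otherwise), we may assume $\C_\S \neq \emptyset$, which gives cardinality exactly one.

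There is essentially no technical obstacle at this step: the substantive work was already done in \cref{cla:unique}, which invoked the hypothesis on minimum circuit size of $M$ via the additivity bound $r/2 + r/2 = r$, and in \cref{cla:T1-signature}, which exploited the definition of $N^*(u)$ and the structure of the signature. The present lemma is simply the composition of the two. The only thing worth double-checking is that the two circuits $C,D \in \C_\S$ witness all vertices of $U$ adjacent to $T'$ lying in their respective projection decompositions in a compatible way; but this is automatic from \cref{obs:CuCv}, since the vertices of the tree supporting $C$ (resp.\ $D$) form a connected subgraph containing all of $U$ (as each subtree in $\T$ touching $U$ must be non-empty in the projection). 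Thus the proof is expected to be two or three lines of chaining.
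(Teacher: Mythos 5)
Your proof is correct and matches the paper exactly: the paper states \cref{cla:T1} as an immediate consequence of \cref{cla:unique} and \cref{cla:T1-signature}, which is precisely the chaining you describe. The one small remark about assuming $\C_\S \neq \emptyset$ (so that cardinality is exactly one rather than at most one) is a reasonable reading of the paper's intent.
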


\noindent
We move on to the subtrees in $\T_0$.
A claim similar to \cref{cla:T1-signature} cannot be made about a subtree $T' \in \T_0$.
This is because the signature $\S$ does not fix the boundary elements of $C_{T'}$ for $T' \in \T_0$.

Recall that by definition any subtree $T' \in \T_0$ has a unique vertex of $U$ adjacent to it, say $u$. 
This means, $E_{T'} \cap E_{T-T'} = E_{T'} \cap E_u$, which has at most three elements in it.
Thus, the number of possibilities for the set $C_{T'} \cap E_{T-T'}$ is at most three.
From \cref{cla:unique}, we get that there are at most three distinct possibilities of $C_{T'}$ for $C \in \C_\S$.
However, there can be a large number of  subtrees in $\T_0$, possibly $O(m)$. 
This would mean the total number of possibilities for circuits $C_{T'}$ for all $T' \in \T_0$ can be $3^{O(m)}$.

Instead, we use a different strategy where we bound the number of possibilities for circuits $C_u$ for $u \in U$ and $C_{T'} $ for $T' \in \T_0$
simultaneously. 
\paragraph{Bounding the number of circuits in a graphic/cographic matroid.} 
To proceed further, we  need a  result for graphic and cographic matroids,  similar to \cref{thm:num-circuits},
albeit in a stronger form. 
Here, we have weights assigned to the elements and we consider the weight of a circuit defined as $w(C) = \sum_{e \in C} w(e)$.
Also,  we only have a weaker assumption  that  circuits avoiding a given subset $R$ have size more than $r$.
A proof can be found in \cref{sec:graphic}.
%

\begin{lemma}\label{lem:graphic-set}
Let $M=(E,\cI)$ be a graphic or cographic matroid with ground set size $m \geq 2$. 
Let $R \subseteq E$ be any (possibly empty) set of elements of the ground set.
Let $w\colon E \to \N$ be a weight assignment on the ground set. 
Assume that there is no circuit $C$ in $M$ such that $C \cap R = \emptyset$ and  $w(C)\leq  r$.  
Then, for any integer $\alpha \geq 1$, the number of circuits $C$ such that $R \subseteq C$ and $w(C) \leq \alpha r$ 
is at most 
$(4\alpha +2\abs{R})^{\abs{R}} (2m)^{2\alpha}$.
\end{lemma}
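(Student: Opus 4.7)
The proof of \cref{lem:graphic-set} extends the proofs of \cref{thm:graphic} due to Karger and Subramanian to accommodate both edge weights and the forced set $R$, following and generalizing the $|R| = 1$ argument of \cite{GTV18}. I would carry it out in two steps.

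\textbf{Step 1: Weighted base bound ($R = \emptyset$).} First, I would establish a weighted analog of \cref{thm:graphic}: in a graphic or cographic matroid with nonnegative integer weights $w$ on the ground set such that every circuit has weight strictly greater than $r$, the number of circuits of weight at most $\alpha r$ is at most $(2m)^{2\alpha}$. For the cographic case (minimum cuts), this is a direct adaptation of Karger's random edge contraction algorithm, where an edge is selected for contraction with probability proportional to its weight rather than uniformly. The survival-probability calculation shows that any fixed $\alpha$-minimum cut is returned with probability at least $(2m)^{-2\alpha}$, giving the bound by a union bound. The graphic case follows by an analogous adaptation of Subramanian's argument, which is combinatorial over edges and transfers to weights by the standard reduction.

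\textbf{Step 2: Handling the $R$-constraint.} For the graphic case (the cographic case is handled dually via \cref{fac:graphic-cographic-circuit}), a cycle $C$ containing $R = \{e_1,\dots,e_k\}$ consists of the edges of $R$ together with at most $k$ edge-disjoint simple paths (arcs) in $G \setminus R$ connecting the endpoints of $R$'s edges in some cyclic pattern. I would enumerate over the discrete combinatorial data of this pattern: the cyclic ordering of $R$ within $C$, the orientation of each element of $R$, and which endpoint of each $e_i$ attaches to which neighboring arc. Once the pattern is fixed, the number of compatible cycles is bounded by counting valid arc-systems, which reduces (via contracting/identifying endpoints prescribed by the pattern) to counting circuits in an auxiliary weighted graphic matroid. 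The hypothesis that circuits in $M$ avoiding $R$ have weight more than $r$ ensures that the no-short-circuit assumption is inherited by this auxiliary matroid restricted to the relevant family, so the weighted bound from Step 1 yields a factor $(2m)^{2\alpha}$. The total count is the product of the number of patterns and $(2m)^{2\alpha}$.

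\textbf{Main obstacle.} The hardest step is obtaining the factor $(4\alpha + 2|R|)^{|R|}$ rather than a much looser $|R|! \cdot 2^{|R|}$-type bound on patterns. One must exploit the weight bound $w(C) \leq \alpha r$ to argue that only patterns whose arcs are ``locally balanced'' can be realized, and to verify that the no-short-circuit hypothesis genuinely transfers to the auxiliary matroid in each pattern without spurious short circuits being introduced by the endpoint identifications. The $|R| = 1$ case in \cite{GTV18} handles only one pair of endpoints and so sidesteps both difficulties; generalizing to arbitrary $R$ requires careful global coordination among the forced elements to avoid overcounting and to match the claimed exponent.
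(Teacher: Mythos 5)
Your two-step plan---first prove a weighted, $R=\emptyset$ base bound, then handle $R$ by enumerating ``patterns'' and reducing to an auxiliary matroid---diverges from the paper's proof, and the divergence is where the gaps lie. The paper never factors through an auxiliary matroid. For the graphic case it directly encodes each cycle $C$ (with $R\subseteq C$, $w(C)\le\alpha r$) by a tuple of $2\alpha+|R|$ oriented ``milestone'' edges chosen greedily along the cycle so that every segment between consecutive milestones has weight at most $r/2$ and avoids $R$; two distinct cycles with the same tuple would produce, by taking the symmetric difference of a pair of differing segments, a cycle of weight $\le r$ avoiding $R$ \emph{in the original graph $G$}, contradicting the hypothesis. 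The count of tuples gives $(4\alpha+2|R|)^{|R|}(2m)^{2\alpha}$ directly. This is precisely what sidesteps the obstacle you name: there is no endpoint-identification step and hence no possibility of spurious short circuits. In your Step 2, once you contract/identify endpoints of the arcs in $G\setminus R$, short cycles of the auxiliary graph need not correspond to short circuits of $M$ avoiding $R$, so the hypothesis does \emph{not} automatically transfer. You correctly flag this as the ``main obstacle,'' but it is not a technicality to be checked---it is the reason this reduction, as stated, does not close.

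A second, more concrete gap: the cographic case is not ``handled dually via \cref{fac:graphic-cographic-circuit}.'' That fact only tells you cographic circuits are minimal cut-sets; it gives no duality that converts a cycle-counting proof into a cut-counting proof, and for non-planar $G$ the cographic matroid $M^*(G)$ is not graphic at all. The paper proves the cographic case by a separate, weighted Karger contraction argument that is modified to handle $R$: edges of $R$ (and their parallel copies created by contractions) are never contracted, the contraction phase runs for $n-2\alpha-|R|$ rounds, and the survival-probability lower bound is obtained by contracting $R$ in the auxiliary analysis graph so the degree cuts avoid $R$ and hence have weight $>r$. Your Step~1 cographic argument is the right base, but the $R\ne\emptyset$ modification is an essential extra step, not a formal dual. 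Finally, a small observation in your favor: your worry about a ``much looser $|R|!\cdot 2^{|R|}$-type bound'' is less dire than you suggest, since $(|R|-1)!\,2^{|R|}\le (2|R|)^{|R|}\le(4\alpha+2|R|)^{|R|}$ already; the real difficulty in your route is the auxiliary-matroid hypothesis transfer, not the pattern count.
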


\noindent
We return to bounding the number of circuits $C_u$ for $u \in U$ and $C_{T'} $ for $T' \in \T_0$.
For any vertex $u \in U$, let $\T_{0,u} \subseteq \T_0$ be the set of subtrees in $\T_0$ that have a vertex adjacent to $u$.

\begin{lemma}
\label{lem:numberCu}
For any vertex $u \in U$, the cardinality of the set 
$$\{ C_u \triangle \left( \triangle_{T' \in \T_{0,u}} C_{T'} \right) \mid C \in \C_\S \}$$
 is at most $(12 \alpha)^{4\alpha} (2m)^{2\alpha}$.
\end{lemma}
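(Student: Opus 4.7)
My plan is to reduce this counting problem to a weighted circuit count in a single graphic or cographic matroid, to which \cref{lem:graphic-set} applies. First, since every $T'\in\T_{0,u}$ has $u$ as its only $U$-neighbor, the full boundary of $M_{T'}$ with the rest of the UDT lies in $E_u\cap E_{T'}$, and \cref{cla:unique} together with $\gamma_{T'}(C)\leq r/2$ forces $C_{T'}$ to be the unique circuit of $M_{T'}$ of size at most $r/2$ with $C_{T'}\cap E_u=C_u\cap E_{T'}$. Thus the set $C_u\triangle(\triangle_{T'\in\T_{0,u}}C_{T'})$ is a function of $C_u$ alone, and it suffices to bound $\abs{\{C_u:C\in\C_\S\}}$.

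Next I would construct a modified matroid $\tilde M_u$ as follows. For every $v\in N^*(u)$ the signature $\S$ pins down one element $e_v\in E_u\cap E_v$ that must lie in $C_u$; I delete from $M_u$ every other element of $\bigcup_{v\in N^*(u)}(E_u\cap E_v)$. By \cref{fac:closed} and \cref{fac:R10}, $\tilde M_u$ is graphic or cographic, except in the degenerate case $M_u=R_{10}$ with no deletion, which contributes only a constant number of circuits, already far below the target bound. Set $R:=\{e_v:v\in N^*(u)\}$; counting exactly as in the proof of \cref{cla:num-signatures}, $\abs{R}\leq\abs{N^*(u)}\leq\abs{U}-1\leq 4\alpha-1$. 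I then define a weight $w:E(\tilde M_u)\to\N\cup\{+\infty\}$ by $w(e)=1$ if $e\in E$; $w(e)=\abs{C_{T'}^{(e)}\cap E}$ if $e\in E_u\cap E_{T'}$ for some $T'\in\T_{0,u}$ and the (necessarily unique by \cref{cla:unique}) circuit $C_{T'}^{(e)}$ of $M_{T'}$ of size at most $r/2$ with $C_{T'}^{(e)}\cap E_u=\{e\}$ exists; $w(e)=\abs{C_{T'}\cap E}$ if $e\in R$ is the fixed element into $T'\in\T_1$ adjacent to $u$ (a constant, by \cref{cla:T1}); $w(e)=0$ if $e\in R$ is shared with a directly adjacent $U$-vertex; and $w(e)=+\infty$ otherwise. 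A direct telescoping of the contributions to $\abs{C}$ then shows $w(C_u)\leq\abs{C}\leq\alpha r$ and $R\subseteq C_u$ for every $C_u$ arising from a $C\in\C_\S$.

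The key remaining claim is that every circuit $C'$ of $\tilde M_u$ with $C'\cap R=\emptyset$ satisfies $w(C')>r$. For this, for each $T'\in\T_{0,u}$ set $C_{T'}^*:=C_{T'}^{(e)}$ if $C'\cap E_u\cap E_{T'}=\{e\}$ and $C_{T'}^*:=\emptyset$ otherwise, and consider $D:=C'\triangle(\triangle_{T'\in\T_{0,u}}C_{T'}^*)$. I would check: (a) $D\subseteq E$, because the only elements of $E_u$ that could lie in $E_u\cap E_v$ for some $v$ outside $T'':=\{u\}\cup\bigcup_{T'\in\T_{0,u}}V(T')$ are precisely those deleted to build $\tilde M_u$ together with the $e_v\in R$, and $C'$ avoids both; (b) by iterated application of \cref{fac:disjointCircuits}, $D$ is a disjoint union of circuits of $M$; (c) $D\ne\emptyset$, since the $C_{T'}^*$ live in pairwise disjoint ground sets $E_{T'}\setminus E_u$ and so cannot cancel the nonempty $C'\subseteq E_u$. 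Thus $D$ contains a circuit of $M$ of size at most $\abs{D}=w(C')\leq r$, contradicting the hypothesis of \cref{thm:num-circuits}.

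Finally, applying \cref{lem:graphic-set} to $\tilde M_u$ with the set $R$ and the weight $w$ bounds the number of circuits of $\tilde M_u$ containing $R$ of weight at most $\alpha r$ by $(4\alpha+2\abs{R})^{\abs{R}}(2m)^{2\alpha}\leq(12\alpha)^{4\alpha}(2m)^{2\alpha}$, which yields the claim. The main obstacle is the extension argument in the previous paragraph: arranging the deletions and the set $R$ just so that $D\subseteq E$, and exploiting the pairwise disjointness of $E_{T'}\setminus E_u$ across $T'\in\T_{0,u}$ to guarantee $D\ne\emptyset$. This is exactly where the $\T_{0,u}/\T_1$ dichotomy — separating small, locally-determined projections (absorbed into $w$) from globally-constrained ones (pinned by $R$) — becomes essential.
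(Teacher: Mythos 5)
Your proposal follows the paper's proof essentially step by step: the injective map $C_u\triangle(\triangle_{T'\in\T_{0,u}}C_{T'})\mapsto C_u$, the weight assignment that absorbs the unique small $\T_{0,u}$-projections into $E_u$, the deletion of the forbidden boundary elements to form a graphic/cographic matroid with restricted set $R$, the lower bound on weighted circuits avoiding $R$ via extending $C'$ to a short circuit $D$ of $M$, and finally \cref{lem:graphic-set}. The small divergences — assigning $+\infty$ when no valid $C^s_{T'}$ exists, deleting the non-$R$ boundary elements toward directly adjacent $U$-vertices as well (not just toward $\T_{1,u}$), carving out the degenerate $M_u=R_{10}$ case, and explicitly verifying $D\neq\emptyset$ — are sensible tightenings of the same argument rather than a different route, and actually patch a few corner cases the paper glosses over.
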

\begin{proof}
The idea is to design a one-one map from the set in the lemma to the set of circuits of 
the matroid $M_u$. 
Since $M_u$ is a graphic, cographic, or the $R_{10}$ matroid,
we can bound the number of circuits in $M_u$ of a certain size using \cref{lem:graphic-set}.
Our map is simply 
$$ \phi \colon C_u \triangle \left( \triangle_{T' \in \T_{0,u}} C_{T'} \right) \mapsto C_u.$$
First, we argue that the map $\phi$ is one-one.
For a given circuit $C_u$, we show that it has a unique pre-image under $\phi$.
For any subtree  $T' \in \T_{0,u}$,
let $S_{T'} := E_{T'} \cap E_u$ be the set of common elements between $E_{T'}$ and $E_u$.

If $C_u \cap S_{T'} = \emptyset$ for some $T' \in \T_{0,u}$, then $C_{T'}$ must be empty (\cref{obs:CuCv}).
In this case, there is a unique choice for $C_{T'}$.

Consider the case when $C_u \cap S_{T'} = \{s\}$ for some $T' \in \T_{0,u}$. 
For the sake of contradiction, let $C$ and $D$ be two circuits in $\C_\S$ such that $C_u = D_u$
but $C_{T'} \neq D_{T'}$.
Since $C_u \cap S_{T'} = \{s\}$, we also have $C_{T'} \cap E_u = \{s\}$ (\cref{obs:CuCv}).
Similarly, $D_{T'} \cap E_u = \{s\}$, which means $$D_{T'} \cap E_u = C_{T'} \cap E_u.$$
Recall that since $u$ is the single vertex in $T-T'$ adjacent to $T'$, we have $E_{T'} \cap E_u = E_{T'} \cap E_{T-T'}$.
Thus, we get that $$D_{T'} \cap E_{T-T'} = C_{T'} \cap E_{T-T'}.$$
Hence, from \cref{cla:unique}, $D_{T'} = C_{T'}$.
This proves that given a $C_u$, there is a unique choice for $C_{T'}$ for each $T' \in \T_{0,u}$.

We would like the map $\phi$ to be size preserving in some sense.
For this reason, we have to assign some integer weights for elements in $M_u$ and then consider 
the weight of a circuit in $M_u$. 
From the above discussion, for any subtree $T' \in \T_{0,u}$ and element $s \in S_{T'}$,
there is a unique circuit $C^s_{T'}$ of $M_{T'}$
such that $C^s_{T'} \cap E_u = \{s\}$ and $C^s \in \C_\S$.
We define the weights in $M_u$ as follows:
\[
w(s) =
\begin{cases}
 \abs{C^s_{T'}} - 1 & \text{ if } s \in S_{T'} \text{ for some } T' \in \T_{0,u} \\
 1 & \text{ otherwise.}
\end{cases}   
\]
Recall that for any circuit $C_u$ of $M_u$, its weight is defined to be $w(C_u) = \sum_{e \in C_u} w(e)$.

\begin{claim}
For any $C \in \C_\S$, $\abs{C_u \triangle \left( \triangle_{T' \in \T_{0,u}} C_{T'} \right)} = w(C_u)$.
\end{claim}
\begin{proof}
From the arguments above, 
for a subtree $T' \in \T_{0,u}$, if $C_u \cap S_{T'} = \{s\}$ then $C_{T'} = C^s_{T'}$ and if 
$C_u \cap S_{T'} = \emptyset$ then $C_{T'} = \emptyset$. 
The claim  follows directly from the definition of the weight function $w(\cdot)$.
\end{proof}

\noindent
Note that since $\abs{C}$ is at most $\alpha r$ for any $C \in \C_\S$, 
the size of the set $C_u \triangle ( \triangle_{T' \in \T_{0,u}} C_{T'} )$ is also at most $\alpha r$.
To bound the cardinality of the set in the lemma, 
we have to bound the number of circuits $C_u$ of $M_u$ with  $w(C_u) \leq \alpha r$.
We want to use \cref{lem:graphic-set} for this.
However, note that we do not have any assumption on the minimum weight of a circuit in $M_u$, which is required in \cref{lem:graphic-set}.
What we do have is an assumption that there are no circuits of size $r$ in $M$.
We tackle this problem as follows. 

Let $R \subseteq E_u$ be a set defined as 
$$ R := \{e \mid \text{the tuple } (u,e) \text{ belongs to the signature } \S \}.$$
Recall from the construction of the signature that each element in $R$ is contained in $C_u$ for every circuit $C \in \C_\S$.

 Let $\T_{1,u} \subseteq \T_1$ be the set of subtrees in $\T_1$ that have a vertex adjacent to $u$.
For a subtree $T' \in \T_{1,u}$, let $S_{T'} := E_{T'} \cap E_u$.
 Recall the construction of $\signature(C)$ and observe that 
 the set $R$ takes exactly one element from each $S_{T'}$, for $T' \in \T_{1,u}$ (since the subtrees in $\T_{1,u}$ have more than one adjacent vertices in $U$). 
 Since the circuit $C_u$ can take at most one element from $S_{T'}$ (\cref{obs:CuCv}),
 the elements in $\cup_{T' \in \T_{1,u}}S_{T'} \setminus R$ are not part of $C_u$ for any $C \in \C_\S$.
Hence, we can safely delete these elements from $M_u$.

Let $\overline{M}_u$ be the matroid obtained from $M_u$ by deleting the elements in
$\cup_{T' \in \T_{1,u}}S_{T'} \setminus R$.
Note that $\oM_u$ remains a graphic, cographic or the $R_{10}$ matroid (\cref{fac:closed}). 
Observe that for any circuit $C \in \C_\S$, the set $C_u$ is a circuit of $\overline{M}_u$ which contains all elements of $R$.
We plan to apply \cref{lem:graphic-set} on the matroid $\oM_u$.
We show the following about minimum weight of a circuit in $\overline{M}_u$.

\begin{claim}
\label{cla:no-circuit}
There is no circuit $D_u$ of $\overline{M}_u$ such that $D_u \cap R = \emptyset$ and $w(D_u) \leq r$.
\end{claim}
\begin{proof}
For the sake of contradiction, let there be such a circuit $D_u$.
We  show that this implies the existence of a circuit of $M$ of size at most $r$, which would be a contradiction.

For any subtree $T' \in \T_{0,u}$, if $D_u \cap S_{T'} = \emptyset$ then define $D_{T'} = \emptyset$
and if $D_u \cap S_{T'} = \{s\} $ then define $D_{T'} = C^s_{T'}$.
Consider the set 
$$ D =  D_u \triangle \left( \triangle_{T' \in \T_{0,u}} D_{T'} \right).$$
From the definition of the weight function $w(\cdot)$, it is clear that 
$\abs{D} = w(D_u) \leq r$.

We would like to argue that $D$ is, in fact, a disjoint union of circuits of $M$.
Let $T''$ be the subtree obtained by joining $u$ and all the subtrees in $\T_{0,u}$.
By construction, $D$ is a disjoint union of circuits of the matroid $M_{T''}$ (\cref{fac:disjointCircuits}).
We claim that $D$ has no elements from $E_{T''} \cap E_{T-T''}$, which implies that $D$ is a disjoint union of circuits of $M$ (\cref{fac:circuit-in-one}).
To see this claim, observe that the set of common elements $E_{T''} \cap E_{T-T''}$ 
is the set $\cup_{T' \in \T_{1,u}} S_{T'}$. 
The circuit $D$ does not have elements from this set since (i) the elements in $\cup_{T' \in \T_{1,u}} S_{T'} \setminus R$ have been deleted in $\oM_u$
and (ii) $D_u \cap R = \emptyset$ by the assumption in the claim.
\end{proof}

\noindent
Using \cref{cla:no-circuit}, we can apply \cref{lem:graphic-set} on the matroid $\oM_u$.
We get that the number of circuits in $C_u$  in $\oM_u$  with $R \subseteq C_u $ and weight at most $\alpha r$ 
is bounded by $(4\alpha +2\abs{R})^{\abs{R}} (2m)^{2\alpha}$.
Note that the size of $\abs{R}$ is bounded by the number of tuples $(u,e)$ in $\signature(C)$.
Recall the construction of $\signature(C)$ and observe that there can be at most $\abs{N^*(u)}$ such tuples, 
which is bounded by $\abs{U} \leq 4\alpha$.
Thus, $\abs{R} \leq 4 \alpha$.
Hence,  the bound we get on the cardinality of the set in the lemma is $(12\alpha)^{4\alpha} (2m)^{2\alpha}$.
\end{proof}

\noindent
Using \cref{cla:T1} and \cref{lem:numberCu} we can bound the number of circuits in $\C_\S$.
\begin{lemma}
\label{lem:circuits-S}
For any signature $\S$, the number of circuits in the set $\C_\S$ is at most $m^{O(\alpha^2)}$.
\end{lemma}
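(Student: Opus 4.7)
The plan is to combine the two preceding results, Claim \ref{cla:T1} and Lemma \ref{lem:numberCu}, by observing that each circuit $C \in \C_\S$ is completely determined by its projections onto the pieces of the partition of $T$ induced by removing $U$, and then multiplying the per-piece bounds.

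First I would note that $\T_0$ is the disjoint union $\bigsqcup_{u \in U} \T_{0,u}$: by definition of $\T_0$, every subtree $T' \in \T_0$ has a unique neighbor in $U$. Using the associativity and commutativity of symmetric difference, together with the fact (from Observation \ref{obs:CuCv}) that the $C_v$'s overlap only at single shared elements on edges of $T$ (which appear in \emph{both} adjacent projections and thus cancel), I would rewrite the decomposition $C = \triangle_{v \in V(T)} C_v$ as
$$
C \;=\; \Bigl(\triangle_{u \in U} D_u\Bigr) \;\triangle\; \Bigl(\triangle_{T' \in \T_1} C_{T'}\Bigr),
\qquad\text{where}\qquad
D_u \;:=\; C_u \;\triangle\; \Bigl(\triangle_{T' \in \T_{0,u}} C_{T'}\Bigr).
$$
In particular, $C$ is uniquely recoverable from the tuple $\bigl((D_u)_{u\in U},\;(C_{T'})_{T'\in \T_1}\bigr)$.

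Next I would bound the number of possibilities for each coordinate of this tuple. By Claim \ref{cla:T1}, for every $T' \in \T_1$ the projection $C_{T'}$ is the same across all $C \in \C_\S$, so the $\T_1$-coordinates contribute a factor of $1$. By Lemma \ref{lem:numberCu}, for each $u \in U$ the number of distinct values of $D_u$ attained by some $C \in \C_\S$ is at most $(12\alpha)^{4\alpha}(2m)^{2\alpha}$. Since $|U| \leq 4\alpha$ by the construction of $\signature(C)$, taking the product over $u \in U$ gives
$$
|\C_\S| \;\leq\; \bigl[(12\alpha)^{4\alpha}(2m)^{2\alpha}\bigr]^{4\alpha} \;=\; (12\alpha)^{16\alpha^2}(2m)^{8\alpha^2} \;=\; m^{O(\alpha^2)},
$$
treating $\alpha$ as the parameter (the $(12\alpha)^{16\alpha^2}$ factor is absorbed into $m^{O(\alpha^2)}$ for $m \geq 2$).

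The main thing to verify carefully is the factorization identity displayed above: one must check that every ground element of $C$ is accounted for on exactly the correct side of the equation. This reduces to the two cases of Observation \ref{obs:CuCv}: an element belonging to a unique $E_v$ appears in exactly one of the $C_v$'s (and thus in exactly one $D_u$ or one $C_{T'}$, with the right parity), while an element shared by two adjacent $E_u, E_v$ appears either in both or neither of $C_u, C_v$ and therefore cancels in the symmetric difference, matching its status as an internal (non-ground) element of the compound matroid. Once this identity is in hand, the bound is an immediate product estimate over the $O(\alpha)$ pieces.
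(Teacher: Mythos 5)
Your proposal is correct and takes essentially the same route as the paper: you write each $C \in \C_\S$ as $\bigl(\triangle_{T' \in \T_1} C_{T'}\bigr) \triangle \bigl(\triangle_{u \in U} D_u\bigr)$ with $D_u = C_u \triangle (\triangle_{T' \in \T_{0,u}} C_{T'})$, invoke Lemma~\ref{cla:T1} for the $\T_1$ part, Lemma~\ref{lem:numberCu} for each $D_u$, and multiply over $|U| \leq 4\alpha$ centers. The only addition is your explicit justification that the decomposition identity holds and that $\T_0 = \bigsqcup_{u \in U} \T_{0,u}$, which the paper states without comment; both hinge on \cref{obs:CuCv} as you note, and both are correct (each $T' \in \T_0$ is adjacent to exactly one vertex of $U$ since $T$ is connected and $U \neq \emptyset$ when $|C| > r$).
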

\begin{proof}
For any circuit $C \in \C_\S$, we write it as 
$$ \left( \triangle_{T' \in \T_1} C_{T'} \right) \triangle \left(  \triangle_{u \in U} \left( C_u \triangle \left( \triangle_{T' \in \T_{0,u}} C_{T'} \right) \right) \right).$$
From \cref{cla:T1}, the number of possibilities for the set $\triangle_{T' \in \T_1} C_{T'}$ is  $1$.
From \cref{lem:numberCu}, for each $u \in U$ the number of possibilities for the set $ C_u \triangle \left( \triangle_{T' \in \T_{0,u}} C_{T'} \right)$ is at most $(12\alpha)^{4\alpha} (2m)^{2\alpha}$.
The number of vertices in $U$ is at most $4 \alpha$.
Together this gives us a bound of $((12\alpha)^{4\alpha} (2m)^{2\alpha})^{4 \alpha}$ on the number of circuits in $\C_\S$.
\end{proof}

\noindent
By \cref{cla:num-signatures}, the number of different possible signatures for a circuit in $M$ of size at most $\alpha r$
is $m^{O(\alpha)}$.
Together with \cref{lem:circuits-S}, we get that the number of circuits in $M$ of size at most $\alpha r$ 
is $m^{O(\alpha^2)}$.
This finishes the proof of \cref{thm:num-circuits}.

\section{Max-flow min-cut matroids}
\label{sec:maxflow}

Max-flow min-cut (MFMC) matroids were defined by Seymour~\cite{Sey77}, which were inspired from the max-flow min-cut theorem on graphs. 
The MFMC theorem on graphs can be rephrased as follows: Let $G$ be an undirected graph. 
Let $s$ and $t$ be two special vertices
with an edge $\ell$ between them. Let every other edge $e$ have a positive integer capacity, say $c_e$. 
Let us say, we want to find a maximum size family $F$ of cycles (not necessarily distinct), such that
each cycle contains the edge $\ell$ and moreover, any edge $e$ is part of at most $c_e$ cycles in the family. 
Let us call  the maximum size of $F$ as the max-flow. 
For any cut-set containing the edge $\ell$, let us define its capacity as the sum of the capacities of the edges in it
except $\ell$. Let us call the minimum capacity of any cut-set containing $\ell$ as the min-cut.
It is easy to see that the max-flow cannot be larger than the min-cut. 
The MFMC theorem for graphs says that these two quantities, max-flow and min-cut are, in fact, always equal.  

We can write analogous definitions  for any matroid $M$:
we fix a special element $\ell$ of the ground set, assign a positive integer capacity $c_e$ to an element $e$ (except $\ell$) and then  use circuits instead of cycles
and cocircuits instead of cut-sets for defining max-flow and min-cut.
Recall that a cocircuit of a matroid is a circuit of the dual matroid. 
Let $\mathcal{C}_\ell$ be the set of circuits of $M$ that contain $\ell$
and  $\mathcal{C}^*_\ell$ be the set of cocircuits of $M$ that contain $\ell$.
Max-flow and min-cut are defined by the following two integer programs, respectively.
\begin{eqnarray*}
\text{max-flow} &=& \max\{1^T y : y \in \Z^{\C_\ell}, y\geq 0, \; \text{ and for all } e, \sum_{\substack{C \in \C_\ell  \\  C \text{ contains } e}} y_C \leq c_e \}. \\
\text{min-cut}  &=& \min_{C \in C^*_\ell} \sum_{e\in C \setminus \ell} c_e.
\end{eqnarray*}

\noindent
Recall that cocircuits of a graphic matroid are the minimal cut-sets of the graph.
Thus, in the case of graphic matroids, the above definitions of max-flow and min-cut coincide with the usual max-flow and min-cut in a graph.
Hence, the MFMC theorem holds for any graphic matroid.
A natural question arises that whether it holds for any other matroids besides graphic matroids.
It is not hard to verify that one side of the inequality is true for any matroid, that is, the max-flow is no greater than the min-cut.

For an element $\ell$ of a matroid $M$, we say the pair $(M,\ell)$ 
has the MFMC property if the max-flow and min-cut are equal for any given positive integer capacities $c_e$,
 while treating $\ell$ as the special element in the above sense.
We call a matroid max-flow min-cut (MFMC) matroid, if for every element $\ell$ of its ground set, the pair $(M,\ell)$ has the MFMC property.
There are simple examples of matroids which are not MFMC, e.g.,
$U^2_4$ -- rank-$2$ uniform matroid of $4$ elements and $F_7^*$ -- the dual of the $7$-element Fano matroid $F_7$ (see \cite{Sey77}).
Still, Seymour~\cite{Sey77} showed that MFMC matroids form a large class, which in particular, contains regular matroids.
The precise characterization of MFMC matroids given by Seymour~\cite{Sey77} is in terms of forbidden minors.
He showed that any MFMC matroid must be binary and should not have a $F_7^*$ minor. 
Regular matroids form a subclass of these matroids since they are binary and do not have  $F_7$ or $F_7^*$ as minors (see, for example, \cite[Theorem 13.1.2]{Oxl06}).
The following matrix represents the $F_7$ matroid over $GF(2)$.
\[
\begin{pmatrix}
 1 & 0 & 0 & 0 & 1 & 1 & 1\\
 0 & 1 & 0 & 1 & 0 & 1 & 1\\
 0 & 0 & 1 & 1 & 1 & 0 & 1
\end{pmatrix}
\]

\noindent
More importantly for us, in a later work, Seymour also gave a decomposition theorem~\cite[(7.6)]{Sey80} for MFMC matroid, similar to regular matroids (also see~\cite{Tru98}).

\begin{theorem}[\cite{Sey80}]
Every MFMC matroid can be obtained by means of $1$-sums and $2$-sums,
starting from matroids which are regular or $F_7$.
\end{theorem}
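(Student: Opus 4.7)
The statement to prove is the decomposition theorem for max-flow min-cut matroids: every MFMC matroid decomposes via $1$-sums and $2$-sums into pieces each of which is either regular or isomorphic to $F_7$. Since the excerpt already recalled Seymour's characterization of MFMC matroids as the binary matroids with no $F_7^*$-minor, and the analogous decomposition theorem for regular matroids (\cref{thm:Seymour}), the natural plan is to mimic the structure of the proof of \cref{thm:Seymour} but with only two levels of connectivity to handle instead of three.

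The plan is as follows. Let $M$ be an MFMC matroid, so by the forbidden-minor characterization $M$ is binary and has no $F_7^*$-minor. I would argue by induction on $|E(M)|$. First, if $M$ is not $3$-connected, then $M$ admits a $1$-separation or $2$-separation. In the binary setting, a $1$-separation yields a direct sum decomposition $M = M_1 \oplus_1 M_2$ and a $2$-separation yields $M = M_1 \oplus_2 M_2$ along a single common element (this is a standard consequence of binary representability and the fact that $2$-separations correspond to $2$-sum decompositions; it is analogous to the reduction used in the regular case). The key point is that both $M_1$ and $M_2$ are minors of $M$ (after suitable contractions/deletions), hence both are binary and neither contains $F_7^*$ as a minor; thus both are MFMC, and we may apply induction.

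Second, if $M$ is $3$-connected, I would split into two subcases according to whether $M$ contains $F_7$ as a minor. If $M$ has no $F_7$-minor, then $M$ is binary with neither $F_7$ nor $F_7^*$ as a minor, which is precisely the excluded-minor characterization of regular matroids; so $M$ is regular and we are done (it is its own building block). If instead $M$ does contain $F_7$ as a minor, I would invoke Seymour's Splitter Theorem: $F_7$ is a $3$-connected binary matroid, and since it is its own dual only up to a fixed automorphism and has no proper $3$-connected binary single-element extension or coextension that avoids $F_7^*$ as a minor, $F_7$ is a splitter for the class of binary matroids with no $F_7^*$-minor. The splitter theorem then forces $M \cong F_7$. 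Combining the two subcases, every $3$-connected MFMC matroid is either regular or $F_7$, which together with the inductive step finishes the proof.

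The main obstacle is the splitter argument, specifically verifying that $F_7$ is a splitter for the class of MFMC matroids. This requires checking that no $3$-connected single-element extension or coextension of $F_7$ in the binary matroids avoids an $F_7^*$-minor. The extensions/coextensions of $F_7$ are well understood (the $8$-element binary $3$-connected matroids containing $F_7$ are a very short list, including $AG(3,2)$ and $S_8$), and one verifies case by case that each contains $F_7^*$. This case analysis, together with correctly setting up the $2$-sum (defining the common element and showing the pieces are MFMC), is the technically delicate part; the rest of the argument is a routine induction tracking connectivity, parallel to the proof of \cref{thm:Seymour} but simpler because no $3$-sum step is needed once $F_7$ is allowed as a building block.
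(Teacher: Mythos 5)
The paper does not actually prove this theorem; it quotes it directly from Seymour \cite{Sey80} (statement (7.6) there, also discussed in \cite{Tru98}) and uses it as a black box, so there is no in-paper argument for you to match. Your reconstruction does, however, roughly follow the shape of the standard argument: peel off $1$- and $2$-separations (noting that each side of a $2$-sum decomposition is isomorphic to a minor of $M$, hence is still binary with no $F_7^*$-minor, so induction applies), and in the $3$-connected case combine Tutte's excluded-minor characterization of regular matroids (binary with no $F_7$- or $F_7^*$-minor) with a splitter-theorem step to force any $3$-connected MFMC matroid to be regular or isomorphic to $F_7$.

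Two corrections are in order. The parenthetical ``it is its own dual only up to a fixed automorphism'' is simply false: $F_7$ has rank $3$ on $7$ elements while $F_7^*$ has rank $4$, so $F_7$ is not self-dual; luckily nothing in the argument actually depends on this. More substantively, the entire $3$-connected case rests on the claim that $F_7$ is a splitter for the class of binary matroids with no $F_7^*$-minor, i.e.\ that every $3$-connected binary single-element extension or coextension of $F_7$ contains an $F_7^*$-minor. The extension half is immediate ($F_7 = PG(2,2)$ already uses every nonzero vector of $GF(2)^3$ as a column, so no simple $3$-connected single-element extension exists); the coextension half requires enumerating the few $3$-connected $8$-element binary matroids that contract to $F_7$ (such as $AG(3,2)$ and $S_8$) and checking that each has an $F_7^*$-minor. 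You correctly flag this finite verification as the delicate step, but you do not carry it out; as written, your sketch reduces the theorem to that remaining case analysis rather than completing it.
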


\noindent
Since  a regular matroid itself can be decomposed into graphic, cographic and $R_{10}$ matroids (\cref{thm:Seymour}), 
one can get a more refined decomposition for MFMC matroids.

\begin{corollary}
\label{cor:MFMCdecomp}
Every MFMC matroid can be obtained by means of $1$-sums, $2$-sums and $3$-sums,
starting from matroids which are graphic, cographic, $R_{10}$ or $F_7$.
\end{corollary}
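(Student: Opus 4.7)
The plan is a straightforward two-level composition of the two decomposition theorems already cited just before the corollary. Given an MFMC matroid $M$, first apply Seymour's decomposition theorem for MFMC matroids~\cite{Sey80} to obtain a decomposition of $M$ via $1$-sums and $2$-sums whose leaves (i.e., building blocks) are either regular matroids or copies of $F_7$. View this decomposition as a rooted binary tree $\BT_{\text{top}}(M)$ where each internal node is a $1$-sum or a $2$-sum of its two children, every leaf is regular or $F_7$, and the root is $M$.

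Next, for each leaf of $\BT_{\text{top}}(M)$ that is a regular matroid $N$, invoke Seymour's Theorem for regular matroids (\cref{thm:Seymour}, or equivalently \cref{thm:binary}) to obtain a decomposition tree $\BT(N)$ of $N$ whose internal nodes are $1$-, $2$-, or $3$-sums and whose leaves are graphic, cographic, or $R_{10}$. Leaves of $\BT_{\text{top}}(M)$ that are $F_7$ are left untouched. Splicing each $\BT(N)$ into the corresponding leaf of $\BT_{\text{top}}(M)$ produces a single rooted tree whose internal nodes are $1$-, $2$-, or $3$-sums and whose leaves lie in $\{\text{graphic},\text{cographic},R_{10},F_7\}$, with root $M$. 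This is precisely the decomposition claimed by the corollary.

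The reason this splicing is legal is that the notion of $k$-sum in both theorems is the same (the binary-matroid $k$-sum of Definition~\ref{def:sum}), so no conversion between operations is required; we simply concatenate the two constructions. There is no real obstacle in this proof: it is a routine corollary, and no new structural argument or induction is required beyond quoting the two decomposition theorems. The only minor bookkeeping is to observe that the MFMC decomposition uses no $3$-sums, so the full set of operations appearing in the composed tree is exactly $\{1\text{-sum},2\text{-sum},3\text{-sum}\}$, matching the statement.
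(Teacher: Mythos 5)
Your proposal is correct and takes essentially the same approach as the paper, which justifies the corollary in a single sentence by noting that the regular building blocks in Seymour's MFMC decomposition can themselves be decomposed via \cref{thm:Seymour} into graphic, cographic, and $R_{10}$ matroids. Your write-up simply makes the splicing of the two decomposition trees explicit.
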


\noindent
Note that the only difference between the decomposition of MFMC matroids and that of regular matroids 
is the presence of $F_7$ matroid as one of the building blocks. 
Like the $R_{10}$ matroid, the $F_7$ matroid also has a constant number of elements.
Thus, our whole argument about bounding the number of near-minimum circuits in a regular matroid
can be applied as it is to MFMC matroids, which would give us the following.

\begin{theorem}
\label{thm:MFMC-num-circuits}
Let $M$ be a MFMC matroid with ground set size $m$.
Suppose that $M$ has no circuits of size at most $r$. 
Then the number of circuits in $M$ of size at most $\alpha r$
is bounded by $m^{O(\alpha^2)}$.
\end{theorem}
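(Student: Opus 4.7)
The plan is to follow the template of the proof of \cref{thm:num-circuits} step by step, substituting \cref{cor:MFMCdecomp} for Seymour's Theorem and treating the $F_7$ building block on equal footing with $R_{10}$. The starting observation is that every structural ingredient used in Section~\ref{sec:decompositionTree} -- the associativity of $k$-sums (\cref{lem:associative}), the good-decomposition lemma (\cref{lem:goodDecomposition}), and the unordered decomposition tree construction (\cref{lem:uDecomposition}) -- is proved using only the fact that the component matroids are binary and that \cref{prop:sum-circuits} holds. Since MFMC matroids are binary, each of these carries over verbatim, so I would first state an analog of \cref{lem:uDecomposition} giving, for any MFMC matroid, an unordered decomposition tree whose vertices are graphic, cographic, $R_{10}$, or $F_7$ matroids.

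Next I would run the balanced division of the UDT (\cref{cla:centers}), the signature-based classification of circuits (\cref{cla:num-signatures}), the uniqueness of small projections (\cref{cla:unique}, \cref{cla:T1-signature}, \cref{cla:T1}), and the absorption/weighting scheme of \cref{lem:numberCu} without alteration. The nature of the leaf matroid $M_u$ enters only at the last step of \cref{lem:numberCu}, where we appeal to \cref{lem:graphic-set} on a weighted leaf with a forbidden set $R$ of elements. When $M_u$ is graphic or cographic (or has become graphic as a result of Dinitz--Kortsarz modifications applied to an $R_{10}$ leaf, via \cref{fac:R10}), \cref{lem:graphic-set} applies directly. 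For an $F_7$ leaf, I would instead use that the underlying ``core'' has only $7$ elements and hence only a constant number of circuit shapes, so that even after possibly adding parallel elements, every circuit of weight at most $\alpha r$ corresponds to one of a constant number of $F_7$-circuits with an assignment of parallel copies, giving a bound of the form $m^{O(1)}$ on the number of relevant leaf circuits, which is absorbed by the $(12\alpha)^{4\alpha}(2m)^{2\alpha}$ bound of \cref{lem:numberCu}. Combining the signature bound of \cref{cla:num-signatures} with the leaf bound through \cref{lem:circuits-S} then yields the final bound of $m^{O(\alpha^2)}$.

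The main obstacle I anticipate is verifying that the Dinitz--Kortsarz procedure behaves well on $F_7$ leaves. Unlike $R_{10}$, the matroid $F_7$ \emph{does} have $3$-element circuits, so the procedure may actually add elements parallel to existing ones (not just delete), and the resulting leaf matroid is no longer one of the named building blocks. The resolution is that for the purposes of Section~\ref{sec:num-circuits}, all we need from a leaf is a polynomial bound on the number of weighted circuits avoiding a constant-size set $R$ of elements, and this is an easy consequence of the fact that the $F_7$-core has constant size: every circuit is obtained from one of a constant number of $F_7$-circuits by choosing, for each $F_7$-element in its support, which parallel copy is used, and the forbidden set $R$ as well as the weight constraint only restrict these choices further. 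Once this observation is in place, the argument of Section~\ref{sec:num-circuits} transfers essentially word-for-word and yields \cref{thm:MFMC-num-circuits}.
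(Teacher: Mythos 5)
Your proposal is correct and takes the same route as the paper, which treats this step very briefly by asserting that the regular-matroid argument applies verbatim once the constant-size matroid $F_7$ is added to the list of allowed leaves; your fallback bound for an $F_7$ leaf possibly augmented with parallel elements is valid and harmlessly plugs into \cref{lem:circuits-S} in place of \cref{lem:graphic-set}, yielding the same $m^{O(\alpha^2)}$. One small observation: since Seymour's MFMC decomposition attaches $F_7$ only through $1$- and $2$-sums (whose sum-sets, having at most one element, are automatically contained in single leaves), the Dinitz--Kortsarz rebalancing is triggered only by $3$-sums, which occur entirely inside the regular subtrees, so in fact an $F_7$ leaf is never modified and the obstacle you anticipated does not arise.
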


\section*{Acknowledgements}
We thank Rohan Kapadia, Thomas Thierauf  for helpful discussions,
and Ben Lee Volk for pointing out that random binary linear codes 
have exponentially many near-minimum weight codewords. 

\newpage
\bibliographystyle{plain}
\bibliography{tum}

\newpage

\appendix

\section{Bounding the number of circuits in a graphic or cographic matroid}
\label{sec:graphic}
In this section, we prove the statement \cref{thm:num-circuits} for graphic and cographic matroids, 
but with a bit stronger version. 
Here, we are interested in circuits that contain a given subset $R$ of the ground set.
Also, we have weights assigned to the elements and we consider weight of a circuit defined as $w(C) = \sum_{e \in C} w(e)$.
We need this stronger version for graphic and cographic case to prove 
\cref{thm:num-circuits}.

\begin{lemma*}[\cref{lem:graphic-set}]
Let $M=(E,\cI)$ be a graphic or cographic matroid with ground set size $m \geq 2$. 
Let $R \subseteq E$ be any (possibly empty) set of elements of the ground set.
Let $w\colon E \to \N$ be a weight assignment on the ground set. 
Assume that there is no circuit $C$ in $M$ such that $C \cap R = \emptyset$ and  $w(C)\leq  r$.  
Then, for any integer $\alpha \geq 1$, the number of circuits $C$ such that $R \subseteq C$ and $w(C) \leq \alpha r$ 
is at most 
$(4\alpha +2\abs{R})^{\abs{R}} (2m)^{2\alpha}$.
\end{lemma*}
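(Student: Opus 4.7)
The approach is to reduce to the classical $R=\emptyset$ case by induction on $|R|$, paying a multiplicative cost of at most $4\alpha+2|R|$ per $R$-element added. The arguments for the graphic (cycle) and cographic (bond) cases run in parallel, so I describe them together.

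The base case $|R|=0$ is a weighted strengthening of \cref{thm:graphic}: if every circuit has weight greater than $r$, then the number of circuits of weight at most $\alpha r$ is bounded by $(2m)^{2\alpha}$. I will obtain this by running the proofs of Karger (for cuts) and Subramanian (for cycles) with edges sampled in proportion to their weights rather than uniformly. The survival-probability calculation carries through essentially verbatim: each specific $\alpha$-minimum-weight circuit is preserved with probability at least $(2m)^{-2\alpha}$, since the only input used is the lower bound on the ratio of the total weight to the circuit weight provided by the hypothesis, and this bound depends on the ground-set size $m$ (not on the total weight).

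For the inductive step, fix $e \in R$ and set $R' = R \setminus \{e\}$. Every circuit $C$ of $M$ containing $R$ contains $e$, so it corresponds via matroid contraction to a circuit $C'$ of $M/e$ (which is again graphic or cographic, with ground-set size at most $m$) containing $R'$ with $w(C') \leq w(C) \leq \alpha r$. The inductive hypothesis, applied to $(M/e, R', \alpha)$, bounds the number of such $C'$ by $(4\alpha + 2|R'|)^{|R'|} (2m)^{2\alpha}$. The remaining combinatorial step is to show that each $C'$ can be lifted to at most $4\alpha + 2|R|$ distinct circuits $C$ of $M$ that contain $R$ and have $w(C) \leq \alpha r$. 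Intuitively, this counts the ways to splice the single element $e$ back into an $\alpha$-minimum circuit: roughly $4\alpha$ insertion positions along $C'$ and $2|R|$ attachment configurations interacting with the previously-placed $R'$-elements. Multiplying preserves the inductive form, since $(4\alpha + 2|R'|)^{|R'|} (4\alpha + 2|R|) \leq (4\alpha + 2|R|)^{|R|}$.

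\textbf{Main obstacle.} The hypothesis ``no circuit disjoint from $R$ has weight at most $r$'' does not descend cleanly under contraction: a circuit of $M/e$ disjoint from $R'$ may lift to a circuit of $M$ that uses $e$ (hence is not disjoint from $R$), so its weight is not directly controlled by the hypothesis on $M$. The inductive step therefore requires combining the lift count with a residual weight argument, using both the $\alpha$-minimality of $C$ and the fact that the element $e$ being peeled off has a fixed identity. Verifying the precise bound $4\alpha + 2|R|$ on the number of lifts per contracted circuit is the main technical burden, and requires a careful case analysis: in the graphic case, enumerating the positions at which $e = \{u,v\}$ can be inserted into a cycle (organised by the places where the cycle visits $u$ and $v$); in the cographic case, enumerating the ways in which the two sides of a bond can reattach when $e$ is reintroduced. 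This is precisely the step at which the argument of \cite{GTV18} for $|R|=1$ must be genuinely extended to arbitrary $R$.
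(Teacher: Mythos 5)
Your proposal takes a genuinely different route from the paper — induction on $\abs{R}$ via matroid contraction — and it has a gap that you have correctly identified but not resolved. The hypothesis ``no circuit disjoint from $R$ has weight at most $r$'' does not survive contracting an element $e \in R$. Concretely, in the graphic case, a short cycle $C'$ in $G/e$ disjoint from $R'$ may correspond in $G$ to the cycle $C' \cup \{e\}$, which contains $e \in R$ and is therefore unconstrained by the original hypothesis. So the inductive call to $(M/e, R')$ is simply not available, and there is no induction to start from. Your ``residual weight argument'' is exactly the missing content; as written the proposal does not supply it.

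Moreover, the specific fix you propose — bounding the number of lifts of each contracted circuit by $4\alpha + 2\abs{R}$ — misidentifies where the multiplicative cost should come from. When $e = (u,v)$ is contracted, a simple cycle $C'$ through the merged vertex has exactly two incident edges, and depending on whether they attach to $u$ or $v$ the lift is either $C'$ itself or $C' \cup \{e\}$; the lift is essentially unique, not a source of a polynomial factor. So even if the hypothesis did descend, the accounting scheme would not yield the stated bound. You also claim the base case for cycles follows from ``Subramanian's proof with edges sampled in proportion to their weights,'' but the paper's argument in the graphic case is deterministic — it marks $2\alpha + \abs{R}$ special edges along the cycle (the $R$-edges plus edges where the cumulative weight crosses $r/2$), shows that two distinct valid cycles must differ on the resulting tuple of marked oriented edges, and then counts tuples — and that is where the $(4\alpha + 2\abs{R})^{\abs{R}}(2m)^{2\alpha}$ factor comes from. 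Only the cographic case uses weighted random sampling.

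The paper handles $R$ without induction by baking it into the argument directly: in the graphic case, the $R$-edges are forced to appear among the marked edges of the tuple, so any bad cycle arising from two tuples agreeing would be disjoint from $R$ and hence contradict the hypothesis; in the cographic case, Karger's contraction is run on $E \setminus R$ (and parallel copies thereof), and the lower bound $w(E_i \setminus R_i) > r(n-i-\abs{R})/2$ comes from looking at vertex stars in $G_i$ with $R$ contracted, which are cut-sets avoiding $R$. If you want to salvage an inductive approach, you would at minimum need to strengthen the inductive hypothesis so that it remains true after contraction (for instance by carrying along the identity of the contracted elements and a weaker weight condition), but that reconstruction is precisely the work the direct argument sidesteps.
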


\begin{proof}
\textbf{When $M$ is a graphic matroid}.
Let $G=(V,E)$ be the graph corresponding to the graphic matroid $M$.
In this setting, the assumption of the lemma means that for any cycle $C$ in the graph $G$ such that $C \cap R = \emptyset$,  $w(C) >r$.
Consider any cycle $C$ in $G$ with $R \subseteq C$ and $ w(C) \leq \alpha r/2$.
Let the edge sequence of the cycle $C$ be $(e_1,e_2,e_3, \ldots, e_{q})$ such that 
if $R$ is nonempty then $e_1 \in R$.
We choose a subset of edges in the cycle $C$ as follows:
 Let ${i_1} = 1$ and  for $j =2,3, \dots k= 2\alpha +\abs{R}$, 
define $i_j$ to be the minimum index (if one exists) greater than $i_{j-1}$ such that 
either $e_{i_j} \in R$ or 
\begin{equation}
\sum_{a=i_{j-1}+1}^{i_j} w(e_{a}) > r/2.
\label{eq:ijchoice}
\end{equation}
If such an index does not exists then define $i_j=q$.
The edges $e_{i_1}, e_{i_2}, \dots, e_{i_k}$ divide the cycle $C$ into
at most $k$ segments, which are defined as follows: for $j= 1,2,\dots, k-1$
$$p_j := (e_{i_{j}+1}, e_{i_{j}+2}, \dots, e_{i_{j+1}-1}),$$
and 
$$p_k := (e_{i_{k}+1}, e_{i_{k}+2}, \dots, e_q).$$
Note that some of these paths $p_j$ can be empty (for example, when $i_{j+1} = i_{j} +1$).
By the choice of $i_j$ we know that $w(p_j) \leq r/2$ for $j=1,2,\dots,k-1$.
We make  a similar claim for the last segment $p_k$.
\begin{claim}
$w(p_k) < r/2$.
\end{claim}
\begin{proof}
If $i_k = q$ then $p_k$ is empty and $w(p_k) =0$.
So, we assume that $i_k <q $ and hence, $i_j < q$ for each $1\leq j \leq k$.
Hence, by the choice of $i_j$,  we know that  either $e_{i_j} \in R$ or $\sum_{a=i_{j-1}+1}^{i_j} w(e_{a}) > r/2$ for each $2 \leq j \leq k$.
Since $k = 2\alpha + \abs{R}$, the second possibility must happen at least $2 \alpha -1$ times. 
That is, the set $\{  j \mid 2 \leq j \leq k, \;  \sum_{a=i_{j-1}+1}^{i_j} w(e_{a}) > r/2 \}$ has cardinality at least $2\alpha -1$.
Thus, we can write 
$$\sum_{a=2}^{i_k} w(e_{a}) > (2\alpha -1) r/2= \alpha r -r/2.$$
This together with the fact that $w(C) \leq \alpha r$, gives us $w(p_k) < r/2$.
\end{proof}
 
 \noindent
We associate the ordered tuple of oriented edges $t_C = (e_{i_1},e_{i_2},\dots,e_{i_k})$ with the cycle $C$.
Note that depending on the starting edge, there can be many possible tuples associated to a cycle $C$. 
We fix an arbitrary such tuple to be $t_C$.

\begin{claim}
\label{cla:distinct-tuples}
Let $C$ and $C' $ be two distinct cycles  in $G$ such that both contain the set $R$ and $w(C),w(C')\leq \alpha r$.
Then $t_C \neq t_{C'}$.
\end{claim}
\begin{proof}
For the sake of contradiction, let us assume $t_C = t_{C'}$ and let it be  $(e_{i_1},e_{i_2},\dots, e_{i_k})$.
That is, $C$ and $C'$ pass through $(e_{i_1},e_{i_2},\dots, e_{i_k})$ in the same order and with the same orientation of 
these edges. 
Let $p_1, p_2, \dots, p_k$ be the path segments in $C$ connecting the edges $e_{i_1},e_{i_2}, \dots, e_{i_k}$.
And let $p'_1,p'_2, \dots, p'_k$ be these segments in $C'$.
Since $C$ and $C'$ are distinct, $p_j$ and $p'_j$ must be distinct for at least one $j \in \{1,2,\dots, k\}$.
Since the starting and end vertices of $p_j$ and $p'_j$ are same, $p_j \cup p'_j$ contains a cycle $C''$. 
By the construction of the tuple $t_C$, we have $w(p_j), w(p'_j) \leq r/2$.
This implies that $w(C'') \leq r$.

Finally, since each edge in $R$ is among the edges $e_{i_1},e_{i_2}, \dots, e_{i_k}$, the segments $p_j$ and $p'_j$
do not have any edge from $R$.
This means that $C'' \cap R =\emptyset$.
This contradicts the assumption that there is no cycle $C$ in $G$ such that $w(C) \leq r$ and $C \cap R = \emptyset$.
\end{proof}

\noindent
From \cref{cla:distinct-tuples}, it follows that the number of possible distinct tuples $t_C$ upper bounds the number of 
cycles $C$ in $G$ with $w(C) \leq \alpha r$ and $R \subseteq C$.
Hence, we bound the number of possible tuples.

\begin{claim}
\label{cla:number-tuples}
The number of possible distinct tuples $t_C$ is at most $(4\alpha +2\abs{R})^{\abs{R}} (2m)^{2\alpha}$.
\end{claim}
\begin{proof}
Recall that there are $2\alpha + \abs{R}$ edges in $t_C$ and it contains each edge of $R$.
There are  $  (2\alpha +\abs{R})! / (2\alpha)! $ ways of choosing $\abs{R}$ indices where the edges of $R$ would appear.
And for the rest of the indices there are at most $m^{2 \alpha}$ ways of choosing the edges in $G$ that would appear in these indices.
Finally, there are $2^{2\alpha + \abs{R}}$ ways of orienting the edges in $t_C$.
The product of these numbers gives us an upper bound on number of tuples $t_C$.
$$ (2\alpha +\abs{R})! / (2\alpha)! \cdot 2^{2\alpha + \abs{R}} \cdot m^{2 \alpha} \leq (4\alpha +2\abs{R})^{\abs{R}} (2m)^{2\alpha}.$$
\end{proof}

\noindent
From \cref{cla:distinct-tuples} and \ref{cla:number-tuples}, we get that the number of 
cycles $C$ in $G$ with $w(C) \leq \alpha r$ and $R \subseteq C$
is bounded by $(4\alpha +2\abs{R})^{\abs{R}} (2m)^{2\alpha}$.

\textbf{When $M$ is a cographic matroid}.
Let $G=(V,E)$ be the graph corresponding to the cographic matroid~$M$ and let $n=\abs{V}$.
Recall from \cref{fac:graphic-cographic-circuit} that circuits in cographic matroids are inclusion-wise minimal cut-sets in~$G$.
By the assumption of the lemma,
any cut-set~$C$ in~$G$ with $R \cap C = \emptyset$ has weight $w(C) > r$.
Note that this implies that~$G$ is connected, and therefore $m \geq n-1$.
We  want to give a bound on the number of cut-sets~$C \subseteq E$ 
such that $w(C) \leq \alpha r$ and $R \subseteq C$.

We argue similar to the  probabilistic construction of a minimum cut of Karger~\cite{Kar93}.
The basic idea is to contract randomly chosen edges.
\emph{Contraction of an edge} $e = (u,v)$ means 
that all edges between~$u$ and~$v$ are deleted and then~$u$ is identified with~$v$. 
Note that we get a multi-graph that way:
if there were two edges $(u,w)$ and $(v,w)$ before the contraction, 
they become two parallel edges after identifying~$u$ and~$v$.
The contracted graph is denoted by~$G/e$. 
The intuition behind Karger's contraction algorithm is
that when a randomly chosen edge is contracted,  a small weight cut-set survives with a good probability.

The following algorithm implements the idea.
It does~$k \leq n$ contractions in the first phase and then chooses a random cut
within the remaining vertices of the contracted graph in the second phase that contains the edges of~$R$.
Note that any cut-set of the contracted graph is also a cut-set of the original graph.
Here, $E(G)$ denotes the set of edges in graph $G$.

\newcommand{\assign}{\leftarrow}

\begin{tabbing}
xxx\=xxx\=xxx\=xxx\=xxx\=xxx\= \kill
{\sc Small Cut} $(G = (V,E), \; R \subseteq E, \; \alpha \in \N)$ \\[0.1cm]
\emph{Initialize}\\
1 \> $G_0 \assign G, \; E_0 \assign E, \; R_0 \assign R$. \\
\emph{Contraction}\\
2 \> {\bf For} $i = 1,2, \dots, k = n-2\alpha-\abs{R}$ \\
3 \> \> {\bf randomly choose} $e \in E_{i-1} \setminus R_{i-1}$ with probability $\frac{w(e)}{w(E_{i-1} \setminus R_{i-1})}$\\
4 \> \> $G_i \assign G_{i-1}/e$\\
5 \> \> $R_i \assign R_{i-1} \cup \{\text{new parallel edges to the edges in } R_{i-1}\}$\\[1ex]
\emph{Selection}\\
6 \> Among all possible cut-sets $C$ in the obtained graph $G_k$ with $R_k \subseteq C$, \\
 \> choose one uniformly at random and return it. 
\end{tabbing}
\noindent 
Let~$C \subseteq E$ be a cut-set with $w(C) \leq \alpha r$ and $R \subseteq C$.
We want to give a lower bound on the probability that {\sc Small Cut} outputs~$C$.

Note that~$G_i$ has $n_i = n-i$ vertices
since each contraction decreases the number of vertices by~$1$.
Since $R_i$ is the set of edges parallel to those in $R$, in the case that $R$ is empty, the set $R_i$ is also  empty. 
Note that if $R \subseteq C$ and no edge of $C$ has been contracted till iteration $i$,
then $R_i \subseteq C$.

Conditioned on the event that no edge in~$C$ has been contracted in iterations~1 to~$i$, 
the probability that an edge from~$C$ is contracted in the $(i+1)$-th iteration is  
$$\frac{ w(C\setminus R_i)}{w(E_i \setminus R_i)}.$$
We know that $ w(C\setminus R_i) \leq w(C)  \leq \alpha r$. 
For a lower bound on $w(E_i \setminus R_i)$,  
consider the graph~$G'_i$ obtained from~$G_i$ by contracting the edges in~$R_i$.
Note that contracting the edges in~$R_i$ actually involves at most $\abs{R}$ contractions.
Thus, the number of vertices $n'_i$ in~$G'_i$ is  at least $ n -i - \abs{R}$ and 
its set of edges is $E_i \setminus R_i$.
For any vertex~$v$ in~$G'_i$, consider the set~$\delta(v)$ of edges incident on~$v$ in~$G'_i$.
The set~$\delta(v)$ forms a cut-set in~$G'_i$ and also in~$G$.
Note that $\delta(v) \cap R = \emptyset$, as the edges in~$R$ have been contracted in~$G'_i$.
Thus, we can deduce that $w(\delta(v)) > r$, from the lemma hypothesis. 
By summing this up for all vertices in~$G'_i$, we obtain
$$w(E_i \setminus R_i) > r\, n'_i/2.$$
Hence,  
$$w(E_i \setminus R_i) > r\, (n-i-\abs{R})/2.$$
Therefore the probability that an edge from~$C$ is contracted in the $(i+1)$-th iteration is 
$$\leq~ \frac{w(C \setminus R_i)}{w(E_i \setminus R_i)} 
~<~ \frac{\alpha\, r}{r\, (n-i-\abs{R})/2} 
~=~ \frac{2\alpha}{n-i-\abs{R}}.$$
This bound becomes greater than~$1$ when $i > n-2\alpha-\abs{R}$. 
This is the reason why we stop the contraction process after
$k = n-2\alpha-\abs{R}$ iterations.

The probability that no edge from~$C$ is contracted in any of the rounds is
\begin{eqnarray*}
&\geq& \prod_{i=0}^{k-1} \left( 1-\frac{2\alpha}{n-i-\abs{R}} \right)\\
&=& \prod_{i=0}^{k-1} \left( 1-\frac{2\alpha}{k + 2\alpha-i} \right)\\
&=& \prod_{i=0}^{k-1} \frac{k-i}{k + 2\alpha-i}\\
&=& \frac{1}{{{k+2\alpha} \choose k}}
 \\
&=& \frac{1}{{{n-\abs{R}} \choose 2\alpha}}.
\end{eqnarray*}
After $n-2\alpha-\abs{R}$ contractions we are left with $2\alpha+\abs{R}$ vertices. 
The number of possible cut-sets on $2\alpha + \abs{R}$ vertices that contain $R$ is at most~$2^{ 2\alpha+\abs{R}-1}$.
 The selection phase chooses one of these cuts randomly. 
Thus, the probability that $C$ survives the \emph{contraction} process and is also chosen in the 
\emph{selection} phase is 
at least 
$$ \frac{1}{2^{2\alpha + \abs{R}-1} {{n-\abs{R}} \choose 2\alpha} } \geq \frac{1}{2^{\abs{R}} ({n-\abs{R}})^{2\alpha}}.$$
Note that in the end we get exactly one cut-set. 
Thus, the number of cut-sets $C$ of weight $\leq \alpha r/2$ and $R \subseteq C$
must be at most $2^{\abs{R}} (n-\abs{R})^{2\alpha}$, which is smaller than 
the bound desired in the lemma.
\end{proof}

\end{document}